\title{Path Contraction Faster than $2^n$}
\titlerunning{Path Contraction Faster than $2^n$}
\author{Akanksha Agrawal}{Indian Institute of Techonology Madras,India}{akanksha@cse.iitm.ac.in}{}{}
\author{Fedor V. Fomin}{University of Bergen, Bergen, Norway}{fomin@ii.uib.no}{}{}
\author{Daniel Lokshtanov}{University of California Santa Barbara, Santa Barbara, California}{daniello@ ucsb.edu}{}{}
\author{Saket Saurabh}{Institute of
Mathematical Sciences, HBNI, Chennai, India \and University of Bergen, Bergen, Norway \and UMI ReLax}{saket@imsc.res.in}{}{This work is supported by the European Research Council (ERC) via grant LOPPRE, reference no. 819416.}
\author{Prafullkumar Tale}{Indian Institute of Science
Education and Research, Pune, India}{prafullkumar@iiserpune.ac.in}{}{}
\authorrunning{A. Agrawal, F. Fomin, D. Lokshtanov, S. Saurabh, and P. Tale}
\keywords{Path Contraction, Exact Exponential Time Algorithms, Graph Algorithms, Enumerating Connected Sets, $3$-Disjoint Connected Subgraphs}
\newcommand{\defproblemout}[3]{
  \vspace{1mm}
\noindent\fbox{
  \begin{minipage}{0.96\textwidth}
  \begin{tabular*}{\textwidth}{@{\extracolsep{\fill}}lr} #1 \\ \end{tabular*}
  {\bf{Input:}} #2  \\
  {\bf{Output:}} #3
  \end{minipage}
  }
  \vspace{1mm}
}
\newcommand{\defproblem}[3]{
  \vspace{1mm}
\noindent\fbox{
  \begin{minipage}{0.96\textwidth}
  \begin{tabular*}{\textwidth}{@{\extracolsep{\fill}}lr} #1 \\ \end{tabular*}
  {\bf{Input:}} #2  \\
  {\bf{Question:}} #3
  \end{minipage}
  }
  \vspace{1mm}
}
\newtheorem{observation}{Observation}[section]
\newtheorem{lemma1}{Lemma}[section]
\newcommand{\es}{\texttt{ES}}
\newcommand{\os}{\texttt{OS}}
\newcommand{\yes}{yes}
\newcommand{\calA}{\mathcal{A}}
\newcommand{\calO}{\ensuremath{{\mathcal O}}}
\newcommand{\OO}{\mathcal{O}}
\newcommand{\calS}{\mathcal{S}}
\newcommand{\calW}{\mathcal{W}}
\let\oldnl\nl
\newcommand{\nonl}{\renewcommand{\nl}{\let\nl\oldnl}}
\newcommand{\mcal}[1]{\mathcal{#1}} 
\newcommand{\TDCS}{{\sc $3$-DCS}}
\newcommand{\STDCS}{{\sc Small $3$-DCS}}
\newcommand{\PFC}{\textsc{$P_5$-Contraction}}
\newcommand{\what}{\widehat}
\newcommand{\SOEPC}{{\sc Small Odd/Even PC}} 
\newcommand{\EPPC}{{\sc Enum-Partial-PC}} 
\newcommand{\BPC}{{\sc Balanced PC}} 
\newcommand{\TDCPC}{{\sc $2$-Union Heavy PC}} 
\newcommand{\ThDCPC}{{\sc Near Small Odd/Even PC}}
\newcommand{\NSOEPC}{{\sc Near Small Odd/Even PC}}
\newcommand{\runtime}{1.99987}
\newcommand{\Hard}{{hard}}
\begin{document}

\maketitle

\begin{abstract}
A graph $G$ is contractible to a graph $H$ if there is a set $X \subseteq E(G)$, such that $G/X$ is isomorphic to $H$. Here, $G/X$ is the graph obtained from $G$ by contracting all the edges in $X$. For a family of graphs $\cal F$, the $\mathcal{F}$-\textsc{Contraction} problem takes as input a graph $G$ on $n$ vertices, and the objective is to output the largest integer $t$, such that $G$ is contractible to a graph $H \in {\cal F}$, where $|V(H)|=t$. When $\cal F$ is the family of paths, then the corresponding $\mathcal{F}$-\textsc{Contraction} problem is called \textsc{Path Contraction}. The problem \textsc{Path Contraction} admits a simple algorithm running in time $2^{n}\cdot n^{\OO(1)}$. In spite of the deceptive simplicity of the problem, beating the $2^{n}\cdot n^{\OO(1)}$ bound for \textsc{Path Contraction} seems quite challenging. In this paper, we design an exact exponential time algorithm for \textsc{Path Contraction} that runs in time ${\runtime^n}\cdot n^{\calO(1)}$. We also define a problem called \textsc{$3$-Disjoint Connected Subgraphs}, and design an algorithm for it that runs in time $1.88^n\cdot n^{\calO(1)}$. The above algorithm is used as a sub-routine in our algorithm for {\sc Path Contraction}.  
\end{abstract}

\section{Introduction}
\label{sec:intro}
Graph editing problems are one of the central problems in graph theory that have received a lot of attention in algorithm design. Some of the natural graph editing operations are vertex/edge deletion, edge addition, and edge contraction. For a family of graphs $\cal F$, the {\sc $\cal F$-Editing} problem takes as input a graph $G$, and the objective is to find the minimum number of operations required to transform $G$ into a graph from $\cal F$. In fact, the {\sc $\cal F$-Editing} problem, where the edit operations are restricted to one of vertex deletion, edge deletion, edge addition, or edge contraction have also received
a lot of attention in algorithm design. The {\sc $\cal F$-Editing} problems encompass several classical \NP-\Hard\ problems like \textsc{Vertex Cover}, \textsc{Feedback Vertex Set}, \textsc{Longest Path}, etc. 

The {\sc $\cal F$-Editing} problem where the only allowed edit operation is edge contraction, is called {\sc$\cal  F$-Contraction}. For a graph $G$ and an edge $e=uv\in E(G)$, {\em contraction} of an edge $uv$ in $G$ results in a graph $G/e$, which is obtained by deleting $u$ and $v$ from $G$, adding a new vertex $w_e$ and making $w_e$ adjacent to the neighbors of $u$ or $v$ (other than $u,v$). A graph $G$ is \emph{contractible} to a graph $H$, if there exists a subset $X\subseteq E(G)$, such that if we contract each edge from $X$, then the resulting graph is isomorphic to $H$. For several families of
graphs $\cal F$, early papers by Watanabe et al.~\cite{watanabe81,watanabe1983np} and Asano and Hirata~\cite{asano1983edge} showed that {\sc $\cal F$-Contraction} is \NP-\Hard. The \NP-\Hard ness of problems like {\sc Tree Contraction} and {\sc Path Contraction}, which are the {\sc $\cal F$-Contraction} problems for the family of trees and paths, respectively, follows easily from~\cite{asano1983edge,brouwer1987contractibility}. A restricted version of \textsc{Path Contraction}, is the problem \textsc{$P_t$-Contraction}, where $t$ is a fixed constant. \textsc{$P_t$-Contraction} is shown to be \NP-\Hard\ even for $t=4$, while for $t\leq 3$, the problem is polynomial time solvable~\cite{brouwer1987contractibility}. \textsc{$P_t$-Contraction} alone had received lot of attention for smaller values of $t$, even when the input graph is from a very structured family of graphs (for instance, see ~\cite{brouwer1987contractibility,van2009partitioning,HeggernesHLP14,dabrowski2017contracting,fiala2013note,kern2018contracting}, and the references therein). 

Several \NP-\Hard\ problem like {\sc SAT}, {\sc $k$-SAT}, {\sc Vertex Cover}, {\sc Hamiltonian Path}, etc. are known to admit an algorithm running in time $\calO^\star(2^n)$\footnote{The $\calO^{\star}$ notation hides polynomial factors in the running time expression.}. These results are obtained by techniques like brute force search, dynamic programming over subsets, etc. One of the main questions that arise in this context is: can we break the $\calO^\star(2^n)$ barrier for these problems. In fact, the hardness of {\sc SAT} gives rise to the Strong Exponential Time Hypothesis (SETH) of Impagliazzo and Paturi~\cite{impagliazzo2001problems,DBLP:journals/jcss/ImpagliazzoP01}, which rules out existence of $\calO^\star((2-\epsilon)^n)$-time algorithm for {\sc SAT}, for any $\epsilon>0$. SETH has been used to obtain such algorithmic lower bounds for many other \NP-\Hard\ problems (see for example,~\cite{DBLP:journals/talg/CyganDLMNOPSW16,DBLP:journals/talg/LokshtanovMS18}). Not all \NP-\Hard\ problems seem to be as ``hard'' as {\sc SAT}. For many \NP-\Hard\ problems, it is possible to break the $\calO^\star(2^n)$ barrier. For instance problems like {\sc Vertex Cover} and (undirected) {\sc Hamiltonian Path} are known to admit algorithms running in time $\calO^\star((2-\epsilon)^n)$, for some $\epsilon>0$~\cite{DBLP:journals/siamcomp/Bjorklund14,DBLP:journals/siamcomp/TarjanT77}. Thus, one of the natural question is for which \NP-\Hard\ problems we can avoid the ``brute force search'', and obtain algorithms that are better than $\calO^\star(2^n)$. 

In this article, we focus on the problem {\sc Path Contraction}, which is formally defined below. 
 
 \defproblemout{\textsc{Path Contraction}}{Graph $G$.}{Largest integer $t$, such that $G$ is contractible to $P_{t}$.}
\vspace{0.2cm}

Note that, if $t$ is the largest integer such that $G$ is contractible to $P_t$ then the minimum number of edge contraction operations is $(n - t)$.
{\sc Path Contraction} is known to admit a simple algorithm that runs in time $\calO^\star(2^n)$. Such an algorithm can be obtained by coloring the input graph with two colors and contracting connected components in the colored subgraphs. For a deceptively simple problem like {\sc Path Contraction}, it seems quite challenging to break the $\calO^\star(2^n)$ barrier. The problem {\sc $2$-Disjoint Connected Subgraphs ($2$-DCS)}\footnote{See Section~\ref{sec:prob-dis-conn-3} for formal definitions.}, can be ``roughly'' interpreted as solving {\sc $P_4$-Contraction}. (We can use the algorithm for {\sc $2$-DCS} to solve {\sc $P_4$-Contraction} within the same time bound.) There have been studies, which break the $\calO^\star(2^n)$ brute force barrier, for {\sc $2$-DCS}. In particular, Cygan et al.~\cite{cygan2014solving} designed a $\calO^{\star}(1.933^n)$ algorithm for {\sc $2$-DCS}. This result was improved by Telle and Villanger, who designed an algorithm running in time $\calO^{\star}(1.7804^n)$, for the problem~\cite{telle2013connecting}. The main goal of this article is to break the $\calO^\star(2^n)$ barrier for {\sc Path Contraction}. Obtaining such an algorithm for {\sc Path Contraction} was stated as an open problem in~\cite{van2009partitioning}. 



\subparagraph{Our Results} We design an algorithm for \textsc{Path Contraction} running in time $\calO^{\star}(\runtime^n)$, where $n$ is the number of vertices in the input graph. To the best of our knowledge, this is the first non-trivial algorithm for the problem, which breaks the $\calO^\star(2^n)$ barrier. To obtain our main algorithm for \textsc{Path Contraction}, we design four different algorithms for the problem, which are used as subroutines to the main algorithm. We exploit the property that certain types of algorithms are better for certain instance, but may be inefficient for certain other instances. Roughly speaking, we look for solutions using different algorithms, and then the best suited algorithm for the instance is used to return the solution. When one of the four algorithms is called as a subroutine, it does not necessarily return an optimum solution for the instance, rather it only looks for solutions that satisfy certain conditions. These conditions are quantified by fractions associated with the input graph. We note that for appropriate values of these ``fractions'', each of our subroutine still serve as an algorithm for {\sc Path Contraction} (and thus, can compute the optimal solution). We argue that there is always a solution which satisfies the conditions for one of the subroutines, by setting the values of the fractions appropriately. A saving over $\calO^\star(2^n)$, in the running time achieved by our algorithm, also exploits the property that ``small'' connected sets with bounded neighborhood can be enumerated ``efficiently''.  

In the following we very briefly explain the type of solutions we look for, in our subroutines. Consider a path $P_t$, such that $G$ can be contracted to $P_t$, where $t$ is the largest such integer. The solution $t$, can be ``witnessed'' by a partition $\calW=\{W_1,W_2,\cdots,W_t\}$ of $V(G)$, where the vertices from $W_i$ ``merge'' to the $i$th vertex of $P_t$ (a formal definition for it can be found in Section~\ref{sec:prelimsxyz}). Such a ``witness'' is called a $P_t$-\emph{witness structure}. The first (subroutine) algorithm for {\sc Path Contraction} searches for a solution where the $P_t$-witness structure can be ``split'' into two connected disjoint parts which are ``small''. Then, it exploits the ``smallness'' of the parts to compute solutions efficiently, and combines them to compute the solution for whole graph. The second subroutine searches for a pair of sets in the $P_t$-witness structure which are very dense. Then it exploit the sparseness of the remaining graph to efficiently compute partial solutions for them. Moreover, the pair of dense parts are resolved using the algorithm of Telle and Villanger for {\sc $2$-Disjoint Connected Subgraph}~\cite{telle2013connecting}. The third routine works with a hope that the total number of vertices in one of odd/even sets from $\calW$ can be bounded. Finally, the fourth subroutine works by exploiting a similar odd/even property as the third subroutine, but it relaxes the condition to ``nearly'' small odd/even set.  

To design our algorithm, we also define a problem called $3$-\textsc{Disjoint Connected Subgraphs ($3$-DCS)}, which is an extension of the $2$-\textsc{Disjoint Connected Subgraphs ($2$-DCS)} problem. {\sc $3$-DCS} takes as input a graph $G$ and disjoint sets $Z_1,Z_2\subseteq V(G)$, and the goal is to partition $V(G)$ into three sets $(V_1,U,V_2)$, such that graphs induced on each of the parts is connected and $Z_i\subseteq V_i$, for $i\in [2]$. We design an algorithm for {\sc $3$-DCS} running in time $\calO^{\star}(1.88^n)$. The fourth subroutine of our algorithm uses the algorithm for {\sc $3$-DCS} as a subroutine. 


As a corollary to our $\calO^{\star}(1.88^n)$-time algorithm for {\sc $3$-DCS}, we obtain that {\sc $P_5$-Contraction} admits an algorithm running in time $\calO^{\star}(1.88^n)$.

\section{Preliminaries}\label{sec:prelimsxyz}
In this section, we state some basic definitions and introduce terminologies from graph theory. We use standard terminology from the book of Diestel~\cite{diestel-book} for the graph related
terminologies which are not explicitly defined here. We also establish some notations that will be used throughout. 

We denote the set of natural numbers by $\mathbb{N}$ (including $0$). For $k \in \mathbb{N}$, $[k]$ denotes the set $\{1,2,\ldots, k\}$. 

We note that all graphs considered in this article are connected graphs on at least two vertices (unless stated otherwise). For a graph $G$, the sets $V(G)$ and $E(G)$ denote the sets of vertices and edges in $G$, respectively. Two (distinct) vertices $u, v$ in $V(G)$ are \emph{adjacent} if the edge $uv \in E(G)$. For an edge $uv$, the vertices $u$ and $v$ are the \emph{endpoints} of $uv$. The neighborhood of a vertex $v$, denoted by $N_G(v)$, is the set of vertices adjacent to $v$ and its degree $d_G(v)$, is $|N_G(v)|$. For a set $S\subseteq V(G)$, $N_G(S)$ denotes the neighborhood of $S$, i.e., $N_G(S) = (\bigcup_{s\in S}N_G(s)) \setminus S$. The subscripts in the above notations are omitted when the context is clear. 

For a set of edges $F \subseteq E(G)$, $V(F)$ is the set of vertices that are endpoints of edges in $F$. For $S \subseteq V(G)$, we denote the graph obtained by deleting $S$ from $G$ by $G - S$, i.e., the vertex set and edge set of $G-S$ is $V(G)\setminus S$ and $\{uv\in E(G) \mid u,v \not\in S\}$, respectively. Furthermore, the subgraph of $G$ induced by $S$ is the graph $G[S] = G \setminus (V(G) \setminus S)$. For two subsets $S_1, S_2\subseteq V(G)$, we say $S_1, S_2$ are \emph{adjacent} if there exists an edge in $G$ with one endpoint in $S_1$ and the other endpoint in $S_2$.

A path $P_t=(v_1,v_2,\cdots,v_t)$ on $t$ vertices, where $t\in \mathbb{N}$ is the graph with vertex set $\{v_1,v_2,\cdots,v_t\}$ and edge set $\{v_iv_{i+1} \mid i \in [t-1]\}$. Furthermore, $P_t$ is a path between $v_1$ and $v_t$. A graph $G$ is \emph{connected} if for every distinct vertices $u,v\in V(G)$, there is a path (which is subgraph of $G$) between $u$ and $v$. Consider a graph $G$. A (vertex inclusion-wise) maximal connected subgraph of $G$ is a \emph{connected component} or a \emph{component} of $G$. A set $A \subseteq V(G)$ is a \emph{connected set} in $G$ if $G[A]$ is a connected graph.


Consider a graph $G$ and an edge $e=uv\in E(G)$. The graph obtained after ``contracting'' the edge $e$ in $G$ is denoted by $G/e$. That is, $V(G/e) = (V(G) \cup \{w_{e}\}) \setminus \{u, v\}$ and 
$E(G/e) = \{xy \mid x,y \in V(G) \setminus \{u, v\}, xy \in E(G)\}  \cup \{w_{e}x |\ x \in (N_G(u) \cup N_G(v)) \setminus \{u,v\}\}$, where $w_e$ is a newly added vertex. In the above, for an edge $ux\in E(G) \setminus \{uv\}$, the edge $w_{e}x \in E(G/e)$ is the \emph{renamed} edge of $ux$. For $F \subseteq E(G)$, $G/ F$ denotes the graph obtained from $G$ by contracting each (renamed) edge in $F$. (We note that the order in which the edges in $F$ are contracted is immaterial.) 

A graph $G$ is \emph{isomorphic} to a graph $H$ if there exists a bijective function $\phi : V(G) \rightarrow V(H)$, such that for $v,u \in V(G)$, $uv \in E(G)$ if and only if $(\phi(v), \phi(u)) \in E(H)$. A graph $G$ is \emph{contractible} to a graph $H$ if there exists $F \subseteq E(G)$, such that $G/F$ is isomorphic to $H$. In other words, $G$ is contractible to $H$ if there is a surjective function $\varphi : V(G) \rightarrow V(H)$, with $W(h) =\{v\in V(G) \mid \varphi(v) = h\}$, for $h\in V(H)$, with the following properties:
\setlist{nolistsep}
\begin{itemize}[noitemsep]
\item for any $h\in V(H)$, the graph $G[W(h)]$ is connected, and
\item for any two vertices $h, h' \in V(H)$, $hh' \in E(H)$ if and only if $W(h)$ and $W(h')$ are adjacent in $G$.
\end{itemize}
Let ${\cal W} =\{W(h) \mid h \in V(H)\}$. The sets in ${\cal W}$ are called \emph{witness sets}, and $\cal W$ is an $H$-\emph{witness structure} of $G$. 


In this paper, we will restrict ourselves to contraction to paths. This allows us to use an ordered notation for witness sets, rather than just the set notation. This ordering of the sets in witness set is given by the ordering of vertices in the path. That is, for a $P_t=(h_1,h_2,\cdots, h_t)$-witness structure, $\calW = \{W(h_1), W(h_2), \cdots, W(h_t)\}$ of a graph $G$, we use the ordered witness structure notation, $(W(h_1), W(h_2), \cdots, W(h_t))$, or simply, $(W_1, W_2, \cdots, W_t)$.


In the following, we give some useful observations regarding contraction to paths. 

\begin{observation}\label{obs:contract-P2} Any connected graph can be contracted to $P_2$.
\end{observation}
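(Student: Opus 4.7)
The plan is to exhibit an explicit $P_2$-witness structure for any connected graph $G$ on at least two vertices (which is our standing assumption in the paper). Since $P_2$ consists of two adjacent vertices $h_1h_2$, it suffices to produce an ordered partition $(W_1, W_2)$ of $V(G)$ such that $G[W_1]$ and $G[W_2]$ are both connected and there is at least one edge of $G$ between $W_1$ and $W_2$.

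First I would fix any spanning tree $T$ of $G$; this exists because $G$ is connected. Since $|V(G)| \geq 2$, the tree $T$ contains at least one leaf $\ell$. I would then set $W_1 = \{\ell\}$ and $W_2 = V(G) \setminus \{\ell\}$. Clearly $G[W_1]$ is connected as it is a single vertex. For $W_2$, observe that $T - \ell$ is still a tree (removing a leaf from a tree yields a tree) spanning $V(G) \setminus \{\ell\}$, and it is a subgraph of $G[W_2]$, so $G[W_2]$ is connected. Finally, $\ell$ has a neighbor in $T$ (its unique neighbor in the tree), which lies in $W_2$, so $W_1$ and $W_2$ are adjacent in $G$.

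Thus $(W_1, W_2)$ is a $P_2$-witness structure of $G$, certifying that $G$ contracts to $P_2$ (concretely, contracting all edges of $T - \ell$ in $G$ yields a graph isomorphic to $P_2$). There is no real obstacle here; the only subtlety worth flagging is the implicit assumption $|V(G)| \geq 2$, which is in force throughout the paper per the remarks in \autoref{sec:prelimsxyz}.
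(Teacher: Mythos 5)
Your proof is correct, and it fills in the argument the paper implicitly takes for granted: the observation is stated without proof, presumably because the spanning-tree-plus-leaf decomposition is considered immediate. Your argument is exactly the natural one, so there is nothing to compare against or criticize; the only point worth noting is that you correctly invoke the standing assumption $|V(G)| \geq 2$, without which $P_2$-contractibility would fail.
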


\begin{observation}\label{obs:singleton-end-bags} Let $G$ be a graph contractible to $P_t$. Then, there is a $P_t$-witness structure, $\mathcal{W} = (W_1, \dots, W_t)$, of $G$ such that $W_1$ is a singleton set. Moreover, if $t\geq 3$, then there is a $P_t$-witness structure, $\mathcal{W} = (W_1, \dots, W_t)$, of $G$ such that both $W_1$ and $W_t$ are singleton set.
\end{observation}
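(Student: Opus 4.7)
The plan is to start from an arbitrary $P_t$-witness structure $\mathcal{W} = (W_1, W_2, \ldots, W_t)$ of $G$ and modify only $W_1$ and $W_2$ so that the new first witness set is a singleton. The key reduction is: if $|W_1| \geq 2$, pick a vertex $v \in W_1$ that is a non-cut vertex of $H = G[W_1 \cup W_2]$, and then redefine $W_1' = \{v\}$, $W_2' = V(H) \setminus \{v\}$, and $W_i' = W_i$ for all $i \geq 3$. Checking that $(W_1', W_2', W_3, \ldots, W_t)$ is again a $P_t$-witness structure is routine: $G[W_2']$ is connected precisely because $v$ is a non-cut vertex of $H$; the adjacency of $W_1'$ and $W_2'$ follows because $H$ is connected on at least two vertices; and all adjacency and non-adjacency constraints involving $W_3, \ldots, W_t$ are inherited from $\mathcal{W}$, since $W_1' \subseteq W_1$, $W_2' \subseteq W_1 \cup W_2$, and $W_1 \cup W_2$ has no edges to any $W_j$ with $j \geq 4$.

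The one nontrivial step is producing such a non-cut vertex $v$ of $H$ that lies in $W_1$, and this is the main obstacle in the argument. I would handle it by exhibiting an explicit spanning tree of $H$ whose leaves can be controlled: let $T_1$ be any spanning tree of $G[W_1]$, let $T_2$ be any spanning tree of $G[W_2]$, and let $e = v_1 v_2$ be any edge of $G$ with $v_1 \in W_1$ and $v_2 \in W_2$, which exists since $W_1$ and $W_2$ are adjacent. Then $T = T_1 \cup T_2 \cup \{e\}$ is a spanning tree of $H$. Since $|W_1| \geq 2$, the tree $T_1$ has at least two leaves, so at least one leaf $v$ of $T_1$ satisfies $v \neq v_1$. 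That $v$ is also a leaf of $T$ (its only $T$-neighbor is its unique $T_1$-neighbor), hence $H - v$ is still connected because $T - v$ is a spanning tree of it. So $v \in W_1$ is the required non-cut vertex.

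For the moreover part, I would first apply the above construction to produce a witness structure in which $W_1$ is a singleton $\{u\}$, and then apply the same argument to the reversed structure $(W_t, W_{t-1}, \ldots, W_2, \{u\})$, which is again a valid $P_t$-witness structure because $P_t$ is symmetric under reversal. This second application only alters the first two entries of the reversed sequence, namely $W_t$ and $W_{t-1}$ in the original orientation; the assumption $t \geq 3$ guarantees $t - 1 \geq 2$, so the singleton $\{u\}$ at the opposite end is not touched. Since $\{u\}$ and the new singleton at the other end lie in different witness sets of the intermediate structure, they are automatically distinct, and the claim follows.
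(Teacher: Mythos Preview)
Your proof is correct. The paper states this result as an observation without supplying any argument, so there is nothing in the paper to compare your approach against; your spanning-tree construction for locating a non-cut vertex of $G[W_1\cup W_2]$ inside $W_1$, the verification that the modified tuple remains a $P_t$-witness structure, and the reversal trick for the second endpoint (using $t\geq 3$ to keep the two modifications disjoint) are all sound and fill in the omitted details cleanly.
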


We end this section with an observation which will be used to bound the number of subsets of a set $U$ which are of size at most $\mu |U|$ for a fixed positive constant $\mu$ which is strictly less than $1/2$. We start with following inequality for integers $n$ and $k$ such that $k \le n$.
$$\binom{n}{k} \le \Big[\Big(\frac{k}{n}\Big)^{-\frac{k}{n}}\cdot \Big(1 - \frac{k}{n}\Big)^{\frac{k}{n}-1} \Big]^n$$
Using above inequality we get following upper bound on summand for $k < n/2$.   
$$\sum_{i=1}^k\binom{n}{i} \le k \cdot \binom{n}{k} \le k \cdot \Big[\Big(\frac{k}{n}\Big)^{-\frac{k}{n}}\cdot \Big(1 - \frac{k}{n}\Big)^{\frac{k}{n}-1} \Big]^n$$
For a positive constant $\mu < 1/2$, such that $k\leq \mu n$, the above inequalities can be written as: 
$$\sum_{i=1}^{\lfloor\mu n\rfloor} \binom{n}{i} \le \mu n \cdot \Big[\mu^{-\mu} \cdot (1 - \mu)^{\mu-1} \Big]^n $$
\begin{equation} \label{eq:enumeration-over-subsets}
\sum_{i=1}^{\lfloor\mu n\rfloor} \binom{n}{i} \le \mu n \cdot \Big[\frac{1}{\mu^{\mu}} \cdot \frac{1}{(1 - \mu)^{1-\mu}} \Big]^n  = \mu n [{g(\mu)}]^n
\end{equation}
where function $g(\mu)$ is defined as:
$$g(\mu) = \frac{1}{\mu^{\mu} \cdot (1-\mu)^{(1-\mu)}}$$

Following observation is implied by above inequalities.

\begin{observation}\label{obs:subset-no} For a set $U$ with $n$ elements and a constant $\mu < 1/2$, the number of subsets of $U$ of size at most $\mu n$ is bounded by $\calO^{\star}([g(\mu)]^n)$. Moreover, all such subsets can be enumerated in time $\calO^{\star}([g(\mu)]^n)$.
\end{observation}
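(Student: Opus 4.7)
The plan is to verify the counting bound via the chain of inequalities already assembled in the excerpt, and then upgrade it to an algorithmic enumeration by describing a polynomial-delay subset enumeration procedure.

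First, I would verify the binomial inequality $\binom{n}{k} \le \left[(k/n)^{-k/n}(1-k/n)^{k/n-1}\right]^n$, which is the standard entropy estimate $\binom{n}{k} \le 2^{nH(k/n)}$, with $H$ the binary entropy function, rewritten in the form $g(k/n)^n$. The function $g(p) = p^{-p}(1-p)^{-(1-p)}$ is continuous on $(0,1)$ and monotonically increasing on $[0,1/2]$ (with $g(0)=1$ and $g(1/2)=2$), so for every $p \le \mu$ we have $g(p) \le g(\mu) < 2$.

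Second, since $\binom{n}{i}$ is monotone in $i$ for $i \le n/2$ and $\mu < 1/2$, I would write
\[
\sum_{i=0}^{\lfloor \mu n \rfloor} \binom{n}{i} \le (\lfloor \mu n \rfloor + 1)\binom{n}{\lfloor \mu n \rfloor}.
\]
Applying the entropy bound with $k = \lfloor \mu n \rfloor$ and using $\lfloor \mu n \rfloor / n \le \mu$ together with the monotonicity of $g$ on $[0,1/2]$ immediately gives
\[
\sum_{i=0}^{\lfloor \mu n \rfloor} \binom{n}{i} \le (\mu n + 1)\, g(\mu)^n,
\]
which yields the claimed $\calO^{\star}(g(\mu)^n)$ bound on the number of subsets.

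Third, for the enumeration claim I would describe a simple branching procedure. Fix an ordering $u_1,\dots,u_n$ of $U$ and a target capacity $k^\star = \lfloor \mu n \rfloor$. Recursively process elements one at a time, maintaining a partial subset $S$ together with the remaining capacity $k^\star - |S|$; at each node branch on whether to include $u_i$, pruning the inclusion branch whenever the remaining capacity is zero, and at every leaf emit the current $S$. The leaves of this recursion tree are exactly the subsets to be enumerated, every internal node has at most two children (so the number of internal nodes is at most twice the number of leaves), and each node performs only polynomial work. The total running time is therefore a polynomial times the number of enumerated subsets, i.e., $\calO^{\star}(g(\mu)^n)$.

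The only slightly delicate point is the passage from $\lfloor \mu n \rfloor / n$ to exactly $\mu$ in the entropy bound, but this requires no continuity or correction argument because $g$ is monotone increasing on $[0,1/2]$. Hence I do not anticipate a real obstacle beyond carefully chaining the inequalities and articulating the polynomial-delay recursion.
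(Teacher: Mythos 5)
Your proof follows the same chain of entropy inequalities the paper uses (the paper essentially just states the inequalities and then declares that the observation follows), and your handling of the floor-function passage via monotonicity of $g$ on $[0,1/2]$ is a correct tidying of a step the paper glosses over. Your addition of an explicit enumeration procedure is a genuine improvement, since the paper never actually justifies the algorithmic part of the claim; however, the sub-claim that ``every internal node has at most two children, so the number of internal nodes is at most twice the number of leaves'' is not valid as stated --- a tree where internal nodes may have a single child (which yours does, after pruning the inclusion branch) can have arbitrarily more internal nodes than leaves. The conclusion still holds, but the correct justification is that the recursion tree has depth $n$ and every node lies on some root-to-leaf path, so the total node count is at most $(n+1)$ times the number of leaves, giving the $\calO^{\star}([g(\mu)]^n)$ running time.
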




For a graph $G$, a non-empty set $Q\subseteq V(G)$, and integers $a,b \in \mathbb{N}$, a connected set $A$ in $G$ is a $(Q,a,b)$-\emph{connected set} if $Q \subseteq A$, $|A| = a$, and $|N(A)|\leq b$. Moreover, a connected set $A$ in $G$ is an $(a,b)$-\emph{connected set} if $|A|\leq a$ and $|N(A)|\leq b$. Next, we state results regarding $(Q,a,b)$-connected sets and connected sets, which follow from Lemma~$3.1$ of~\cite{DBLP:journals/combinatorica/FominV12}. (We note that their result gives slightly better bounds, but for simplicity, we only use the bounds stated in the following lemmas.)

\begin{lemma1} \label{lemma:main-no-conn-comps} For a graph $G$, a non-empty set $Q \subseteq V(G)$, and integers $a,b \in \mathbb{N}$, the number of $(Q,a,b)$-connected sets in $G$ is at most $2^{a + b - |Q|}$. Moreover, we can enumerate all $(Q,a,b)$-connected sets in $G$ in time $2^{a + b - |Q|}\cdot n^{\calO(1)}$.
\end{lemma1}

\begin{lemma1} \label{lemma:no-conn-comps} For a graph $G$ and integers $a,b \in \mathbb{N}$ the number of $(a, b)$-connected sets in $G$ is at most $2^{a + b} \cdot n^{\calO(1)}$. Moreover, we can enumerate all such sets in $2^{a + b} \cdot n^{\calO(1)}$ time.
\end{lemma1}
\section{\textsc{3-Disjoint Connected Subgraph}}
\label{sec:prob-dis-conn-3}

In this section, we define a generalization of \textsc{2-Disjoint Connected Subgraphs (2-DCS)}, called \textsc{3-Disjoint Connected Subgraphs (3-DCS)}. We design an algorithm for \textsc{$3$-DCS} running in time $\calO^{\star}(1.88^n)$, where $n$ is number of vertices in input graph. This algorithm will be useful in designing our algorithm for \textsc{Path Contraction}. 

In the following, we formally define the problem \textsc{$2$-DCS} which is studied in~\cite{cygan2014solving,telle2013connecting}. 

\defproblem{\textsc{2-Disjoint Connected Subgraphs (2-DCS)}}{A connected graph $G$ and two disjoint sets $Z_1$ and $Z_2$.}{Does there exist a partition $(V_1, V_2)$ of $V(G)$, such that for each $i\in [2]$, $Z_i\subseteq V_i$ and $G[V_i]$ is connected?
}

In the following we state a result regarding {\sc $2$-DCS} which will be useful later sections. 

\begin{proposition}[\cite{telle2013connecting} Theorem~3] \label{prop:exact-2-con} There exists an algorithm that solves \textsc{2-Disjoint Connected Subgraphs} problem in $\mathcal{O}^{\star}(1.7804^n)$ time where $n$ is number of vertices in the input graph.
\end{proposition}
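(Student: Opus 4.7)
The plan is to interpolate between two complementary enumeration strategies, each parameterised by a threshold $\mu \in (0, 1/2)$ to be optimised at the end. In any feasible solution $(V_1, V_2)$ we may assume $|V_1| \le n/2$ after relabelling, and split the analysis into the case $|V_1| \le \mu n$ (small side) and $\mu n < |V_1| \le n/2$ (both sides large).

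In the small-side case I would enumerate the candidates for $V_1$ directly: every such candidate is a subset of $V(G)$ of size at most $\mu n$ containing $Z_1$. By Observation~\ref{obs:subset-no} the number of such subsets is $\calO^{\star}(g(\mu)^n)$, and for each one it suffices to test in polynomial time whether it induces a connected subgraph and whether its complement is connected and contains $Z_2$, so this regime runs in $\calO^{\star}(g(\mu)^n)$ time. In the both-sides-large case direct enumeration of either side is too expensive because Observation~\ref{obs:subset-no} no longer applies, so the plan is to exploit structural sparsity: instead of guessing $V_1$ wholesale, I would enumerate a small ``interface'' $S \subseteq V(G) \setminus (Z_1 \cup Z_2)$ such that $Z_1$ and $Z_2$ lie in different components of $G - S$. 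Such interfaces can be produced by reading off the neighbourhoods of the $(Z_1, a, b)$-connected sets enumerated by Lemma~\ref{lemma:main-no-conn-comps}, whose total count is controlled by $2^{a + b - |Z_1|}$. For each candidate interface $S$, the partition is essentially determined by the components of $G - S$ into which $Z_1$ and $Z_2$ fall, together with a $2^{|S|}$ branching over how the vertices of $S$ itself are coloured, and can then be validated in polynomial time by checking connectivity of both sides. The threshold $\mu$ and the target size of $S$ are to be chosen simultaneously so that the worst of the two regimes is minimised.

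The main obstacle is clearly the second case. The enumeration bound of Lemma~\ref{lemma:main-no-conn-comps} has exponent $a + b - |Z_1|$, and $b$ can be as large as $n - a$ when both sides of the partition are large, so a naive application still costs $2^n$ and wipes out any saving coming from the first regime. Reaching the specific constant $1.7804$ of Telle and Villanger almost certainly requires a measure-and-conquer accounting in which the size of the interface, the number of vertices already classified, and the remaining free vertices are tracked simultaneously, together with a case distinction on whether the interface is sparse (few boundary edges, so extension is cheap) or dense (few candidate interfaces, so enumeration is cheap). I expect essentially all the technical effort of the proof to lie in this refined analysis; since the statement here is imported verbatim from \cite{telle2013connecting}, I would not redo that analysis but rather appeal to their Theorem~3 as a black box.
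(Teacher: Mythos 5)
Your conclusion is the right one and matches the paper exactly: this proposition is imported verbatim from Telle and Villanger (Theorem~3 of~\cite{telle2013connecting}), and the paper supplies no internal proof of it, so appealing to it as a black box is precisely what is intended. The speculative sketch you give beforehand is not how Telle and Villanger actually obtain the $\mathcal{O}^{\star}(1.7804^n)$ bound --- their argument is built around enumerating \emph{minimal connectors} for the terminal set and bounding their number, which is in fact the very tool this paper imports separately as Proposition~\ref{prop:number-minimal-connector} --- but since you explicitly disclaim that sketch and defer to the citation, nothing is amiss.
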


In the \textsc{$3$-DCS} problem, the input is same as that of \textsc{2-DCS}, but we are interested in a partition of $V(G)$ into three sets, rather than two. We formally define the problem below. 

\defproblem{\textsc{3-Disjoint Connected Subgraphs (3-DCS)}}{A connected graph $G$ and two disjoint sets $Z_1$ and $Z_2$.}{Does there exist a partition $(V_1, U, V_2)$ of $V(G)$, such that 1) for each $i\in [2]$, $Z_i\subseteq V_i$ and $G[V_i]$ is connected, 2) $G[U]$ is connected, and 3) $G - U$ has exactly two connected components, namely, $G[V_1]$ and $G[V_2]$? 
}

We note that the problem definitions for \textsc{2-DCS} and \textsc{3-DCS} do not require the sets $Z_1, Z_2$ to be non-empty. If either of this set is empty, we can guess a vertex for each of the non-empty sets. Since there are at most $n^2$ such guesses, it will not affect exponential factor in the running time of our algorithm. Thus, here after we assume that both $Z_1$ and $Z_2$ are non-empty sets.

To design our algorithm for \TDCS, we first design an algorithm for a special case for the problem where the size of $Z_1\cup Z_2$ is at most $\delta n$, where $\delta= 0.092$. (The choice of $\delta$ will be clear when we present the proof.) We call the special case of \TDCS\ where $|Z_1\cup Z_2| \leq \delta n$, as \STDCS. In Section~\ref{subsec:stdcs}, we will design an algorithm for \STDCS\ running in time $\mathcal{O}^{\star}(1.88^n)$. That is, our goal of Section~\ref{subsec:stdcs} will be to prove the following lemma. 

\begin{lemma} \label{lemma:exact-small-terminal} 
\STDCS\ admits an algorithm running in time $\mathcal{O}^{\star}(1.88^n)$, where $n$ is the number of vertices in the input graph.
\end{lemma}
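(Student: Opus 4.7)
The plan is to reduce \STDCS\ to repeated applications of the \TwDCS\ algorithm of \autoref{prop:exact-2-con} by enumerating a candidate for one of the three parts $V_1, V_2, U$ in the target partition $(V_1, U, V_2)$. The key reduction is: once a candidate $V_1$ is fixed, the residual task on $G - V_1$ is to partition $V(G) \setminus V_1$ into $(U, V_2)$ with $N(V_1) \subseteq U$, $Z_2 \subseteq V_2$, and both $G[U]$ and $G[V_2]$ connected. The connectivity requirements together with $N(V_1)\subseteq U$ forbid any edge between $U$ and $V_2$, so this is precisely a \TwDCS\ instance on $G - V_1$ with terminal sets $N(V_1)$ and $Z_2$, solvable in time $\calO^\star(1.7804^{n-|V_1|})$. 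A symmetric reduction applies after fixing $V_2$; fixing $U$ reduces to a polynomial-time verification that $G[U]$ is connected and that $G-U$ has exactly two components, one containing $Z_1$ and the other $Z_2$.

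By running the algorithm twice with $(Z_1,Z_2)$ swapped, I may assume $|V_1|\le |V_2|$, hence $|V_1|\le n/2$. The plan is a three-way case analysis parametrized by thresholds $\alpha,\beta$ to be optimized. In Case A, where $|V_1|\le \alpha n$ and $|N(V_1)|\le \beta n$, I enumerate $V_1$ as a $(Z_1,a,b)$-connected set via \autoref{lemma:main-no-conn-comps} over all $a\le \alpha n$ and $b\le \beta n$; the total cost, combined with the \TwDCS\ call on $G-V_1$, is bounded by $\sum_{a,b} 2^{a+b-|Z_1|}\cdot 1.7804^{n-a}$. In Case B, where $|V_1|\le \alpha n$ but $|N(V_1)|>\beta n$, the inclusion $N(V_1)\subseteq U$ forces $|U|>\beta n$, so $|V_2|<(1-\beta)n$ is moderately small; I then enumerate $V_2$ symmetrically via \autoref{lemma:main-no-conn-comps} (with its own size and neighborhood parameters) and reduce again to \TwDCS. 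In Case C, where $|V_1|>\alpha n$, the assumption $|V_1|\le|V_2|$ gives $|V_2|\ge|V_1|>\alpha n$, so $|U|<(1-2\alpha)n$; I enumerate $U$ directly, either as a subset via \autoref{obs:subset-no} (when the size bound is below $n/2$) or as an $(a,b)$-connected set via \autoref{lemma:no-conn-comps}, and verify each candidate in polynomial time.

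The parameters $\alpha,\beta$ (together with finer thresholds on neighborhood sizes in Case B) are to be tuned so that each of the three cases contributes at most $\calO^\star(1.88^n)$. The assumption $|Z_1\cup Z_2|\le \delta n$ with $\delta=0.092$ enters the analysis in two complementary ways: directly, through the $-|Z_i|$ exponent in \autoref{lemma:main-no-conn-comps}, which gives slack whenever $|Z_i|$ is nontrivial, and indirectly, by constraining how candidate subsets can realize the terminal inclusions $Z_i\subseteq V_i$ in the verification step of Cases B and C. The main obstacle I anticipate is the simultaneous balancing of $\alpha$ and $\beta$ across the three cases: Case C forces $\alpha$ to exceed $1/4$ so that $(1-2\alpha)n$ is a subset-enumeration regime (where $g(1-2\alpha)\le 1.88$ is feasible), while Cases A and B push $\alpha,\beta$ downwards so that the Lemma~\ref{lemma:main-no-conn-comps}-enumeration of $V_1$ (resp.\ $V_2$) combined with the factor $1.7804^{n-a}$ remains within budget. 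The specific threshold $\delta=0.092$ is expected to fall out of this balancing as the tightest constraint that keeps all three case totals below $\calO^\star(1.88^n)$.
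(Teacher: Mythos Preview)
Your reduction to \TwDCS\ after fixing $V_1$ is essentially correct, though the sentence ``forbid any edge between $U$ and $V_2$'' should read ``between $V_1$ and $V_2$''; edges between $U$ and $V_2$ are allowed and indeed must exist. The real problem is the quantitative balancing: your three-way case split cannot be tuned to reach $\calO^\star(1.88^n)$. In Case~A the cost is, up to polynomial factors, $\max_{a\le\alpha n,\,b\le\beta n}2^{a+b}\cdot 1.7804^{\,n-a}=2^{(\alpha+\beta)n}\cdot 1.7804^{(1-\alpha)n}$, and for this to stay below $1.88^n$ one needs roughly $0.116\,\alpha+0.693\,\beta\le 0.054$, forcing $\beta$ below about $0.02$ even at $\alpha=0.35$. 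Then Case~B only yields $|V_2|<(1-\beta)n\approx 0.98n$, which is useless for enumeration, and you have no upper bound on $|U|$ in Case~B either. Case~C needs $g(1-2\alpha)\le 1.88$, i.e.\ $\alpha\gtrsim 0.35$, compatible with Case~A only when $\beta$ is essentially zero---but then Case~B swallows almost everything with no usable structure. A concrete bad regime: $|V_1|\approx 0.3n$, $|U|\approx 0.05n$, $|V_2|\approx 0.65n$, $|N(V_1)|\approx 0.05n$. Enumerating $V_1$ via Lemma~\ref{lemma:main-no-conn-comps} and running \TwDCS\ costs about $2^{0.35n}\cdot 1.7804^{0.7n}\approx 1.91^n$; enumerating $U$ via subsets needs $\alpha\le 0.3$, giving $g(0.4)^n\approx 1.96^n$. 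No setting of $\alpha,\beta$ covers this within $1.88^n$.

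More fundamentally, your plan never really uses the hypothesis $|Z_1\cup Z_2|\le 0.092 n$: the $-|Z_i|$ term in Lemma~\ref{lemma:main-no-conn-comps} saves at most a factor $2^{0.092n}$, far short of what is needed. The paper's argument is entirely different. It first restricts to \emph{immovable} tri-partitions (solutions where no vertex of $(V_i\setminus Z_i)\cap N(U)$ can be pushed into $U$), and shows that for such a partition, if $S_i$ is a minimal $Z_i$-connector inside $G[V_i]$, then $S_1\cup S_2$ is a \emph{minimal} $(Z_1\cup Z_2)$-connector in the auxiliary graph $G'=G+z_1z_2$. The algorithm then enumerates all minimal $Z$-connectors in $G'$ via Proposition~\ref{prop:number-minimal-connector}, whose bound $\binom{n-|Z|}{|Z|-2}\cdot 3^{(n-|Z|)/3}$ is exactly where the smallness of $|Z|$ is decisive and where $\delta=0.092$ emerges from the optimization. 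The $3^{n/3}$-type enumeration of minimal connectors is what makes $1.88^n$ attainable; generic connected-set enumeration combined with \TwDCS\ cannot match it.
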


In the rest of this section, we show how we can use the above lemma to design an algorithm for the problem \TDCS, running in time $\mathcal{O}^{\star}(1.88^n)$. We also show how we can obtain an algorithm for {\textsc{$P_5$-Contraction}} running in time $\mathcal{O}^{\star}(1.88^n)$, using our algorithm for \TDCS. 

In the following theorem, we give our algorithm for \TDCS, using Lemma~\ref{lemma:exact-small-terminal} as a subroutine. 

\begin{theorem}\label{thm:exact-3-con} \TDCS\ admits an algorithm running in time $\mathcal{O}^{\star}(1.88^n)$, where $n$ is number of vertices in the input graph.
\end{theorem}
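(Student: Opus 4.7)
The plan is to dichotomize on the size of $Z_1\cup Z_2$ relative to $n$, using Lemma~\ref{lemma:exact-small-terminal} for the small case and a direct enumeration for the large case, with the constant $\delta=0.092$ serving as the threshold.

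In the small-terminal case, $|Z_1\cup Z_2|\le\delta n$, the input is already an instance of \STDCS, and Lemma~\ref{lemma:exact-small-terminal} immediately gives a running time of $\mathcal{O}^{\star}(1.88^n)$. In the large-terminal case, $|Z_1\cup Z_2|>\delta n$, the key observation is that in any valid partition $(V_1,U,V_2)$ the middle set $U$ must be disjoint from $Z_1\cup Z_2$, since $Z_i\subseteq V_i$ and $(V_1,U,V_2)$ is a partition of $V(G)$. Hence $U$ is a subset of the free vertices $F:=V(G)\setminus(Z_1\cup Z_2)$, whose cardinality is strictly less than $(1-\delta)n$. The algorithm enumerates all $2^{|F|}$ candidate subsets $U\subseteq F$, and for each candidate verifies in polynomial time whether (a) $G[U]$ is connected, (b) $G-U$ has exactly two connected components, and (c) one component contains $Z_1$ while the other contains $Z_2$; if so, we output those components as $V_1, V_2$, and otherwise declare the instance a no-instance. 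This case therefore runs in $\mathcal{O}^{\star}(2^{(1-\delta)n})$ time.

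Combining the two branches gives a running time of $\mathcal{O}^{\star}(\max(1.88^n, 2^{(1-\delta)n}))$. The numerical choice $\delta=0.092$ is calibrated precisely so that $2^{1-\delta}<1.88$ (indeed $2^{0.908}\approx 1.8765$), which collapses this bound to $\mathcal{O}^{\star}(1.88^n)$ as claimed. The genuine algorithmic work is hidden inside Lemma~\ref{lemma:exact-small-terminal}; the reduction from \TDCS\ to \STDCS\ via this dichotomy is essentially a bookkeeping step once $\delta$ is chosen coherently with the running time promised for the small case, and the only mild subtlety is to confirm that $U=\emptyset$ is correctly rejected (since $G$ is connected, condition (b) then fails), so that the enumeration does not need any special casing.
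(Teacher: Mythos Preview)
Your proof is correct and follows essentially the same approach as the paper: dichotomize on $|Z_1\cup Z_2|$ at the threshold $\delta=0.092$, invoke Lemma~\ref{lemma:exact-small-terminal} in the small case, and in the large case enumerate all subsets of $V(G)\setminus(Z_1\cup Z_2)$ and test each in polynomial time. Your observation that $2^{0.908}<1.88$ and your remark on $U=\emptyset$ match the paper's reasoning.
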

\begin{proof} Let $(G, Z_1, Z_2)$ be an instance of \TDCS. We consider the following two cases based on whether or not $|Z_1\cup Z_2| \leq \delta n$, where $ \delta = 0.092$. If $|Z_1\cup Z_2| \leq \delta n$, then we resolve the instance in time $\mathcal{O}^{\star}(1.88^n)$, using Lemma~\ref{lemma:exact-small-terminal}. Now we consider the case when $|Z_1\cup Z_2| > \delta n$. The goal is to look for a solution $(V_1,U,V_2)$ for the instance. We begin by enumerating all potential candidates for the set $U$. That is, we compute the set $\mcal{U} =\{U' \mid U' \subseteq V(G) \setminus (Z_1\cup Z_2)\}$. As $|V(G) \setminus (Z_1\cup Z_2)| \leq (1-\delta)n \leq 0.908n$, the time required to compute $\mcal{U}$ is bounded by $\mathcal{O}^{\star}(2^{0.908n}) \in \mathcal{O}^{\star}(1.88^n)$. Now for each $U' \in \mcal{U}$, we check the following properties: 1) $G[U']$ is connected and 2) $G-U'$ has exactly two connected components, one containing all vertices from $Z_1$ and the other containing all the vertices from $Z_2$. If there is $U'\in \mcal{U}$ which satisfies the above two conditions then we return that the instance is a yes-instance, and otherwise we return that the instance a no-instance. The correctness of the algorithm and the analysis of the claimed running time bound are apparent from the description. 
\end{proof}

Using Theorem~\ref{thm:exact-3-con} we obtain our algorithm for \textsc{$P_5$-Contraction}, in the following lemma.  

\begin{lemma} 
\PFC\ admits an algorithm running in time $\mathcal{O}^{\star}(1.88^n)$, where $n$ is the number of vertices in the input graph.
\end{lemma}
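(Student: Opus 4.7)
The plan is to reduce $\PFC$ to $\TDCS$ with only an $O(n^2)$ polynomial overhead and invoke Theorem~\ref{thm:exact-3-con}. The key structural observation is that by Observation~\ref{obs:singleton-end-bags}, if $G$ contracts to $P_5$ then there exists a $P_5$-witness structure $(W_1,W_2,W_3,W_4,W_5)$ with both end-bags singleton, say $W_1=\{u\}$ and $W_5=\{v\}$. Because the only bag adjacent to $W_1$ in the witness structure is $W_2$, every neighbor of $u$ in $G$ must lie in $W_2$; symmetrically, every neighbor of $v$ must lie in $W_4$. Moreover $uv\notin E(G)$, since $W_1$ and $W_5$ are non-adjacent in $P_5$.

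The algorithm would enumerate all $O(n^2)$ ordered pairs $(u,v)$ of distinct non-adjacent vertices of $G$ as candidates for the singleton end-bags. For each such pair, I construct the $\TDCS$ instance $(G', Z_1, Z_2)$ with $G' := G - \{u,v\}$, $Z_1 := N_G(u)$, and $Z_2 := N_G(v)$. If $Z_1 \cap Z_2 \neq \emptyset$ or either of $Z_1, Z_2$ is empty, I discard the guess: a common neighbor of $u,v$ cannot simultaneously lie in $W_2$ and $W_4$, and an empty $Z_i$ leaves $u$ or $v$ with no neighbor in $G'$ and so unable to be adjacent to the next bag. Otherwise, I call the $\mathcal{O}^{\star}(1.88^n)$-time routine of Theorem~\ref{thm:exact-3-con} on $(G', Z_1, Z_2)$ and accept the original instance as soon as some guess returns yes.

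For correctness, a solution $(V_1, U, V_2)$ to the $\TDCS$ instance yields the $P_5$-witness structure $(\{u\}, V_1, U, V_2, \{v\})$: the three inner parts are connected by the definition of $\TDCS$; the containments $Z_1 \subseteq V_1$ and $Z_2 \subseteq V_2$ together with the removal of $\{u,v\}$ from $G'$ force every edge incident to $u$ to land inside $V_1$ and every edge incident to $v$ to land inside $V_2$; and the condition that $G'-U$ has exactly the two components $V_1, V_2$ forces $V_1$ and $V_2$ to be non-adjacent. Together with $uv\notin E(G)$, these imply all adjacency and non-adjacency constraints of $P_5$; the converse direction follows from the structural observation above. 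The total running time is $O(n^2)\cdot\mathcal{O}^{\star}(1.88^{n-2}) = \mathcal{O}^{\star}(1.88^n)$. The subtle point to verify carefully is precisely the equivalence between the $\TDCS$ partition condition and the full list of adjacency/non-adjacency requirements of a $P_5$-witness structure with singleton end-bags, rather than reflexively re-checking edges post hoc.
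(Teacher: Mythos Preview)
Your proposal is correct and follows essentially the same approach as the paper: guess the two singleton end-bags (an $O(n^2)$ overhead), observe that their neighborhoods must lie in $W_2$ and $W_4$ respectively, and reduce to a \TDCS\ instance on $G-\{u,v\}$ solved via Theorem~\ref{thm:exact-3-con}. Your additional sanity checks (non-adjacency of $u,v$, disjointness and non-emptiness of $Z_1,Z_2$) and the explicit verification of all $P_5$ adjacency and non-adjacency constraints are more detailed than the paper's own proof, but the core reduction is identical.
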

\begin{proof}
Let $G$ be a graph. By Observation~\ref{obs:singleton-end-bags}, if $G$ is contractible to $P_5$, then there exists a $P_5$-witness structure $\mathcal{W} = (W_1, \cdots, W_5)$ of $G$ such that $W_1$ and $W_5$ are singleton sets. We guess the pair of vertices which are in the sets $W_1$ and $W_5$, respectively. Note that there are at most $\calO(n^2)$ choices for such pairs. Let $W_1=\{x\}$ and $W_5= \{y\}$ be the current guess of these sets. If there is witness structure where $W_1=\{x\}$ and $W_5= \{y\}$, then the vertices in $N(x)$ and $N(y)$ must belong to the sets $W(t_2)$ and $W(t_4)$, respectively. (As otherwise, the contracted graph cannot be an (induced) path on $5$ vertices.) Note that with the above guess, the problem boils down to solving \TDCS\ on the instance $(G - \{x, y\}, N_G(x), N_G(y))$. Thus, we can use Theorem~\ref{thm:exact-3-con} to resolve the instance $(G - \{x, y\}, N_G(x), N_G(y))$ of \TDCS\ in time $\mathcal{O}^{\star}(1.88^n)$. This concludes the proof. 
\end{proof}




\subsection{Algorithm for \STDCS}\label{subsec:stdcs}
The goal of this section will be to obtain a proof of Lemma~\ref{lemma:exact-small-terminal}, i.e., to design an algorithm for \STDCS\ running in time $\mathcal{O}^{\star}(1.88^n)$. Let $(G,Z_1,Z_2)$ be an instance of \STDCS. Note that $|Z_1\cup Z_2| \leq \delta n$, where $\delta = 0.092$. 

The intuition behind our algorithm is the following. We start by showing the existence of a special type of a solution, which we call an \emph{immovable tri-partition}, for a yes-instance. Roughly speaking, we use the properties ensured by a special solution to enumerate ``connectors'' for the set $Z_1\cup Z_2$ in an auxiliary graph. To enumerate such ``connectors'', we employ the algorithm of Telle and Villanger~\cite{telle2013connecting}. Then we show how we use these potential $Z_1\cup Z_2$ ``connectors'' in an auxiliary graph, to resolve the instance. 

In the following, we introduce some notations and preliminary results that will be useful in designing our algorithm. 

\paragraph*{Notations and Preliminary Results}
Consider a graph $H$ and set $Z\subseteq V(H)$. A vertex $v\in V(H)$ is called a \emph{$Z$-separator} if $Z$ contains vertices from at least two connected components of $G - \{v\}$. A set $S\subseteq V(H)$ is a \emph{$Z$-connector} if $Z\subseteq S$ and $H[S]$ is connected. Moreover, if no strict subset of $S$ is a $Z$-connector, then $S$ is a \emph{minimal $Z$-connector}. 

We state a result regarding enumeration of minimal $Z$-connectors in a graph which will be used in our algorithm. 

\begin{proposition}[\cite{telle2013connecting}]\label{prop:number-minimal-connector} Consider a graph $H$ on $n$ vertices and a set $Z \subset V(H)$ with at most $n/3$ vertices. Then, the number of minimal $Z$-connectors in $H$ is bounded by $\binom{n - |Z|}{|Z| - 2} \cdot 3^{(n - |Z|)/3}$. Moreover, we can enumerate all such minimal $Z$-connectors in time $\mathcal{O}^{\star}(\binom{n - |Z|}{|Z| - 2} \cdot 3^{(n - |Z|)/3})$. 
\end{proposition}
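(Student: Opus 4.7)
The plan is to combine a structural characterization of minimal $Z$-connectors with a two-phase enumeration, in the spirit of Steiner-tree-based branching arguments. First I would observe that if $S$ is a minimal $Z$-connector, then $H[S]$ admits a spanning tree $T$ whose every leaf lies in $Z$: otherwise, a leaf of $T$ in $S\setminus Z$ could be removed from $S$ while keeping $Z$ in a single component, contradicting minimality. Standard tree combinatorics then imply that the number of vertices of $T$ having degree at least $3$ (the ``branching'' Steiner vertices) is at most $(\text{number of leaves})-2\le |Z|-2$. Moreover, contracting the maximal degree-$2$ paths of $T$ yields a skeleton tree whose nodes are exactly $Z$ together with a set $B\subseteq V(H)\setminus Z$ of branching Steiner vertices with $|B|\le |Z|-2$.

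In the first enumeration phase I would guess the set $B$: this accounts for the $\binom{n-|Z|}{|Z|-2}$ factor in the bound, since $|B|\le |Z|-2$ and $B\subseteq V(H)\setminus Z$. Together with $B$ I would guess the abstract skeleton topology (an unlabeled tree on $Z\cup B$ with $O(|Z|)$ vertices), contributing only a polynomial factor that is absorbed in $\mathcal{O}^\star(\cdot)$. What remains is to enumerate, for each edge of the skeleton, the internal path of degree-$2$ Steiner vertices that realises that edge inside $H$.

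In the second phase, for each skeleton edge between fixed endpoints $u,v$, I would run a branching procedure that extends the current endpoint along a neighbor in $H$, generating the list of induced $u$--$v$ paths whose internal vertices all lie in $V(H)\setminus(Z\cup B)$ and are disjoint from paths constructed for other skeleton edges. The analysis would use a measure that charges $1/3$ to each remaining candidate Steiner vertex and argues that every branching step either (i) terminates a path and makes no further charge, or (ii) splits into at most three subproblems, each of which removes at least one vertex from the pool, giving the amortised $3^{1/3}$ factor per vertex and the claimed $3^{(n-|Z|)/3}$ overall.

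The main obstacle I anticipate is twofold. First, calibrating the branching rule so that every local configuration of a partially built path admits a valid three-way split with the ``three vertices consumed per branch'' invariant is delicate, because a greedy extension along a neighbor might only cost one vertex per branch, which would blow the bound up to $2^{n-|Z|}$ instead of $3^{(n-|Z|)/3}$; a Fomin--Villanger-style enumeration (as in Lemma~\ref{lemma:main-no-conn-comps}) combined with extra structural invariants on degree-$2$ Steiner vertices should do the job. Second, avoiding double-counting is nontrivial: the same minimal $Z$-connector may arise from different choices of skeleton topology, so I would impose a canonical labelling (for instance, a lexicographic order on branching Steiner vertices and paths) and output a connector only when the current branch matches its canonical decomposition.
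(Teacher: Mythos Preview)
This proposition is cited from Telle and Villanger and is not proved in the present paper, so there is no in-paper proof to compare against. That said, your structural first phase is the standard one and is correct: a minimal $Z$-connector is spanned by a tree all of whose leaves lie in $Z$, hence it has at most $|Z|-2$ branching Steiner vertices, and guessing this set $B\subseteq V(H)\setminus Z$ accounts exactly for the $\binom{n-|Z|}{|Z|-2}$ factor.

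The genuine gap is in your second phase, and you have essentially identified it yourself. Extending a partial path one neighbour at a time does not give a $3^{1/3}$-per-vertex branching factor: at a vertex of degree $d$ you branch $d$ ways while consuming a single vertex, which yields $d^{m}$-type bounds over $m$ candidate vertices, not $3^{m/3}$. The Fomin--Villanger connected-set lemma you appeal to produces $2^{a+b}$-type bounds, which is again the wrong shape. Your proposed fix (``three vertices consumed per branch'') has no mechanism behind it; nothing about degree-$2$ Steiner vertices forces a three-way split that simultaneously deletes three candidates.

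The exponent $3^{(n-|Z|)/3}$ is the signature of the Moon--Moser bound on the number of maximal independent sets in an $m$-vertex graph, and that is indeed how Telle and Villanger obtain it: after fixing $Z\cup B$, they do \emph{not} branch on path extensions but instead reduce the enumeration of the remaining degree-two Steiner part to listing maximal independent sets in an auxiliary graph on the non-terminal vertices. Polynomial-delay enumeration of maximal independent sets then yields both the counting bound and the running time in one stroke, and also sidesteps the double-counting issue you raise. Replacing your second phase by this reduction to maximal independent sets is what is needed to close the argument.
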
 

In the following remark we state a criterion when we can directly conclude that the instance is a no-instance of \STDCS. The correctness of this remark will easily follow from the problem definition. (Henceforth we shall assume that the premise of the remark does not hold.)

\begin{remark}\label{rem:declare-no}
If there is an edge between $Z_1$ and $Z_2$, then conclude that $(G,Z_1,Z_2)$ is a no-instance of \STDCS. 
\end{remark}

A partition of $V(G)$ into three sets $(V_1,U,V_2)$ is a \emph{solution tri-partition} if the following conditions are satisfied: 1) for $i\in [2]$, $G[V_i]$ is connected and $Z_i\subseteq V_i$, 2) $G[U]$ is connected, and 3) $G-U$ has exactly two connected components, namely, $G[V_1]$ and $G[V_2]$. 

Now we will define a ``special solution'', which will be called an \emph{immovable tri-partition}, and we will show that if there is a solution, then there is also an immovable tri-partition. Our goal will be to find an immovable tri-partition, if it exists. Roughly speaking, an immovable tri-partition is a solution in which no vertex from $V_1 \cup V_2$ can be ``moved'' to the set $U$. 

\begin{definition}[Immovable tri-partition]\label{def:Immovable-Tri-partition} {\rm A solution tri-partition $(V_1, U, V_2)$ for $(G,Z_1,Z_2)$ of {\sc 3-DCS} is an \emph{immovable tri-partition}, if for every $i\in [2]$ and $v \in (V_i \setminus Z_i)\cap N(U)$ is a $Z_i$-separator in $G[V_i]$.}
\end{definition}


In the following claim we show that an immovable tri-partition exists for a yes-instance.

\begin{claim}\label{claim:add-prop} If $(G, Z_1, Z_2)$ is a yes-instance of \textsc{$3$-DCS}, then there is an immovable tri-partition.  
\end{claim}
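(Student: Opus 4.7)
The natural approach is an extremal-choice / exchange argument: among all solution tri-partitions, pick one where $|U|$ is maximum, and show that this choice forces immovability.

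The plan is as follows. Fix a solution tri-partition $(V_1,U,V_2)$ of $(G,Z_1,Z_2)$ maximizing $|U|$. Suppose for contradiction that it is not immovable, so by Definition~\ref{def:Immovable-Tri-partition} there exist $i \in [2]$ (say $i=1$ by symmetry) and a vertex $v \in (V_1 \setminus Z_1) \cap N(U)$ which is \emph{not} a $Z_1$-separator in $G[V_1]$. Since $v$ fails to be a $Z_1$-separator, the vertices of $Z_1$ all sit inside a single connected component $C$ of $G[V_1 - v]$; let $D_1,\dots,D_k$ (with $k \geq 0$) be the remaining connected components of $G[V_1 - v]$. I will build a new solution tri-partition $(V_1',U',V_2')$ by setting $V_1' := C$, $V_2' := V_2$, and $U' := U \cup \{v\} \cup D_1 \cup \dots \cup D_k$.

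The key step is verifying that $(V_1',U',V_2')$ really is a solution tri-partition. First, $G[V_1'] = G[C]$ is connected and contains $Z_1$ by definition of $C$, while $G[V_2'] = G[V_2]$ is unchanged. Second, $G[U']$ is connected: $v$ is adjacent to $U$ (since $v \in N(U)$), and each $D_\ell$, being a component of $G[V_1 - v]$ inside the connected graph $G[V_1]$, must have at least one edge to $v$ in $G$, so the whole set $U \cup \{v\} \cup \bigcup_\ell D_\ell$ is connected. Third, to see that $G - U'$ has exactly the two components $G[V_1']$ and $G[V_2']$, note that by the original solution there are no edges between $V_1$ and $V_2$ in $G$; hence $C \subseteq V_1$ has no neighbors in $V_2$, so $G[C \cup V_2]$ splits into precisely $G[C]$ and $G[V_2]$. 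All three conditions of a solution tri-partition are met.

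But $|U'| = |U| + 1 + \sum_\ell |D_\ell| > |U|$, contradicting the maximality of $|U|$. Therefore no such $v$ can exist, and $(V_1,U,V_2)$ is already an immovable tri-partition. The main subtlety I expect is simply book-keeping in the third verification — confirming that absorbing the ``stray'' components $D_\ell$ into $U$ does not accidentally merge $V_1'$ and $V_2'$ into one component of $G - U'$ — but this is immediate from the fact that the original $G - U$ had no $V_1$--$V_2$ edges, so no new adjacency is created by shrinking $V_1$ to $C$.
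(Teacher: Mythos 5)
Your proof is correct and takes essentially the same approach as the paper: the paper applies the same exchange step (absorb a non-separating boundary vertex $v$ and the components of $G[V_1]-v$ not containing $Z_1$ into $U$) as an iterative repair procedure that terminates because $|U|$ strictly grows, whereas you phrase it as an extremal-choice argument by fixing a solution maximizing $|U|$ up front; these are the same argument, and your verification of the three conditions is, if anything, slightly more explicit than the paper's.
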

\begin{proof} Let $(V_1, U, V_2)$ be a solution tri-partition of $V(G)$. If this is an immovable tri-partition then we are done. Otherwise, assume that there is $v\in (V_1 \setminus Z_1)\cap N(U)$, such that $v$ is not a $Z_1$-separator in $G[V_1]$. (The case when there is $v \in (V_2 \setminus Z_2)\cap N(U)$, such that $v$ is not a $Z_2$-separator in $G[V_2]$ can be handled analogously.) Let $C_1, C_2, \cdots, C_d$ be the connected components of $G[V_1] - v$, where $d\geq 1$. Since $v$ is not a $Z_1$-separator, we know that $Z_1$ is contained in one of the connected components. Let $C_1$ be the connected component which contains $Z_1$. Consider the tri-partition $(V_1^{\prime}, U^{\prime}, V_2)$ of $V(G)$ where $V_1^{\prime} = V(C_1) = V_1 \setminus (\{v\} \cup V(C_2) \cup \cdots \cup V(C_d))$ and $U^{\prime} =  U \cup \{v\} \cup C_2 \cup \cdots \cup C_d$. This tri-partition is also a solution partition as both $V_1^\prime = C_1$ and $U^\prime$ are connected and $V_1^\prime$ contains $Z_1$. Following the above procedure, for a given tri-partition we can either find a vertex to move from $V_1 \cup V_2$ to $U$ or conclude that it is an immovable tri-partition. 
\end{proof}


In the following claims we establish some useful properties regarding immovable tri-partitions. 

\begin{claim}\label{claim:connector-to-partition} Let $(G, Z_1, Z_2)$ be a \yes-instance of \textsc{$3$-DCS} and $(V_1, U, V_2)$ be an immovable tri-partition. Furthermore, let $S_1$ be a minimal $Z_1$-connector in $G[V_1]$ and $S_2$ be a minimal $Z_2$-connector in $G[V_2]$. Then, no connected component of $G[V_1] - S_1$ or $G[V_2] - S_2$ is adjacent to $U$.
\end{claim}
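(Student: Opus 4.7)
The plan is to proceed by contradiction. I would assume that some connected component $C$ of $G[V_1] - S_1$ is adjacent to $U$ (the case for $V_2$ and $S_2$ will be symmetric). Then I would pick a witness vertex $v \in V(C)$ with $v \in N(U)$. The goal is to show that $v$ satisfies all the prerequisites of the immovability hypothesis in Definition~\ref{def:Immovable-Tri-partition}, and then to exhibit a $Z_1$-avoiding-$v$ path that contradicts the conclusion of that hypothesis.

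First, I would verify that $v \in (V_1 \setminus Z_1) \cap N(U)$. Since $C$ is a component of $G[V_1] - S_1$, we have $V(C) \cap S_1 = \emptyset$, hence $v \notin S_1$; because $Z_1 \subseteq S_1$ (as $S_1$ is a $Z_1$-connector), we get $v \notin Z_1$, so $v \in V_1 \setminus Z_1$. Thus by immovability, $v$ is a $Z_1$-separator in $G[V_1]$, which yields (reading the definition of $Z$-separator) two vertices $z, z' \in Z_1$ that lie in distinct connected components of $G[V_1] - \{v\}$, i.e., every $z$-$z'$ path in $G[V_1]$ passes through $v$.

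To derive the contradiction, I would use minimality in a very light way: I only need that $S_1$ is a $Z_1$-connector, which gives $Z_1 \subseteq S_1$ and $G[S_1]$ connected. Hence $z, z' \in S_1$ and there is a $z$-$z'$ path $P$ in $G[S_1]$. Since $v \notin S_1$, the path $P$ avoids $v$, and since $S_1 \subseteq V_1$, $P$ is a $z$-$z'$ path in $G[V_1] - \{v\}$, directly contradicting that $v$ is a $Z_1$-separator in $G[V_1]$. The analogous argument with the roles of $V_1, Z_1, S_1$ replaced by $V_2, Z_2, S_2$ handles the other side, completing the proof.

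There is no real obstacle here; the only thing to be careful about is unpacking the definition of ``$Z$-separator'' into its path-based form so that the minimal connector $S_1$ can be used to exhibit a concrete bypass of $v$. Interestingly, the proof uses only that $S_1$ is a $Z_1$-connector, not that it is minimal, so the statement could be slightly strengthened; I will simply note this in passing rather than restate the lemma.
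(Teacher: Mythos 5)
Your proof is correct and takes essentially the same route as the paper's: both unfold the immovability condition into the $Z_1$-separator property and contradict it by noting that $Z_1 \subseteq S_1$, $G[S_1]$ is connected, and $v \notin S_1$, so $Z_1$ sits inside a single component of $G[V_1]-\{v\}$. Your side remark that only the ``$Z_1$-connector'' property (not minimality) is used is accurate; the paper's proof likewise invokes only connectedness of $G[S_1]$, so the hypothesis of minimality is indeed not load-bearing for this particular claim.
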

\begin{proof} Consider a connected component $C$ of $G[V_1] - S_1$. (The other case can be argued analogously.) As $S_1$ is $Z_1$-connector, $Z_1\subseteq S_1$ and $G[S_1]$ is connected. Since $(V_1, U, V_2)$ is an immovable tri-partition, no $v\in V(C)$ can be adjacent to a vertex in $U$, as $v$ is not a $Z_1$-separator (see Definition~\ref{def:Immovable-Tri-partition}). This concludes the proof. 
\end{proof}


As was mentioned earlier, we will construct an auxiliary graph, and relate a connector in the auxiliary graph to a solution for our instance. We now describe this auxiliary graph. Arbitrarily fix vertices $z_1\in Z_1$ and $z_2\in Z_2$. By $G'$, we denote the graph obtained from $G$ by adding the edge $z_1z_2$ to $G$. (From Remark~\ref{rem:declare-no} we know that there is no edge between $Z_1$ and $Z_2$ in $G$.) In the next claim we relate immovable tri-partitions of $G$ with minimal connectors in $G'$.

\begin{claim}\label{claim:special-connector} Let $(G, Z_1, Z_2)$ be a \yes-instance of \textsc{$3$-DCS} and let $(V_1,U,V_2)$ be an immovable tri-partition. Furthermore, let $S_1$ be a minimal $Z_1$-connector in $G[V_1]$ and $S_2$ be a minimal $Z_2$-connector in $G[V_2]$. Then, $S = S_1 \cup S_2$ is a minimal $(Z_1 \cup Z_2)$-connector in $G^{\prime}$.
\end{claim}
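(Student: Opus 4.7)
The plan is to verify the three defining properties of a minimal $(Z_1\cup Z_2)$-connector in $G'$: containment of $Z_1\cup Z_2$, connectivity of $G'[S]$, and minimality. The first property is immediate from $Z_i \subseteq S_i$ for $i\in[2]$. For connectivity, I would observe that $G[S_1]$ and $G[S_2]$ are each connected by hypothesis, and since $z_1\in Z_1\subseteq S_1$ and $z_2\in Z_2\subseteq S_2$, the edge $z_1z_2$ added in $G'$ bridges these two pieces, making $G'[S_1\cup S_2]$ connected.

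The main work is minimality. The key structural observation I would exploit is that, because $(V_1,U,V_2)$ is an immovable (in particular, solution) tri-partition, $G-U$ has exactly two components $G[V_1]$ and $G[V_2]$, so there are \emph{no} edges between $V_1$ and $V_2$ in $G$; consequently the unique edge of $G'$ with one endpoint in $V_1$ and one in $V_2$ is $z_1z_2$. Note also that $S\subseteq V_1\cup V_2$, so any $S'\subseteq S$ avoids $U$.

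With this in hand, suppose for contradiction that some $S'\subsetneq S$ is still a $(Z_1\cup Z_2)$-connector in $G'$. Set $S_i' = S'\cap V_i$. Then $Z_i\subseteq S_i'$ (since $Z_i\subseteq V_i$). Because the only $V_1$--$V_2$ edge in $G'$ is $z_1z_2$, connectivity of $G'[S']$ forces each $G'[S_i']$ to be connected; and since $S_i'\subseteq V_i$, we have $G'[S_i'] = G[S_i']$. Hence $S_i'$ is a $Z_i$-connector in $G[V_i]$ contained in $S_i$, so the minimality of $S_i$ yields $S_i'=S_i$ for each $i$, contradicting $S'\subsetneq S$.

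The only subtle point is making sure that no ``shortcut'' in $G'$ allows $S'$ to be connected without each $S_i'$ being connected on its own; but this is precisely handled by the observation that $z_1z_2$ is the sole cross-edge in $G'$ between $V_1$ and $V_2$, which in turn rests on the definition of the solution tri-partition. So the obstacle is really just to record this structural fact cleanly before invoking the minimality of $S_1$ and $S_2$.
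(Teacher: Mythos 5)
Your proof is correct and takes essentially the same route as the paper: both hinge on the observation that $z_1z_2$ is the unique $V_1$--$V_2$ edge in $G'$ (since $G-U$ has exactly two components $G[V_1],G[V_2]$), and then invoke the minimality of $S_1$ and $S_2$ to close the argument. A minor difference: the paper's minimality step considers only a single-vertex removal $S'=S\setminus\{v\}$ and derives a contradiction from the disconnection of $G[S_1\setminus\{v\}]$, implicitly relying on the (true but unstated) fact that single-vertex removals suffice to certify minimality of a connector; your version works directly with an arbitrary proper subset $S'\subsetneq S$, which is marginally cleaner since it avoids that implicit reduction.
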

\begin{proof}

We first argue that $S$ is a $(Z_1 \cup Z_2)$-connector in $G^{\prime}$. As $G^{\prime}[S_1]$ and $G^{\prime}[S_2]$ are connected and the edge $z_1z_2$ has one end point in $S_1$ and another in $S_2$, the graph $G^{\prime}[S]$ is connected. Since $S$ contains $Z_1 \cup Z_2$, it is a $(Z_1 \cup Z_2)$-connector. 

  It remains to argue that no proper subset of $S$ is a $(Z_1 \cup Z_2)$-connector. For the sake of contradiction, suppose that there is $v \in S$, such that $S'=S\setminus \{v\}$ is a $(Z_1 \cup Z_2)$-connector in $G^{\prime}$. We assume that $v\in S_1$. (The case when $v\in S_2$ can be argued symmetrically.) Let $S'_1 =S_1\setminus \{v\}$. Note that $G[S'_1]$ is not connected, as $S_1$ is a minimal $Z_1$-connector in $G[V_1]$. Let $C$ be a connected component of $G[S'_1]$ which does not contain $z_1$ (which exists as $G[S'_1]$ is not connected). Recall that $S'_1\subseteq V_1$, $S_2\subseteq V_2$, and $V_1\cap V_2=\emptyset$. Note that there can be no edge between $V(C)$ and $S_2$ in $G'$. This contradicts that $G'[S'_1 \cup S_2]$ is connected. This concludes the proof.  
%
%
\end{proof}





We say that a minimal $(Z_1 \cup Z_2)$-connector $S$ in $G^{\prime}$ is \emph{realized} by an immovable tri-partition $(V_1, U, V_2)$ of $G$ if $S$ can be partitioned into two sets, $S_1, S_2$, such that $S_1$ is a minimal $Z_1$-connector in $G[V_1]$ and $S_2$ is a minimal $Z_2$-connector in $G[V_2]$. Note that Claim~\ref{claim:special-connector} implies that every immovable tri-partition of $V(G)$ realizes at least one  minimal $(Z_1 \cup Z_2)$-connector in $G'$. 

\paragraph*{The Algorithm}
We are now ready to design our algorithm for \STDCS. Recall that $|Z_1 \cup Z_2| \le \delta n =0.092n$ and there is no edge between $Z_1$ and $Z_2$ (see Remark~\ref{rem:declare-no}). Let $Z=Z_1\cup Z_2$. Recall that $G'$ is the graph obtained from $G$ by adding the edge $z_1z_2$. Compute the set $\mcal{S}$ of all minimal $Z$-connectors in $G'$ using Proposition~\ref{prop:number-minimal-connector}. (The premise of the proposition is satisfied as $|Z| \leq 0.092n \leq n/3$.) We construct a set $\mcal{S}_{\sf rel} \subseteq \mcal{S}$ of \emph{relevant sets} as follows. Let $\mcal{S}_{\sf rel} = \{S \in \mcal{S} \mid G[S] \text{ has exactly two connected components } G[S_1] \text{ and } G[S_2], \text{ such that } Z_1 \subseteq S_1 \allowbreak \text{ and } Z_2 \subseteq S_2 \}$. 

Consider $S\in \mcal{S}_{\sf rel}$. Let $\mcal{C}_S$ be the set of connected components in $G-S$. Let $\mcal{C}_S^{\sf bth}$ be the set of components in $G-S$ that have a neighbor both in $S_1$ and $S_2$. That is, $\mcal{C}_S^{\sf bth}= \{C\in \mcal{C}_S \mid N(C) \cap S_1 \neq \emptyset \mbox{ and } N(C) \cap S_2 \neq \emptyset\}$.

Our algorithm does the following. If there is $S\in \mcal{S}_{\sf rel}$, such that $|\mcal{C}_S^{\sf bth}| = 1$, then the algorithm return that $(G,Z_1,Z_2)$ is a yes-instance of \STDCS. Otherwise it returns that the instance is a no-instance. 

In the following lemma we prove the correctness of the algorithm. 

\begin{figure}[t]
  \centering
  \includegraphics[scale=0.6]{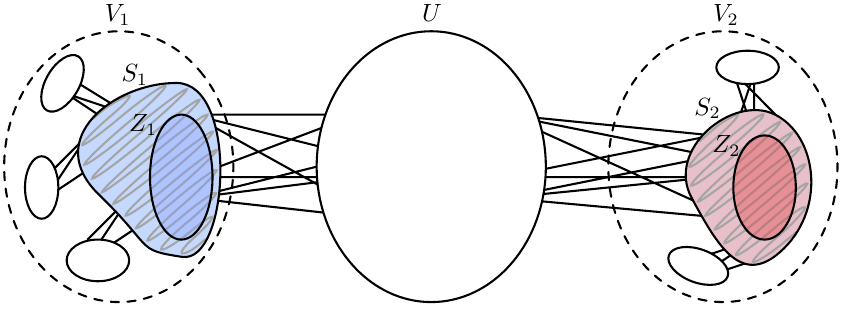}
  \caption{An illustration of various sets in the proof of forward direction of Lemma~\ref{lem:algo-correct-STDCS}. 
  \label{fig:stdcs1}}
\end{figure}

\begin{lemma}\label{lem:algo-correct-STDCS}
The algorithm presented for \STDCS\ is correct.
\end{lemma}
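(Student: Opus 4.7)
The plan is to prove both directions of the equivalence: if $(G,Z_1,Z_2)$ is a yes-instance of \STDCS\ then some $S \in \mcal{S}_{\sf rel}$ satisfies $|\mcal{C}_S^{\sf bth}| = 1$, and conversely any such $S$ yields a solution tri-partition for $(G,Z_1,Z_2)$. Together these imply the algorithm's correctness.

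\textbf{Forward direction.} Starting from a yes-instance, I would invoke Claim~\ref{claim:add-prop} to obtain an immovable tri-partition $(V_1, U, V_2)$. For $i \in [2]$, let $S_i$ be a minimal $Z_i$-connector in $G[V_i]$ and set $S = S_1 \cup S_2$. Claim~\ref{claim:special-connector} certifies that $S$ is a minimal $(Z_1 \cup Z_2)$-connector in $G'$, so $S \in \mcal{S}$. To check $S \in \mcal{S}_{\sf rel}$, note that $G[S_1]$ and $G[S_2]$ are connected, contain $Z_1$ and $Z_2$ respectively, and share no edge in $G$, because $S_i \subseteq V_i$ and there are no edges between $V_1$ and $V_2$ (they form two distinct components of $G - U$); the only edge joining the $S_i$'s lives in $G'$ and was added by hand. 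Hence $G[S]$ has exactly the two components $G[S_1]$ and $G[S_2]$. To obtain $|\mcal{C}_S^{\sf bth}| = 1$, I would describe the components of $G - S$ explicitly. Claim~\ref{claim:connector-to-partition} says that no component of $G[V_i] - S_i$ is adjacent to $U$, so each such component is also a component of $G - S$, and $G[U]$ itself is a component of $G - S$. Any component $D$ of $G[V_i] - S_i$ lies in $V_i$ and, again because no edge crosses $V_1$ and $V_2$, has neighbors only in $S_i$, not in $S_{3-i}$; hence $D \notin \mcal{C}_S^{\sf bth}$. On the other hand, since $G$ is connected and $U$ separates $V_1$ from $V_2$, $U$ must have at least one neighbor in each $V_i$; by Claim~\ref{claim:connector-to-partition} this neighbor must lie in $S_i$. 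Therefore $U$ is adjacent to both $S_1$ and $S_2$, and $\mcal{C}_S^{\sf bth} = \{U\}$, as required.

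\textbf{Backward direction.} Given $S \in \mcal{S}_{\sf rel}$ with unique component $C \in \mcal{C}_S^{\sf bth}$, I would build a tri-partition by setting $U = V(C)$ and distributing the remaining components of $G - S$ according to which side of $S$ they touch: $V_1$ consists of $S_1$ together with every component of $G - S$ whose neighborhood in $S$ lies entirely in $S_1$, and $V_2$ is defined symmetrically. Since $G$ is connected, every component of $G - S$ has at least one neighbor in $S$, and by the uniqueness of $C$ every component other than $C$ touches exactly one of $S_1, S_2$; so $(V_1, U, V_2)$ is indeed a partition of $V(G)$. Each component attached to $V_1$ is connected and meets the connected set $S_1$, so $G[V_1]$ is connected and contains $Z_1 \subseteq S_1$; likewise for $V_2$. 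The graph $G[U]$ is connected since $C$ is a connected component of $G-S$. Finally, $G - U$ splits into $G[V_1]$ and $G[V_2]$ with no crossing edges, since $G[S]$ has no $S_1$-to-$S_2$ edge, no component placed in $V_1$ sends an edge to $S_2$ or to a component placed in $V_2$ (these would be distinct components of $G - S$), and symmetrically for $V_2$. Hence all conditions of a solution tri-partition are met.

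\textbf{Main obstacle.} The delicate point is the forward direction, specifically verifying simultaneously that $U$ is itself a component of $G - S$ and that it is adjacent to both $S_1$ and $S_2$. Both facts rely crucially on Claim~\ref{claim:connector-to-partition}, which in turn is powered by the immovability of $(V_1, U, V_2)$: any vertex in $V_i \setminus S_i$ that were adjacent to $U$ would have to be a $Z_i$-separator in $G[V_i]$, yet $G[S_i]$ remains connected after its removal and still contains all of $Z_i$. This is precisely what prevents $U$ from merging with a component of $G[V_i] - S_i$ and at the same time forces every $U$-to-$V_i$ edge to terminate in $S_i$, certifying both required properties in one stroke.
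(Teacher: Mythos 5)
Your proof is correct and follows essentially the same structure as the paper's: forward direction via the immovable tri-partition from Claim~\ref{claim:add-prop} and the minimal-connector correspondence from Claim~\ref{claim:special-connector}, backward direction by partitioning the components of $G-S$ around the unique component of $\mcal{C}_S^{\sf bth}$. The only difference is organizational: in the forward direction you explicitly invoke Claim~\ref{claim:connector-to-partition} to conclude that $G[U]$ is itself a component of $G-S$ and that every $U$-to-$V_i$ edge lands in $S_i$, whereas the paper rederives the same consequence of immovability inline by exhibiting a vertex in $(V_i\setminus Z_i)\cap N(U)$ that cannot be a $Z_i$-separator; your version is arguably cleaner since it actually uses the claim the paper states but never cites.
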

\begin{proof}
In the forward direction, let $(G,Z_1,Z_2)$ be a yes-instance of \STDCS. We will show that there is $S\in \mcal{S}_{\sf rel}$, such that $|\mcal{C}_S^{\sf bth}| = 1$. Consider an immovable tri-partition $(V_1,U,V_2)$ for the instance (its existence is guaranteed by Claim~\ref{claim:add-prop}). Note that for $i\in [2]$, $Z_i \subseteq V_i$ and $G[V_i]$ is connected. Furthermore, $G[U]$ is connected and there are exactly two connected components in $G-U$, namely, $G[V_1]$ and $G[V_2]$.  
For $i\in [2]$, as $G[V_i]$ is connected, there is a minimal $Z_i$-connector $S_i$, in $G[V_i]$. Let $S=S_1\cup S_2$ (see Figure~\ref{fig:stdcs1}). From Claim~\ref{claim:special-connector}, $S$ is a minimal $Z$-connector in $G'$, where $Z=Z_1\cup Z_2$. Thus, $S \in \mcal{S}$. Note that $G[S]$ has exactly two connected components, namely, $G[S_1]$ and $G[S_2]$, and thus, $S\in \mcal{S}_{\sf rel}$. Recall $\mcal{C}_S$ denotes the set of connected components in $G-S$ and $\mcal{C}_S^{\sf bth}$ denotes the set of connected components in $G-S$ which have neighbors both in $S_1$ and $S_2$. To obtain the proof we will show that $\mcal{C}_S^{\sf bth} = \{G[U]\}$. Since $G[U]$ is connected and $S\cap U=\emptyset$, there is a component $C\in \mcal{C}_S$, such that $U\subseteq V(C)$. We first show that $V(C)\setminus U=\emptyset$. Towards a contradiction, assume that $V(C)\setminus U\neq \emptyset$. Then $V(C) \cap (V_1\cup V_2) \neq \emptyset$. Suppose that $V(C) \cap V_1 \neq\emptyset$. (The case when $V(C) \cap V_2 \neq\emptyset$ can be argued analogously.) Then there is a vertex $v \in V(C) \cap V_1$, such that $v\in (V_1\setminus Z_1)\cap N(U)$. From the above we can contradict the fact that $(V_1,U,V_2)$ is an immovable-tripartition. Thus we conclude that $V(C)=U$. Note that $U$ is adjacent to both $V_1$ and $V_2$ in $G$ and $G[U]$ is a connected component of $G-S$. Hence, $\emptyset \subset N(U) \cap V_1\subseteq S_1$ and $\emptyset \subset N(U)\cap V_2\subseteq S_2$. Thus, $G[U]\in \mcal{C}^{\sf bth}_S$. As no vertex in $V_1\cup V_2$ can be adjacent to both $S_1$ and $S_2$, we conclude that $\{G[U] \} = \mcal{C}^{\sf bth}_S$. 

In the reverse direction, assume that there is $S\in \mcal{S}_{\sf rel}$, such that $|\mcal{C}_S^{\sf bth}|=1$. We will construct a solution $(V_1,U,V_2)$ for the instance $(G,Z_1,Z_2)$, and hence establish that the instance is a yes-instance of \STDCS. Let $C^*$ be the unique connected component in $\mcal{C}_S^{\sf bth}$. As $S\in \mcal{S}_{\sf rel}$, $G[S]$ has exactly two connected components, $C_1$ and $C_2$, such that $Z_1\subseteq V(C_1)$ and $Z_2\subseteq V(C_2)$. For $i\in [2]$, let $\mcal{C}_i$ be the set of connected components different from $C^*$ that have a neighbor in $S_i$. Note that $\mcal{C}_1 \cup \mcal{C}_2 \cup \{C^*\} = \mcal{C}_S$ (and they are pairwise disjoint). Let $V_1= V(C_1) \cup (\bigcup_{C\in \mcal{C}_1} V(C))$, $V_2= V(C_2) \cup (\bigcup_{C\in \mcal{C}_2} V(C))$, and $U=V(C^*)$. Note that $(V_1,U,V_2)$ is a partition of $V(G)$, for each $i\in [2]$, $G[V_i]$ is connected and $Z_i\subseteq V_i$, $G[U]$ is connected, and $G-U$ has exactly two connected components namely, $G[V_1]$ and $G[V_2]$. (In the above we rely on the connectedness of $G$.) Hence, $(G,Z_1,Z_2)$ is a yes-instance of \STDCS. 
\end{proof}

\begin{proof}[Proof of Lemma~\ref{lemma:exact-small-terminal}] From Lemma~\ref{lem:algo-correct-STDCS}, we know that the algorithm presented for \STDCS\ is correct. Thus to establish the proof of the lemma, it is enough to argue that the algorithm presented for \STDCS\ runs in time $\calO^{\star}(1.88^n)$. The only step of the algorithm that requires exponential time is the computation of set $\mcal{S}$ of all minimal $Z$-connectors in $G'$ where we use Proposition~\ref{prop:number-minimal-connector}. As $|Z|=|Z_1\cup Z_2| \leq \delta n=0.092n$, the time required to compute $\mcal{S}$ is bounded by $\mathcal{O}^{\star}(\binom{n - |Z_1 \cup Z_2|}{|Z_1 \cup Z_2| - 2} \cdot 3^{(n - |Z_1 \cup Z_2|)/3})$, which is bounded by $\calO^{\star}(\binom{(1 - \delta)n}{\delta n} \cdot 3^{(1 - \delta)n/3})$. Using Observation~\ref{obs:subset-no}, we can bound the above by $\calO^{\star}((\frac{(1 - \delta)^{(1 - \delta)}}{\delta^{\delta}\cdot {(1 - 2\delta)}^{(1 - 2\delta)}})^n)$. Using a computer program we obtained that $\delta \sim 0.092$ would be the value for which the overall running time of \textsc{3-DCS} is optimized. Thus, the running time of our algorithm is bounded by $\calO^{\star}(1.88^n)$. 
\end{proof}

\section{Exact Algorithm for {\sc Path Contraction}}
\label{sec:exact-algo}
In this section we design our algorithm for {\sc Path Contraction} running in time $\mathcal{O}^{\star}(1.99987^n)$, where $n$ is the number of vertices in the input graph. To design our algorithm, we design four different subroutines each solving the problem {\sc Path Contraction}. Each of these subroutines is better than the other when a specific ``type'' of solution exists for the input instance. Thus the main algorithm will use these subroutines to search for solutions of the type they are the best for. We also design a subroutine for enumerating special types of partial solutions, which will be used in some of our algorithms for {\sc Path Contraction}. 

In the following we briefly explain the four subroutines and describe when they are useful. Let $G$ be an instance for {\sc Path Contraction}, where $G$ is a graph on $n$ vertices. Let $t$ be the largest integer (which we do not know a priori), such that $G$ is contractible to $P_t$ with $(W_1,W_2,\cdots, W_t)$ as a $P_t$-witness structure of $G$. We let $\os$ and $\es$ be the union of vertices in odd and even witness sets, respectively. That is, $\os = \bigcup_{x=1}^{\lceil t/2 \rceil} W_{2x - 1} \text{ and } \es = \bigcup_{x=1}^{\lfloor t/2 \rfloor} W_{2x}$. 

We now give an intuitive idea of the purposes of each of our subroutines in the main algorithm, while deferring their implementations to the subsequent sections. We also describe a subroutine which will help us build ``partial solutions'', and this subroutine will be used in two of our subroutines for {\sc Path Contraction}. (We refer the reader to Figure~\ref{fig:algo-subroutine} for an illustration of it.) 

\begin{figure}[t]
  \centering
  \includegraphics[scale=0.5]{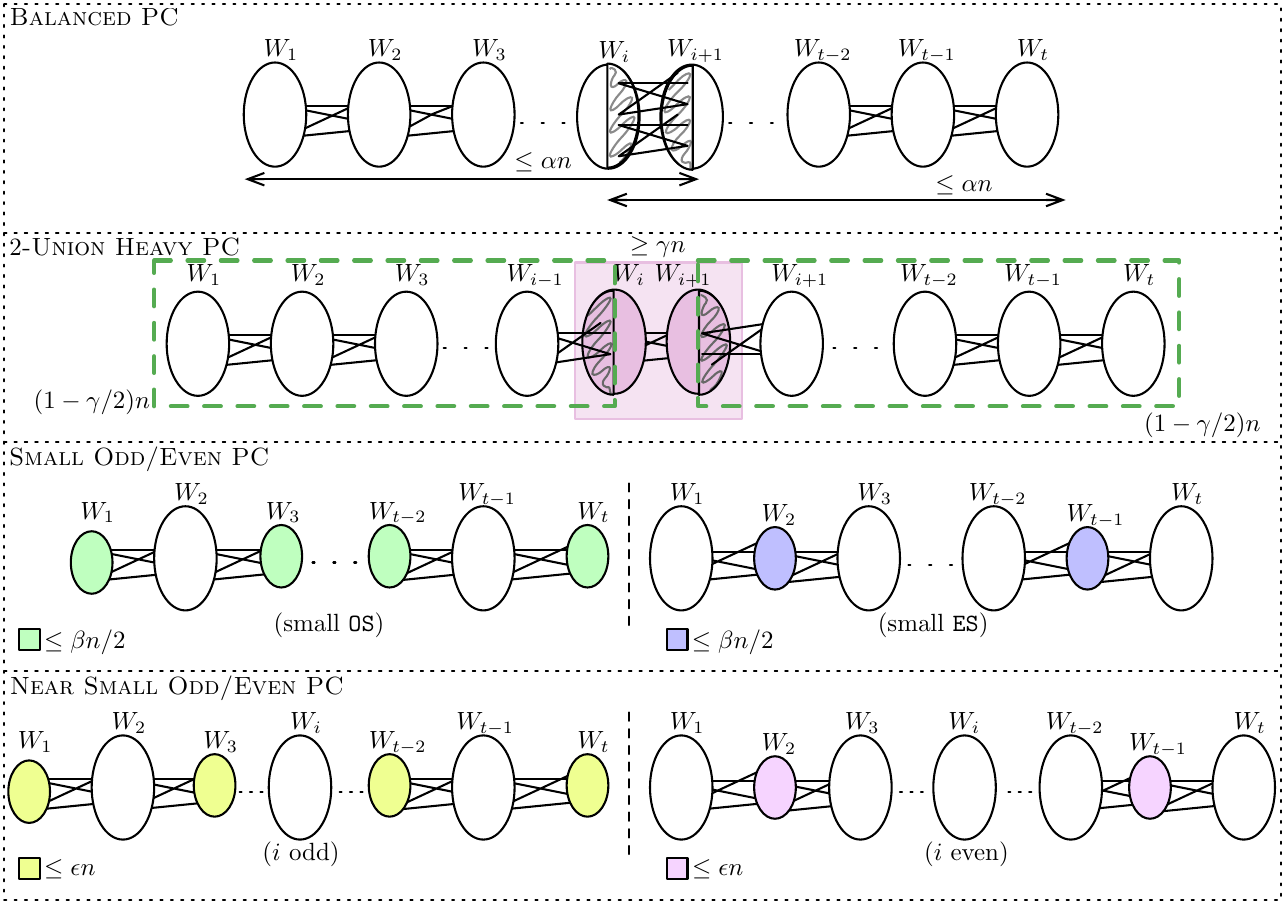}
  \caption{Various subroutines for the algorithm and their usage.}
  \label{fig:algo-subroutine}
\end{figure}

\subparagraph*{\BPC} This subroutine is useful when we can ``break'' the graph into two parts after a witness set, such that the closed neighborhood for each of the parts has small size, or in other words, the parts are ``balanced''. The quantification of the ``balancedness'' after a witness set will be done with the help of a rational number $0< \alpha \leq 1$, which will be part of the input for the subroutine. The subroutine will only look for those $P_t$-witness structures for $G$ for which there is an integer $i\in [t]$, such that the sizes of both $N[\bigcup_{j\in [i]}W_j]$ and $N[\bigcup_{j\in [t]\setminus [i]}W_j]$ are bounded by $\alpha n$. Moreover, the algorithm will return the largest such $t$. Our algorithm for \BPC\ will run in time $\mcal{O}^\star(2^{\alpha n})$. Note that when $\alpha=1$, \BPC\ is an algorithm for {\sc Path Contraction} running in time $\mcal{O}^\star(2^n)$.

\subparagraph*{\TDCPC} This subroutine will be used when a ``large'' part of the graph is concentrated in two consecutive witness sets and the neighborhood of the rest of the graph into them is ``small''. The quantification of term ``large/small'' will be done by a a fraction $0< \gamma <1$, which will be part of the input. The algorithm will only search for those $P_t$-witness structure of $G$ where there is an integer $i\in [t-1]$, such that $|W_i\cup W_{i+1}| \geq \gamma n$, and $|N[\bigcup_{j\in [i-1]}W_j]|, |N[\bigcup_{j\in [t]\setminus [i+1]}W_j]| \leq (1-\gamma/2) n$. Moreover, the algorithm will return largest such $t$. 


\subparagraph*{\SOEPC} Roughly speaking, this subroutine is particularly useful when one of $\os$ or $\es$ is ``small''. The ``smallness'' of $\os/\es$ is quantified by a rational number $0< \beta \leq 1$, which will be part of the input. The subroutine will only look for those $P_t$-witness structures for $G$ where one of $|\os|\leq \beta n/2$ or $|\es|\leq \beta n/2$ holds. Moreover, the algorithm will return the largest integer $t\geq 1$, for which such a $P_t$-witness structure for $G$ exists. \SOEPC\ will run in time $\mcal{O}^\star(c^n)$, where $c=g(\beta/2)$. We note that when $\beta=1$, then one of $|\os|\leq \beta n/2$ or $|\es|\leq \beta n/2$ definitely holds. Thus, for $\beta=1$, \SOEPC\ is an algorithm for {\sc Path Contraction} running in time $\mcal{O}^\star(2^n)$ (see Observation~\ref{obs:subset-no}).  

\subparagraph*{\NSOEPC} In the case when both $\os$ and $\es$ are ``large'', it may be the case that for one of $\os/\es$, there is just one witness set which is large. That is, when we remove this large witness set, then one of $\os/\es$ becomes ``small''. The ``smallness'' of the remaining \os/\es\ (after removing a witness set) will be quantified by a rational number $0< \epsilon \leq 1$, which will be part of the input. The subroutine will only look for those $P_t$-witness structures for $G$ where the size of one of $|\os|$ or $|\es|$ after removal of a witness set is bounded by $\epsilon n$. Moreover, the algorithm will return the largest such $t$.

Our subroutines \BPC\ and \TDCPC\ use a subroutine called \EPPC\ for enumerating solutions for ``small'' subgraphs. The efficiency of the algorithm for \EPPC\ is centered around the bounds for $(Q,a,b)$-connected sets. In Section~\ref{sub-sec:eppc} we (define and) design an algorithm for \EPPC. In Sections~\ref{sub-sec:dp},~\ref{sub-sec:dis-conn},~\ref{sub-sec:enum} and~\ref{sub-sec:dis-conn-3} we present our algorithms for \BPC, \TDCPC, \SOEPC, and \NSOEPC, respectively. Finally, in Section~\ref{sub-sec:algo} we show how we can use the above algorithms to obtain an algorithm for {\sc Path Contraction}, running in time $\calO^\star(\runtime^n)$.

\subsection{Algorithm for \EPPC}\label{sub-sec:eppc}

In this section, we describe an algorithm which computes a ``nice solution'' for all ``$\rho$-small'' subset of vertices of an input graph. In an input graph $G$, for a set $S\subseteq V(G)$, by $\Phi(S)$ we denote the set of vertices in $S$ that have a neighbor outside $S$. That is, $\Phi(S)= \{s\in S \mid N(s)\setminus S \neq\emptyset\}$. A set $S\subseteq V(G)$ is $\rho$-\emph{small} if $N[S] \leq \rho n$.
For an $\rho$-small set $S\subseteq V(G)$, the largest integer $t_S$ is called the \emph{nice solution} if $G[S]$ is contractible to $P_{t_S}$ with all the vertices in $\Phi(S)$ in the end bag. That is, there is a $P_{t_S}$-witness structure $(W_1,W_2,\cdots W_{t_S})$ of $G[S]$, such that $\Phi(S) \subseteq W_{t_S}$.
We formally define the problem \EPPC\ in the following way. 

\defproblemout{\EPPC}{A graph $G$ on $n$ vertices and a fraction $0< \rho \leq 1$.}{A table $\Gamma$ which is indexed by $\rho$-\emph{small} sets. For any $\rho$-\emph{small} set $S$,  $\Gamma[S]$ is the largest integer $t$ for which $G[S]$ has a $P_t$-witness structure $\mcal{W}=(W_1,W_2,$ $\cdots, W_t)$, such that $\Phi(S) \subseteq W_t$.
}
\vspace{0.2cm}

We design an algorithm for \EPPC\ running in time $\mcal{O}^\star(2^{\rho n})$.
We briefly explain how we can compute nice solutions for $\rho$-small set. Consider a $\rho$-small set $S$. Note that $|S| \leq \rho n$. Thus, by the method of $2$-coloring (as was explained in the introduction), we can obtain the nice solution in time $2^{\rho n}$. This would lead us to an algorithm running in time $\mcal{O}^\star(2^{\rho n}g(\rho)^n)$ (See Inequality~(\ref{eq:enumeration-over-subsets})). By doing a simple dynamic programming we can also obtain an algorithm running in time $\mcal{O}^\star(3^{n})$. We will improve upon these algorithms by a dynamic programming algorithm where we update the values ``forward'' instead of looking ``backward''. 

\begin{algorithm}
  \caption{Algorithm for \EPPC}
  \label{method:partial-EPPC}
  \begin{algorithmic}[1]
     \REQUIRE {{A graph $G$ and a fraction $0<\rho \leq 1$.}}
    \ENSURE {{A table $\Gamma$ such that for every $\rho$-small set $S$, $\Gamma[S]$ is the largest integer $q$ for which $G[S]$ can be contracted to $P_q$ with $\Phi(S)$ is in the end bag.}}
    \vspace{0.1cm}
    \STATE{Compute $\mathcal{S} =\{S\subseteq V(G) \mid G[S] \mbox{ is connected and } |N[S]| \le \rho n\}$ (Lemma~\ref{lemma:no-conn-comps}) \label{step:enum-rho-small}}
    \FOR{$S \in \mathcal{S}$ \label{step:for-start}}
    \STATE{Initialize $\Gamma[S] = 1$}
    \ENDFOR
    \FOR{$S \in \mathcal{S}$ (in increasing order of their sizes) \label{step:big-for-start}}
		\FOR{every pair $(a, b)$ of positive integers s.t. $|S| + a + b \le \rho n$ and $|N(S)| \leq a$}
    	\STATE{ Compute $\calA_{a, b}[S] = \{A\subseteq V(G-S) \mid G-S[A] \mbox{ is connected, } N_G(S)\subseteq A, |A|=a, \mbox{ and } |N_{G-S}(A)| = b\}$, using Lemma~\ref{lemma:main-no-conn-comps}}
		    \FOR{$A \in \calA_{a, b}[S]$}
		    \STATE{$\Gamma[S \cup A] = \max\{\Gamma[S \cup A], \Gamma[S] + 1\}$  \label{step:recurrence}}
    	    \ENDFOR
    	\ENDFOR
    \ENDFOR
	\RETURN{$\Gamma$}
  \end{algorithmic}
\end{algorithm}

\subparagraph{The Algorithm} 
We start by defining the tables entries for our dynamic programming routine, which is used for computation of nice solutions. (The pseudo code for our algorithm is presented in Algorithm~\ref{method:partial-EPPC}.) Let $\mathcal{S}$ be the set of connected $\rho$-small sets. 
That is, $\mathcal{S} =\{S\subseteq V(G) \mid G[S] \mbox{ is connected and } |N[S]| \le \rho n\}$. For each $S\in \mcal{S}$, there is an entry, denoted by $\Gamma[S]$, in the table which stores a nice solution for $S$.
In other words, $\Gamma[S]$ is the largest integer $q \geq 1$ for which $G[S]$ can be contracted to $P_q$ with a $P_q$-witness structure $\mcal{W}=(W_1,W_2,$ $\cdots, W_q)$ of $G[S]$, such that $\Phi(S) \subseteq W_q$. The algorithm starts by initializing $\Gamma[S]=1$, for each $S\in \calS$. 

In the following we introduce some notations that will be useful in stating the algorithm. Consider $S \in \mcal{S}$. We will define a set $\calA[S]$, which will be the set of all ``potential extenders bags'' for $S$, when we look at contraction to paths for larger graphs (containing $S$). For the sake of notational simplicity, we will define $\calA_{a,b}[S] \subseteq \calA[S]$, where the sets in $\calA_{a,b}[S]$ will be of size exactly $a$ and will have exactly $b$ neighbors outside $S$. We will define the above sets only for ``relevant'' $a$s and $b$s. We now move to the formal description of these sets. Consider $S\in \mcal{S}$ and integers $a,b$, such that $|S|+a+b \leq \rho n$ and $|N(S)| \leq b$. We let $\calA_{a, b}[S] = \{A\subseteq V(G-S) \mid G-S[A] \mbox{ is connected, } N_G(S)\subseteq A, |A|=a, \mbox{ and } |N_{G-S}(A)| = b\}$. 

The algorithm now computes nice solutions. The algorithm considers sets from $S\in \calS$, in increasing order of their sizes and does the following. (Two sets that have the same size can be considered in any order.) For every pair of integers $a,b$, such that $|S|+a+b \leq \rho n$ and $|N(S)| \leq b$, it computes the set $\calA_{a, b}[S]$. Note that $\calA_{a, b}[S]$ can be computed in time $\calO^\star(2^{a+b-|S|})$, using Lemma~\ref{lemma:main-no-conn-comps}. Now the algorithm considers $A \in \calA_{a, b}[S]$. Intuitively speaking, $A$ is the ''new'' witness set to be ``appended'' to the witness structure of $G[S]$, to obtain a witness structure for $G[S\cup A]$. Thus, the algorithm sets $\Gamma[S\cup A] = \max\{\Gamma[S\cup A], \Gamma[S]+1\}$. This finishes the description of our algorithm. 

In the following few lemmas we establish the correctness and runtime analysis of the algorithm. 

\begin{lemma}\label{lem:correct-EPPC}
For each $S\in \calS$, the algorithm computes $\Gamma[S]$ correctly. 
\end{lemma}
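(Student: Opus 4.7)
The plan is to prove the lemma by strong induction on $|S|$, where for each $S \in \calS$ we let $\mathrm{opt}(S)$ denote the largest integer $q$ such that $G[S]$ has a $P_q$-witness structure $(W_1, \ldots, W_q)$ with $\Phi(S) \subseteq W_q$. The base case $|S| = 1$ is immediate: the initialization $\Gamma[S] = 1$ is both achieved (the trivial witness structure) and optimal. For the inductive step we establish the two directions $\Gamma[S] \le \mathrm{opt}(S)$ and $\Gamma[S] \ge \mathrm{opt}(S)$ separately.

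For the upper bound, I would argue that every value $\Gamma[S]$ is assigned corresponds to an actual nice solution. If $\Gamma[S] = 1$ this is trivial, otherwise the value was produced at Step~\ref{step:recurrence} from some $S' \in \calS$ with $|S'| < |S|$ and some $A \in \calA_{a,b}[S']$ satisfying $S = S' \cup A$ and $\Gamma[S] = \Gamma[S'] + 1$. The induction hypothesis gives a witness structure $(W_1, \ldots, W_{\Gamma[S']})$ of $G[S']$ with $\Phi(S') \subseteq W_{\Gamma[S']}$; I would then verify that appending $A$ yields a valid $P_{\Gamma[S']+1}$-witness structure of $G[S]$. The key points to check are: (i) $W_{\Gamma[S']}$ is adjacent to $A$, since every vertex of $N_G(S') \subseteq A$ must be adjacent to a vertex of $S'$ and that vertex, having an external neighbor, lies in $\Phi(S') \subseteq W_{\Gamma[S']}$; (ii) no $W_i$ with $i < \Gamma[S']$ is adjacent to $A$, because no such $W_i$ contains a vertex of $\Phi(S')$ and so has no edge leaving $S'$; and (iii) $\Phi(S) \subseteq A$, since any $v \in S'$ with a neighbor in $V(G) \setminus S$ would have that neighbor neither in $A$ nor in $S'$, but $A \supseteq N_G(S')$ forces all of $v$'s external (to $S'$) neighbors into $A \subseteq S$.

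For the lower bound, let $q^* = \mathrm{opt}(S)$ and fix a witness structure $(W_1, \ldots, W_{q^*})$ with $\Phi(S) \subseteq W_{q^*}$; the interesting case is $q^* \ge 2$. Set $S' = S \setminus W_{q^*}$ and $A = W_{q^*}$. The plan is to show that $S'$ is processed at the outer loop (Step~\ref{step:big-for-start}) strictly before $S$ and that $A$ arises in the corresponding $\calA_{a,b}[S']$, so that the update at Step~\ref{step:recurrence} forces $\Gamma[S] \ge \Gamma[S'] + 1$. Specifically: $G[S']$ is connected (as $W_1, \ldots, W_{q^*-1}$ is a path witness structure of it), and using $\Phi(S) \subseteq W_{q^*}$ one shows $N_G(S') \subseteq W_{q^*} = A$, which gives $N[S'] \subseteq S$ and hence $|N[S']| \le |N[S]| \le \rho n$, so $S' \in \calS$. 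The same containment yields $|N_G(S')| \le |A|$ and $|S'| + |A| + |N_{G-S'}(A)| \le |S| + |N(S)| \le \rho n$, meeting the loop conditions. An analogous argument — a vertex in $W_i$ ($i \le q^*-2$) has all $G$-neighbors inside $S'$ — shows $\Phi(S') \subseteq W_{q^*-1}$, so $(W_1, \ldots, W_{q^*-1})$ witnesses $\mathrm{opt}(S') \ge q^*-1$, and by induction $\Gamma[S'] \ge q^* - 1$, yielding $\Gamma[S] \ge q^*$.

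The main obstacle, and the place I expect to spend most care, is tracking the interaction between "external" neighborhoods with respect to $S$, $S'$, and $G$: almost every verification above rests on the fact that $\Phi(S) \subseteq W_{q^*}$ prevents any vertex in $W_1 \cup \cdots \cup W_{q^*-1}$ from having a neighbor outside $S$, which is exactly what forces $N_G(S') \subseteq W_{q^*}$ and propagates the "nice" (end-bag) property to $S'$. Once this bookkeeping is done cleanly, both directions of the induction fall out, establishing the lemma.
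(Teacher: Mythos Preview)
Your proposal is correct and follows essentially the same approach as the paper: strong induction on $|S|$, with the two directions $\Gamma[S]\le\mathrm{opt}(S)$ and $\Gamma[S]\ge\mathrm{opt}(S)$ handled by, respectively, unwinding the last update via some $S'\subsetneq S$ and $A\in\calA_{a,b}[S']$, and peeling off the last bag $W_{q^*}$ of an optimal witness structure to exhibit such an $S'$ and $A$. If anything, your write-up is more careful than the paper's about the neighborhood bookkeeping (your checks (i)--(iii) and the verification that $N_G(S')\subseteq W_{q^*}$), which the paper leaves largely implicit.
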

\begin{proof}
We prove the statement by induction on the size of sets in $\calS$. The base case is for sets of sizes $1$. That is, for the base case we show that for each $S\in \calS$, such that $|S|=1$, the algorithm computes $\Gamma[S]$ correctly. Consider a set $S\in \calS$ of size $1$. Note that in this case, $\Gamma[S]$ must be equal to $1$. At Step~\ref{step:for-start}, the algorithm initializes $\Gamma[S']=1$, for each $S'\in \calS$. Note that no other step of the algorithm modifies the value of $\Gamma[S]$ (as $|S|=1$). Thus, the algorithm correctly computes $\Gamma[S]$. 

For the induction hypothesis, we assume that the algorithm computes $\Gamma[S']$ correctly, for each $S'\in \calS$, such that $|S'| \leq r$. We will now argue that the computation of $\Gamma[\cdot]$ for sets of size $r+1$ are correct. Consider $S\in \calS$, such that $|S| = r+1$. Let $q_{\sf opt}$ be the nice solution for $S$ and $q_{\sf out}$ be the value of $\Gamma[S]$ computed by the algorithm. We will show that $q_{\sf out} = q_{\sf opt}$. 

Firstly, we show that $q_{\sf out} \geq  q_{\sf opt}$. Consider a $P_{q_{\sf opt}}$-witness structure $\calW = (W_1, W_2,\allowbreak \cdots, W_{q_{\sf opt}})$ of $G[S]$, such that $\Phi(S) \subseteq W_{q_{\sf opt}}$. Note that $q_{\sf out}\geq 1$, thus if $q_{\sf opt}=1$, then $q_{\sf out} \geq  q_{\sf opt}$ trivially holds. Now we consider the case when $q_{\sf opt} \geq 2$. Let $\what S = S\setminus W_{q_{\sf opt}}$, $a= |W_{q_{\sf opt}}|$, and $b = |N(W_{q_{\sf opt}})|$. As $S\in \calS$, we have $|N[S]| \leq \rho n$. Thus, $|\what S| +a + b \leq \rho n$. Since $\calW$ is a $P_{q_{\sf opt}}$-witness structure of $G[S]$, we have $N(\what S) \subseteq W_{q_{\sf opt}}$ and thus, $a \geq |N(\what S)|$. Also, $\emptyset \neq \what S \in \calW$. (In the above we rely on the fact that $q_{\sf opt} \geq 2$.) From the above discussions we can conclude that the set $\calA_{a,b}[\what S]$ is well defined, and $W_{q_{\sf opt}} \in \calA_{a,b}[\what S]$. By the induction hypothesis, we known that $\Gamma[\what S] \geq q_{\sf opt} -1$ is correctly computed. Thus, at Step~\ref{step:recurrence}, the algorithm sets $q_{\sf out}= \Gamma[S] \geq q_{\sf opt}$. Hence, we conclude that $q_{\sf out} \geq  q_{\sf opt}$.  

Next, we show that $q_{\sf out} \leq  q_{\sf opt}$. Note that $q_{\sf opt}\geq 1$. Thus, if $q_{\sf out}=1$, then the claim is trivially satisfied. We next consider the case when $q_{\sf out} \geq 2$. As $q_{\sf out} \geq 2$, there is a set $\what S \in \cal S$, with $\what S \subset S$ ($\what S\neq S$) and integers $a,b$, such that $|\what S| + a + b \leq \rho n$, $|N(\what S)| \leq a$, such that $A = S\setminus \what S \in \calA_{a,b}[\what S]$ and $q_{\sf out} = \Gamma[\what S] +1$. By induction hypothesis, $\Gamma[\what S]$ is computed correctly. Thus, there is a $P_{q_{\sf out}-1}$-witness structure $\calW'=(W_1, W_2, \cdots W_{q_{\sf out}-1})$ of $G[\what S]$, such that $\Phi(\what S) \subseteq W_{q_{\sf out}-1}$. But then, $\calW=(W_1, W_2, \cdots W_{q_{\sf out}-1}, A)$ is a $P_{\sf out}$-witness structure of $G[S]$. Thus, $q_{\sf out} \leq q_{\sf opt}$. This concludes the proof. 
\end{proof}

\begin{lemma}\label{lem:correct-runtime-EPPC}
\sloppy The algorithm presented for \EPPC\ runs in time $\calO^\star(2^{\rho n})$.  
\end{lemma}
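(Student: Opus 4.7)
The plan is to bound the running time by separately analyzing Step~\ref{step:enum-rho-small} and the nested loop starting at Step~\ref{step:big-for-start}, and then showing that a double sum over connected sets collapses to $\calO^\star(2^{\rho n})$ by a charging argument based on the closed-neighborhood size.

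For Step~\ref{step:enum-rho-small}, I would enumerate $\calS$ by iterating over all pairs $(a,b)$ with $a + b \le \rho n$ and invoking Lemma~\ref{lemma:no-conn-comps} to list $(a,b)$-connected sets. Each call costs $2^{a+b}\cdot n^{\calO(1)} \le 2^{\rho n}\cdot n^{\calO(1)}$ time, and there are only polynomially many pairs, so the total cost for this step is $\calO^\star(2^{\rho n})$. A small but useful corollary of the same lemma is the estimate
\[
|\{S \in \calS : |N[S]| = k\}|\ \le\ n^{\calO(1)} \cdot 2^{k},
\]
obtained by summing over the (polynomially many) pairs $(a,b)$ with $a + b = k$; I will reuse this bound in the final step.

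For the main loop, fix $S \in \calS$, and consider a pair $(a,b)$ with $|S| + a + b \le \rho n$ and $|N(S)| \le a$. Enumerating $\calA_{a,b}[S]$ amounts to listing all $(Q, a, b)$-connected sets in $G - S$ with $Q = N_G(S)$, which by Lemma~\ref{lemma:main-no-conn-comps} takes time $2^{a + b - |N(S)|} \cdot n^{\calO(1)}$. Since $a + b \le \rho n - |S|$, this is at most $2^{\rho n - |N[S]|} \cdot n^{\calO(1)}$ for each pair. As there are only $\calO(n^2)$ relevant pairs $(a,b)$, the total work spent on a fixed $S$ is bounded by $n^{\calO(1)} \cdot 2^{\rho n - |N[S]|}$.

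Summing over $S \in \calS$ and grouping by $k = |N[S]|$ gives
\[
\sum_{S \in \calS} 2^{\rho n - |N[S]|}\ =\ \sum_{k=1}^{\lfloor \rho n \rfloor} |\{S \in \calS : |N[S]| = k\}| \cdot 2^{\rho n - k}\ \le\ n^{\calO(1)} \cdot \sum_{k=1}^{\lfloor \rho n \rfloor} 2^{k} \cdot 2^{\rho n - k}\ =\ n^{\calO(1)} \cdot 2^{\rho n},
\]
using the estimate from the first paragraph. Combined with Step~\ref{step:enum-rho-small}, the total running time is $\calO^\star(2^{\rho n})$ as claimed. The only subtle point I anticipate is making sure the combinatorial bound $|\{S : |N[S]| = k\}| \le n^{\calO(1)} \cdot 2^k$ really follows from Lemma~\ref{lemma:no-conn-comps} rather than from the weaker inequality $2^{|S| + |N(S)|} \le 2^{2\rho n}$ one would naively get; this is why I split by pairs $(a,b)$ with $a + b = k$ rather than using the single pair $(\rho n, \rho n)$.
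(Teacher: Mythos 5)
Your proof is correct and follows essentially the same route as the paper: both ultimately multiply the $\calO^\star(2^{|N[S]|})$ count of connected sets with a given closed-neighborhood size by the $\calO^\star(2^{\rho n - |N[S]|})$ per-set cost of enumerating the extender families $\calA_{a,b}[S]$, so the exponents cancel; the paper writes this as the double sum $\sum_{x,y}\sum_{a,b} 2^{x+y}\cdot 2^{a+b-y}=\sum\sum 2^{x+a+b}\le n^{\calO(1)}2^{\rho n}$, while you rearrange it as a charging argument grouped by $k=|N[S]|$. The "subtle point" you flag — that one must use the per-stratum bound $|\{S:|S|=x,\ |N(S)|=y\}|\le n^{\calO(1)}2^{x+y}$ from Lemma~\ref{lemma:no-conn-comps} rather than the crude $2^{2\rho n}$ — is exactly the step the paper handles by defining $\calS_{x,y}$ and invoking $|\calS_{x,y}|=\calO^\star(2^{x+y})$.
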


\begin{proof}
Steps~\ref{step:enum-rho-small} and~\ref{step:for-start} of the algorithm can be executed in time bounded by $\calO^\star(2^{\rho n})$ (see  Lemma~\ref{lemma:no-conn-comps}). We will now argue about the time required for execution of \emph{for}-loop starting at Step~\ref{step:big-for-start}. Towards this, we start by partitioning sets in $\calS$ by their sizes and the sizes of their neighborhood. Recall that for any $S\in \calS$, we have $|N[S]| \leq \rho n$. Let $\calS_{x,y}=\{S\in \calS \mid |S|=x \mbox{ and } |N(S)|=y\}$, where $x,y \in [\lfloor \rho n \rfloor]$, such that $x+y \leq \rho n$. Consider $x,y \in [\lfloor \rho n \rfloor]$, where $x+y \leq \rho n$. From Lemma~\ref{lemma:no-conn-comps}, $|\calS_{x,y}|$ is bounded by $\calO^\star(2^{x+y})$. For each $S\in \calS_{x,y}$, the algorithm considers every pair of integers $a,b$, such that $|S|+a+b \leq \rho n$ and $|N(S)| \leq a$, and computes the set $\calA_{a, b}[S]$. Note that $\calA_{a, b}[S]$ can be computed in time bounded by $\calO^\star(2^{a+b-|N(S)|})$, from Lemma~\ref{lemma:main-no-conn-comps}. Furthermore, the algorithm spends time proportional to $|\calA_{a, b}[S]|$, at Step~\ref{step:recurrence}. From the above discussions, we can bound the running time of the algorithm by the following.
$$\calO^\star(
\sum_{\substack{x,y \\ x+y\leq \rho n}} 
\sum_{\substack{a,b \\ x+a+b \leq \rho n}} 
 2^{x+y} \cdot 2^{a+b-y} 
 ) = 
 \calO^\star(
\sum_{\substack{x,y \\ x+y\leq \rho n}} 
\sum_{\substack{a,b \\ x+a+b \leq \rho n}} 
 2^{x+a+b} 
 ) = \calO^\star(2^{\rho n})
 $$
This concludes the proof.
\end{proof}

\subsection{Algorithm for \BPC}\label{sub-sec:dp}

We formally define the problem \BPC\ in the following. 

\defproblemout{\BPC}{A graph $G$ on $n$ vertices and a fraction $0<\alpha\leq 1$.}{Largest integer $t\geq 2$ for which $G$ has a $P_t$-witness structure $\mcal{W}=(W_1,W_2,$ $\cdots, W_t)$, such that there is $i\in [t]$ with $N[\bigcup_{j \in [i]}W_j] \leq \alpha n$ and $N[\bigcup_{j \in [t] \setminus [i]}W_j] \leq \alpha n$. Moreover, if no such $t$ exists, then output $1$.
}
\vspace{0.2cm}

We design an algorithm for \BPC\ running in time $\mcal{O}^\star(2^{\alpha n})$. Let $(G,\alpha)$ be an instance of \BPC.

\begin{figure}[t]
  \centering
  \includegraphics[scale=0.75]{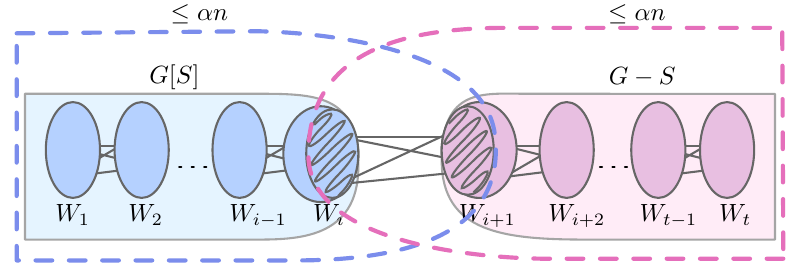}
  \caption{An illustration of construction of the solution using solutions for instances of smaller sizes.}\label{fig:intuition-dp}
\end{figure}

We begin by explaining the intuition behind the algorithm.
Recall that for an $\alpha$-small set $S\subseteq V(G)$, integer $t_S$ is called the \emph{nice solution} if $G[S]$ is contractible to $P_{t_S}$ with all the vertices in $\Phi(S)$ in the end bag. That is, there is a $P_{t_S}$-witness structure $(W_1,W_2,\cdots W_{t_S})$ of $G[S]$, such that $\Phi(S) \subseteq W_{t_S}$.
Suppose that we know the value of $t_S$ for every $\alpha$-small set $S$. 
Now we see how we can use these nice solutions for $\alpha$-small sets to solve our problem (see Figure~\ref{fig:intuition-dp}). Recall that we are looking for the largest integer $t$, such that $G$ is contractible to $P_t$, with $\mcal{W}=(W_1,W_2,$ $\cdots, W_t)$ as a $P_t$-witness structure of $G$, such that there is $i\in [t]$ with $|\bigcup_{j \in [i+1]}W_j| \leq \alpha n$ and $|\bigcup_{j \in [t] \setminus [i-1]}W_j| \leq \alpha n$. Let $S= \bigcup_{j \in [i]}W_j$. As $|\bigcup_{j \in [i+1]}W_j| \leq \alpha n$ and $N(S)\subseteq W_{i+1}$, the set $S$ is an $\alpha$-small set. Similarly, we can argue that $V(G)\setminus S$ is an $\alpha$-small set. Thus, for $S$ and $V(G)\setminus S$, we know the nice solutions $t_{S}$ and $t_{V(G)\setminus S}$, respectively. Notice that the solution to the whole graph is actually $t_S+t_{V(G)\setminus S}$.

\subparagraph{The Algorithm}  The algorithm initializes $t=1$. (At the end, $t$ will be the output of the algorithm.)
The algorithm computes table $\Gamma = $\EPPC$(G, \alpha)$ using Algorithm~\ref{method:partial-EPPC}.
Let $\mathcal{S}$ be the set of all connected sets  $S$ in $G$, such that $|N[S]| \leq \alpha n$. 
That is, $\mathcal{S} =\{S\subseteq V(G) \mid G[S] \mbox{ is connected and } |N[S]| \le \alpha n\}$. For each $S\in \mcal{S}$, we have an entry denoted by $\Gamma[S]$.
The algorithm considers each $S\in \calS$ for which $V(G)\setminus S \in \calS$. It sets $t=\max\{t, \Gamma[S] + \Gamma[V(G)\setminus S]\}$. Finally, the algorithm returns $t$ as the output. This completes the description of the algorithm. 

\begin{lemma}\label{lem:correct-bpc}
The algorithm presented for \BPC\ is correct. 
\end{lemma}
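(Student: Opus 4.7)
The plan is to prove the claim in two directions: the algorithm's output is at least the optimum value $t^*$, and it is at most $t^*$, where $t^*$ is the largest $t \geq 2$ witnessed by a $P_t$-witness structure satisfying the \BPC\ condition (or $t^* = 1$ if no such $t$ exists).

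For the lower-bound direction, the trivial case $t^* = 1$ is covered by the initial assignment $t = 1$, so assume $t^* \geq 2$ and fix an optimal $P_{t^*}$-witness structure $\mcal{W} = (W_1, \ldots, W_{t^*})$ with balancing index $i$. I would first argue that one may take $i \in [t^* - 1]$, so that $S := \bigcup_{j \leq i} W_j$ and $S' := V(G) \setminus S$ are both non-empty. The only corner case to rule out is $i = t^*$, where monotonicity $N[A] \subseteq N[B]$ for $A \subseteq B$ shows that if $V(G)$ is $\alpha$-small then $i' = 1$ also works. Both $S$ and $S'$ are connected unions of consecutive witness bags, and the balance condition gives $|N[S]|, |N[S']| \leq \alpha n$, so $S, S' \in \calS$. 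The key structural observation is that by the path ordering, every edge between $S$ and $S'$ goes from $W_i$ to $W_{i+1}$, hence $\Phi(S) \subseteq W_i$ and $\Phi(S') \subseteq W_{i+1}$. Thus $(W_1, \ldots, W_i)$ is a nice $P_i$-witness structure of $G[S]$ and the reversed sequence $(W_{t^*}, \ldots, W_{i+1})$ is a nice $P_{t^* - i}$-witness structure of $G[S']$, so by correctness of \EPPC\ (Lemma~\ref{lem:correct-EPPC}), $\Gamma[S] \geq i$ and $\Gamma[S'] \geq t^* - i$. Since the algorithm considers the pair $(S, S')$, its output satisfies $t \geq \Gamma[S] + \Gamma[S'] \geq t^*$.

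For the upper-bound direction, suppose the algorithm outputs $t = p + q \geq 2$ with $p = \Gamma[S]$ and $q = \Gamma[V(G) \setminus S]$ for some $S \in \calS$ with $V(G) \setminus S \in \calS$. By Lemma~\ref{lem:correct-EPPC}, there are nice witness structures $(W_1, \ldots, W_p)$ of $G[S]$ and $(W_1', \ldots, W_q')$ of $G[V(G) \setminus S]$ with $\Phi(S) \subseteq W_p$ and $\Phi(V(G) \setminus S) \subseteq W_q'$. I would then form the concatenation $\mcal{W} = (W_1, \ldots, W_p, W_q', \ldots, W_1')$ and verify three points: (a) within each half the witness-structure axioms hold by assumption; (b) the only cross-edges between $S$ and $V(G) \setminus S$ have endpoints inside $\Phi(S) \subseteq W_p$ and $\Phi(V(G) \setminus S) \subseteq W_q'$, so non-adjacent bags across the split are indeed non-adjacent in $G$; (c) by connectedness of $G$ at least one such cross-edge exists, giving the required adjacency between the junction bags $W_p$ and $W_q'$. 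Splitting $\mcal{W}$ at index $p$ recovers the partition $(S, V(G) \setminus S)$, both of which are $\alpha$-small since $S, V(G) \setminus S \in \calS$; hence $\mcal{W}$ certifies that $t$ is achievable, giving $t \leq t^*$.

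I expect the main obstacle to be the boundary case $i = t^*$ of the forward direction, since the algorithm only considers splits into two non-empty $\alpha$-small sets, while the \BPC\ problem formally permits $i = t$ (degenerate split with one side empty). The monotonicity argument above reduces this corner back to a non-trivial split with $i = 1$, after which the remainder of the proof is routine verification that gluing two nice witness structures at their frontier bags produces a valid witness structure of the whole graph.
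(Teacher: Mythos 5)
Your proposal is correct and takes essentially the same approach as the paper: both directions hinge on Lemma~\ref{lem:correct-EPPC} to translate between a $P_t$-witness structure of $G$ split at an index $i$ and the pair of nice partial solutions $\Gamma[S]$, $\Gamma[V(G)\setminus S]$, with the gluing/splitting at the frontier bags $W_i$ and $W_{i+1}$ justified by $\Phi(S)\subseteq W_i$ and $\Phi(V(G)\setminus S)\subseteq W_{i+1}$. You are slightly more careful than the paper about the degenerate split $i=t^*$ (one side empty), which the paper passes over; your monotonicity observation reducing it to $\alpha=1$ with $i'=1$ closes that corner cleanly.
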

\begin{proof}
For an instance $(G,\alpha)$, suppose that $t_{\sf opt}$ is the solution to \BPC\ and $t_{\sf out}$ is the output returned by the algorithm. We will show that $t_{\sf out} = t_{\sf opt}$. 


\sloppy Firstly, we show that $t_{\sf out} \geq t_{\sf opt}$. As $t_{\sf out} \geq 2$, if $t_{\sf opt} =2$, then the claim trivially holds. Thus, we assume that $t_{\sf opt} \geq 3$. Let $\mcal{W}=(W_1,W_2,$ $\cdots, W_{t_{\sf opt}})$ be a $P_{t_{\sf opt}}$-witness structure of $G$, such that there is $i\in [{t_{\sf opt}}]$ with $|N[\bigcup_{j \in [i]}W_j]| \leq \alpha n$ and $|N[\bigcup_{j \in [{t_{\sf opt}}] \setminus [i]}W_j]| \leq \alpha n$. Let $S= \bigcup_{j \in [i]}W_j$ and $\overline S= \bigcup_{j \in [{t_{\sf opt}}] \setminus [i]}W_j$. Note that $\overline S= V(G) \setminus S$ and $S,\overline S\in \calS$. Let $\calW_1 = (W_1,W_2,\cdots, W_{i})$ and $\calW_2 = (W_{i+1},W_{i+2},\cdots, W_{t_{\sf opt}})$. Note that $\calW_1$ is a $P_i$-witness structure of $G[S]$, such that $\Phi(S) \subseteq W_i$. Similarly, $\calW_2$ is a $P_{{t_{\sf opt}}-i}$-witness structure of $G[\overline S]$, such that $\Phi(\overline S) \subseteq W_{i+1}$. From Lemma~\ref{lem:correct-EPPC}, we know that the algorithm has correctly computed the values $\Gamma[S]$ and $\Gamma[\overline S]$. From the above discussions we can conclude that $\Gamma[S] \geq i$ and $\Gamma[\overline S] \geq {t_{\sf opt}}-i$. Thus, we can conclude that $t_{\sf out} \geq \Gamma[S]+ \Gamma[\overline S] \geq {t_{\sf opt}}$. 

Next, we show that $t_{\sf out} \leq t_{\sf opt}$. As $t_{\sf opt} \geq 2$, if $t_{\sf out}=2$, the condition $t_{\sf out} \leq t_{\sf opt}$ is trivially satisfied. Now we consider the case when $t_{\sf out}=3$. In this case, there is a set $S\in \calS$, such that $V(G)\setminus S\in \calS$ and $t=\Gamma[S] + \Gamma[V(G)\setminus S]$. From Lemma~\ref{lem:correct-EPPC}, the algorithm has correctly computed $q_1=\Gamma[S]$ and $q_2=\Gamma[V(G)\setminus S]$. Thus, there is a $P_{q_1}$-witness structure $\calW_1=(W_1,W_2,\cdots, W_{q_1})$ for $G[S]$, such that $\Phi(S) \subseteq W_{q_1}$. Similarly, there is a $P_{q_2}$-witness structure $\calW_2=(W'_1,W'_2,\cdots, W'_{q_2})$ for $G[V(G) \setminus S]$, such that $\Phi(V(G) \setminus S) \subseteq W'_{q_2}$. Recall that $t=q_1+q_2$. We will show that $\calW= (W_1,W_2,\cdots, W_{q_1}, W'_{q_2}, \cdots, W'_2, W'_1)$ is a $P_t$-witness structure of $G$, such that there is $i\in [t]$ with $|N[\bigcup_{j \in [i]}W_j]| \leq \alpha n$ and $|N[\bigcup_{j \in [t] \setminus [i]}W_j]| \leq \alpha n$. Lemma~\ref{lem:correct-EPPC} and connectedness of $G$ implies that $\calW$ is a $P_t$-witness structure of $G$. Moreover, as $S,V(G)\setminus S \in \calS$, for $i=q_1$, we have $|N[\bigcup_{j \in [i]}W_j]| \leq \alpha n$ and $|N[\bigcup_{j \in [t] \setminus [i]}W_j]| \leq \alpha n$. Hence, we can conclude that $t_{\sf out} \leq t_{\sf opt}$. 
\end{proof}

\begin{lemma}\label{lem:correct-bpc-run}
The algorithm presented for \BPC\ runs in time $\calO^\star(2^{\alpha n})$.  
\end{lemma}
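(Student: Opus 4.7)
The plan is to bound the runtime of the algorithm by decomposing it into its three main phases and accounting for each separately. Let $n$ denote the number of vertices of $G$.

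First, the algorithm invokes the subroutine \EPPC$(G, \alpha)$ to build the table $\Gamma$. By Lemma~\ref{lem:correct-runtime-EPPC}, this step takes time $\calO^{\star}(2^{\alpha n})$, and as a byproduct we also have the collection $\calS$ of all connected $\alpha$-small sets available (it is enumerated inside \EPPC\ via Lemma~\ref{lemma:no-conn-comps}).

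Second, after computing $\Gamma$, the algorithm loops over $S \in \calS$ and, for each $S$, tests whether $V(G)\setminus S$ also belongs to $\calS$; if so, it updates $t$ using the already-stored values $\Gamma[S]$ and $\Gamma[V(G)\setminus S]$. To bound this loop, I would invoke Lemma~\ref{lemma:no-conn-comps} on the parameters $(a,b)$ with $a+b \le \alpha n$ to obtain $|\calS| \le \calO^{\star}(2^{\alpha n})$. For a fixed $S \in \calS$, checking whether $V(G)\setminus S$ is connected and satisfies $|N[V(G)\setminus S]| \le \alpha n$ can be done in polynomial time by standard graph traversal, and a single table lookup suffices to retrieve $\Gamma[S]$ and $\Gamma[V(G)\setminus S]$ when both complements belong to $\calS$. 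Hence the body of the loop contributes only an $n^{\calO(1)}$ multiplicative overhead.

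Combining the two dominant contributions gives a total running time of $\calO^{\star}(2^{\alpha n}) + |\calS|\cdot n^{\calO(1)} = \calO^{\star}(2^{\alpha n})$, as desired. The only subtlety worth noting is that one should avoid naively iterating over all $2^n$ subsets of $V(G)$; the saving comes from iterating only over the (at most $\calO^{\star}(2^{\alpha n})$) connected $\alpha$-small sets produced in the enumeration step, which is the main reason the claimed bound can be achieved and is the step I expect to need the most care in writing up cleanly.
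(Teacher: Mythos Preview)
Your proposal is correct and follows essentially the same approach as the paper: invoke Lemma~\ref{lem:correct-runtime-EPPC} to bound the call to \EPPC$(G,\alpha)$ by $\calO^\star(2^{\alpha n})$, and then observe that the subsequent loop over $\calS$ (equivalently, over the entries of $\Gamma$) contributes only $|\calS|\cdot n^{\calO(1)} = \calO^\star(2^{\alpha n})$. The paper's proof is simply a terser version of what you wrote, phrasing the second step as ``iterates over all the values in $\Gamma$'' rather than spelling out the per-set polynomial work as you do.
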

\begin{proof} The algorithm for \BPC\ calls Algorithm~\ref{method:partial-EPPC} on input $(G, \alpha)$ and iterates over all the values in $\Gamma$. Hence, Lemma~\ref{lem:correct-runtime-EPPC} implies that the running time of algorithm presented for \BPC\ is $\calO^\star(2^{\alpha n})$.
\end{proof}

\subsection{Algorithm for \TDCPC}\label{sub-sec:dis-conn}
We formally define the problem \TDCPC\ in the following (also see Figure~\ref{fig:intuition-heavy-adj}). 

\begin{figure}[t]
  \centering
  \includegraphics[scale=0.75]{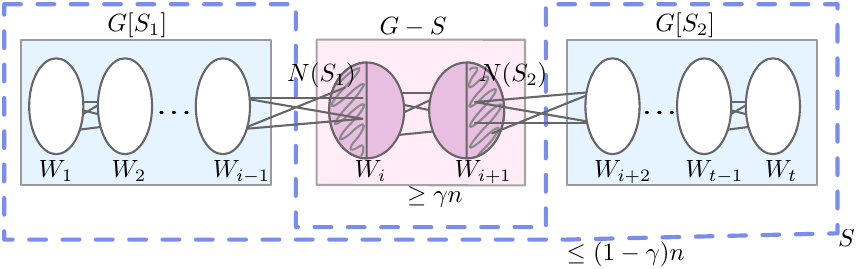}
  \caption{An intuitive illustration of the algorithm for \TDCPC.}\label{fig:intuition-heavy-adj}
\end{figure}

\defproblemout{\TDCPC}{A graph $G$ on $n$ vertices and a fraction $0<\gamma \leq 1$.}{Largest integer $t\geq 3$ for which $G$ has a $P_t$-witness structure $\mcal{W}=(W_1,W_2,$ $\cdots, W_t)$, such that there is $i\in [t-1]$ for which the following conditions hold: 1) $|W_i \cup W_{i+1}| \geq \gamma n$ and 2) $|N[\bigcup_{j\in [i-1]} W_j]|, |N[\bigcup_{j\in [t]\setminus [i+1]} W_j]| \leq (1-\gamma/2) n$. Moreover, if no such $t$ exists, then output $2$.
}
\vspace{0.2cm}

We design an algorithm for \TDCPC\ running in time $\mcal{O}^\star(2^{(1-\gamma/2)n} + c^n)$, where $c=\max_{\gamma \leq \delta \leq 1}\{1.7804^\delta\cdot g(1-\delta)\}$. The first term in the running time expression will be due to a call made to \EPPC\ with $\rho=(1-\gamma/2)$, and the second term will be due to enumerating sets of size at most $(1-\gamma) n$ and running the algorithm for solving \textsc{2-Disjoint Connected Subgraphs} for an instance created for each of them, using the algorithm of Telle and Villanger~\cite{telle2013connecting}. 

\begin{algorithm}
    \caption{Algorithm for \TDCPC}
	\label{method:disjoint-conn-2}
	\begin{algorithmic}[1]
	\REQUIRE { A graph $G$ and a fraction $0<\gamma \leq 1$.}
	\ENSURE { An integer $t$.}
		\vspace{0.2cm}
		\STATE{Initialize $t = 2$}
		\STATE{Let $\calS =\{S\subseteq V(G) \mid |S| \leq (1 - \gamma) n \mbox{ and } G[S] \mbox{ has exactly two connected}$ $\mbox{components } G[S_1],G[S_2], \mbox{ s.t. } |N[S_1]|, |N[S_2]| \leq (1-\gamma/2)n \}$ \label{step:compute-S}}
		\STATE{Let $\what \calS =\{\what S\subseteq V(G) \mid |N[\what S]| \leq (1 - \gamma/2) n \mbox{ and } G[\what S] \mbox{ is connected}\}$. Compute the value of $\Gamma[S]$, for each $S\in \what \calS$, by computing the table ${\Gamma}$ = \EPPC$(G, 1 - \gamma/2)$ \label{step:compute-hat-S}}
		\FOR{$S \in \calS$ \label{step:start-for1}}
		\STATE{  Let $S_1$ and $S_2$ be the two connected components of $G[S]$}
		\IF{$(G-S, N_G(S_1), N_G(S_2))$ is a \yes\ instance of \textsc{$2$-DCS}}
		\STATE{ $t = \max\{t, \Gamma[S_1] + \Gamma[S_2] + 2\}$}
		\ENDIF
		\ENDFOR
		\RETURN{$t$}
	\end{algorithmic}
\end{algorithm}

We now formally describe our algorithm. The algorithm will output an integer $t$, which is initially set to $2$. Let $\calS =\{S\subseteq V(G) \mid |S| \leq (1 - \gamma) n \text{ and } G[S] \allowbreak \text{ has exactly two connected components } G[S_1],G[S_2], \text{ such that } |N[S_1]|, |N[S_2]| \leq \\ (1 - \gamma/2)n \}$. Let $\what \calS =\{\what S\subseteq V(G) \mid |N[\what S]| \leq (1 - \gamma/2) n \mbox{ and } G[\what S] \mbox{ is connected}\}$. The algorithm will now compute a table $\Gamma$, which has an entry $\Gamma[\what S]$, for each $\what S\in \what \calS$. The definition of $\Gamma$ is the same as that in Section~\ref{sub-sec:dp}, where $\rho=1-\gamma/2$. That is, for $\what S\in \what \calS$, $\Gamma[\what S]$ is the largest integer $q \geq 1$ for which $G[\what S]$ can be contracted to $P_q$ with a $P_q$-witness structure $\mcal{W}=(W_1,W_2,$ $\cdots, W_q)$ of $G[\what S]$, such that $\Phi(\what S) \subseteq W_q$. Compute the value of $\Gamma[\what S]$, for each $\what S\in \what \calS$, by using \EPPC$(G, 1 - \gamma/2)$. For each $S\in \calS$, the algorithm does the following. Recall that $G[S]$ has exactly two connected components. Let the two connected components in $G[S]$ be $G[S_1]$ and $G[S_2]$, where $S_1\cup S_2=S$. Recall that $|N[S_1]|, |N[S_2]| \leq (1-\gamma/2)n$. Thus, $S_1,S_2\in \what\calS$. If $(G-S, N_G(S_1), N_G(S_2))$ is a yes-instance of \textsc{$2$-DCS}, then the algorithm sets $t = \max\{t, \Gamma[S_1] + \Gamma[S_2] + 2\}$, and otherwise, it moves to the next set in $\what \calS$. Finally, the algorithm outputs $t$. This completes the description of the algorithm. See Algorithm~\ref{method:disjoint-conn-2}.

In the following two lemmas we present the correctness and runtime analysis of the algorithm, respectively.

\begin{lemma}\label{lemma:correct-TDCPC}
The algorithm presented for \TDCPC\ is correct. 
\end{lemma}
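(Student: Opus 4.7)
The plan is to prove both directions of $t_{\sf out} = t_{\sf opt}$, in the spirit of Lemma~\ref{lem:correct-bpc}.

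For the forward direction $t_{\sf out} \geq t_{\sf opt}$, I would fix an optimal $P_{t_{\sf opt}}$-witness structure $\mcal{W} = (W_1, \ldots, W_{t_{\sf opt}})$ of $G$ together with an index $i \in [t_{\sf opt}-1]$ satisfying the two defining conditions of \TDCPC. Let $S_1 = \bigcup_{j \in [i-1]} W_j$, $S_2 = \bigcup_{j \in [t_{\sf opt}] \setminus [i+1]} W_j$, and $S = S_1 \cup S_2$. The path-witness structure immediately yields that $G[S_1]$ and $G[S_2]$ are connected, that $S_1$ and $S_2$ are mutually non-adjacent in $G$ (their only possible external neighbors lie in the separating blocks $W_i$ and $W_{i+1}$), and that $|N[S_j]|$ coincides with the $(1-\gamma/2)n$-bounded quantity from condition~(2). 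Combined with $|S| = n - |W_i \cup W_{i+1}| \leq (1-\gamma)n$ coming from condition~(1), this places $S$ in $\calS$ with components exactly $G[S_1]$ and $G[S_2]$. The pair $(W_i, W_{i+1})$ is then a valid $2$-DCS partition of $V(G-S)$: both blocks are connected, and the definition of $\mcal{W}$ forces $N_G(S_1) \cap V(G-S) \subseteq W_i$ and $N_G(S_2) \cap V(G-S) \subseteq W_{i+1}$. Finally, $(W_1, \ldots, W_{i-1})$ and $(W_{t_{\sf opt}}, \ldots, W_{i+2})$ are nice witness structures of $G[S_1]$ and $G[S_2]$ whose last bags contain $\Phi(S_1)$ and $\Phi(S_2)$, respectively, so by Lemma~\ref{lem:correct-EPPC} the algorithm has $\Gamma[S_1] \geq i-1$ and $\Gamma[S_2] \geq t_{\sf opt}-i-1$; hence the loop updates $t \geq \Gamma[S_1] + \Gamma[S_2] + 2 \geq t_{\sf opt}$.

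For the reverse direction, the inequality $t_{\sf out} \leq t_{\sf opt}$ is immediate if $t_{\sf out} = 2$, so assume $t_{\sf out} \geq 3$. Then $t_{\sf out}$ was set by the loop to $q_1 + q_2 + 2$ for some $S \in \calS$ with components $G[S_1], G[S_2]$, where $q_j = \Gamma[S_j]$ and $(G-S, N_G(S_1), N_G(S_2))$ is a \yes-instance of $2$-DCS via some partition $(V_1, V_2)$. Let $(W^1_1, \ldots, W^1_{q_1})$ and $(W^2_1, \ldots, W^2_{q_2})$ be the nice witness structures guaranteed by Lemma~\ref{lem:correct-EPPC}, with $\Phi(S_1) \subseteq W^1_{q_1}$ and $\Phi(S_2) \subseteq W^2_{q_2}$. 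I would define the candidate witness structure
$$\mcal{W} = (W^1_1, \ldots, W^1_{q_1},\, V_1,\, V_2,\, W^2_{q_2}, \ldots, W^2_1)$$
and the index $i = q_1 + 1$, so that $W_i \cup W_{i+1} = V(G) \setminus S$. The size conditions of \TDCPC\ are then immediate from $|S| \leq (1-\gamma)n$ (giving $|W_i \cup W_{i+1}| \geq \gamma n$) and from $S \in \calS$ (giving $|N[S_j]| \leq (1-\gamma/2)n$).

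The main obstacle will be verifying that $\mcal{W}$ is a genuine $P_{t_{\sf out}}$-witness structure of $G$, i.e.\ that each block is connected, consecutive blocks are adjacent, and non-consecutive blocks are non-adjacent. Connectivity of each block is immediate by construction. Non-adjacency of non-consecutive blocks will rest on three observations: vertices in $S_1 \setminus W^1_{q_1}$ have no neighbor outside $S_1$ since $\Phi(S_1) \subseteq W^1_{q_1}$, so blocks $W^1_j$ with $j < q_1$ see only $S_1$; the two components of $G[S]$ forbid any $S_1$--$S_2$ edge, hence any $W^1_j$--$W^2_k$ edge; and the $2$-DCS constraints $N_G(S_1) \subseteq V_1$, $N_G(S_2) \subseteq V_2$ forbid any $S_1$--$V_2$ or $S_2$--$V_1$ edge. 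For consecutive adjacency, the adjacencies internal to the $W^{\ell}$-blocks and between $W^1_{q_1}$--$V_1$ or $V_2$--$W^2_{q_2}$ follow from the witness-structure guarantees of Lemma~\ref{lem:correct-EPPC} combined with the $2$-DCS containment $N_G(S_j) \subseteq V_j$. The only delicate adjacency is $V_1$--$V_2$: if no edge existed between them, then $V_1 \cup S_1$ and $V_2 \cup S_2$ would partition $V(G)$ into two non-empty sets with no crossing edges, contradicting connectedness of $G$. Together these checks confirm that $\mcal{W}$ attests $t_{\sf opt} \geq t_{\sf out}$, completing the proof.
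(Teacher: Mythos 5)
Your proposal is correct and follows essentially the same route as the paper: forward direction via $S = S_1 \cup S_2$ with $S_i = \bigcup_j W_j$ on each side of $W_i \cup W_{i+1}$, showing $S \in \calS$ and $(W_i,W_{i+1})$ solves the induced $2$-DCS instance; reverse direction via concatenating the nice partial witness structures with the $2$-DCS partition. Where you depart from the paper is only in rigor, not method. The paper's reverse direction asserts, without detailed verification, that the concatenated sequence $(W_1,\dots,W_{q_1},Z_1,Z_2,W'_{q_2},\dots,W'_1)$ is a genuine $P_{t_{\sf out}}$-witness structure; you actually check the three families of non-adjacency constraints and, crucially, isolate the one adjacency that is \emph{not} guaranteed by the subproblem definitions — namely $V_1$--$V_2$, since $2$-DCS does not require the two parts to be adjacent — and close that gap by the connectedness-of-$G$ argument, having already ruled out all other cross edges between $V_1\cup S_1$ and $V_2\cup S_2$. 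This is a real detail the paper glosses over, and your treatment of it is sound. One small omission: to invoke $\Gamma[S_1],\Gamma[S_2]$ in the reverse direction you should note that $S\in\calS$ already guarantees $S_1,S_2\in\what\calS$ (the definition of $\calS$ includes the $|N[S_i]|\le(1-\gamma/2)n$ bound); this is implicit but worth a sentence.
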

\begin{proof}
For an instance $(G,\gamma)$, suppose that $t_{\sf opt}$ is the solution to \TDCPC\ and $t_{\sf out}$ is the output returned by the algorithm. We will show that $t_{\sf out} = t_{\sf opt}$. 

Firstly, we show that $t_{\sf out} \geq t_{\sf opt}$. As $t_{\sf out} \geq 2$, if $t_{\sf opt} =2$, then the claim trivially holds. Thus, we assume that $t_{\sf opt} \geq 3$. Let $\mcal{W}=(W_1,W_2,$ $\cdots, W_{t_{\sf opt}})$ be a $P_{t_{\sf opt}}$-witness structure of $G$, such that there is $i\in [t_{\sf opt}-1]$ for which the following conditions hold: 1) $|W_i \cup W_{i+1}| \geq \gamma n$ and 2) $|N[\bigcup_{j\in [i-1]} W_j]|, |N[\bigcup_{j\in [t_{\sf opt}]\setminus [i+1]} W_j]| \leq (1 - \gamma/2) n$. Let $Z= W_{i} \cup W_{i+1}$ and $S=V(G) \setminus Z$. As $|Z| \geq \gamma n$, we have $|S| \leq (1-\gamma)n$. Also, as $\calW$ is a $P_{t_{\sf opt}}$-witness structure of $G$, $G[S]$ has exactly two connected components $G[S_1]$ and $G[S_2]$, where $S_1= \bigcup_{j\in [i-1]} W_j$ and $S_2= \bigcup_{j\in [t_{\sf opt}]\setminus[i+1]}W_j$. Note that we have $|N[S_1]| \leq (1-\gamma/2)n$ and $|N[S_2]| \leq (1-\gamma/2)n$. From the above discussions, we can conclude that $S\in \calS$ and $S_1,S_2\in \what \calS$. By Lemma~\ref{lem:correct-EPPC}, the values of $\Gamma[S_1]$ and $\Gamma[S_2]$ are computed correctly, we can conclude that $\Gamma[S_1]+ \Gamma[S_2] \geq t_{\sf opt} -2$. Also, $(W_{i}, W_{i+1})$ is a solution to the instance $(G-S, N_G(S_1), N_G(S_2))$, and hence it is a yes-instance of {\sc $2$-DCS}. Thus,  $t_{\sf out} = t \geq \Gamma[S_1]+ \Gamma[S_2] +2 \geq t_{\sf opt}$. 

Next, we show that $t_{\sf out} \leq t_{\sf opt}$. As $t_{\sf opt} \geq 2$, if $t_{\sf out}=2$, the condition $t_{\sf out} \leq t_{\sf opt}$ is trivially satisfied. Now we consider the case when $t_{\sf out}\geq 3$. In this case, there is a set $S\in \calS$, such that $(G-S, N_G(S_1), N_G(S_2))$ is a yes-instance of {\sc $2$-DCS} and $t_{\sf out} = \Gamma[S_1] + \Gamma[S_2] +2$, where $G[S_1]$ and $G[S_2]$ are the two connected component of $G[S]$. From Lemma~\ref{lem:correct-EPPC} it follows that the algorithm has correctly computed $q_1=\Gamma[S_1]$ and $q_2=\Gamma[S_2]$. Thus, there is a $P_{q_1}$-witness structure $\calW_1=(W_1,W_2,\cdots, W_{q_1})$ for $G[S_1]$, such that $N(S_1) \subseteq W_{q_1}$. Similarly, there is a $P_{q_2}$-witness structure $\calW_2=(W'_1,W'_2,\cdots, W'_{q_2})$ for $G[S_2]$, such that $N(S_2) \subseteq W'_{q_2}$. \sloppy Note that $q_1+q_2 \leq t_{\sf opt}-2$. Let $(Z_1,Z_2)$ be a solution to {\sc $2$-DCS} in $(G-S, N_G(S_1), N_G(S_2))$. Note that $\calW= (W_1,W_2,\cdots, W_{q_1}, Z_1,Z_2,W'_{q_2}, \cdots, W'_2, W'_1)$ $= (W_1,W_2,\cdots, W_{q_1}, W_{q_1+1},W_{q_1+2},\cdots, W_{q_1+q_2+2})$ is a $P_{t_{\sf out}}$-witness structure of $G$, such that: 1) $|W_{q_1+1} \cup W_{q_1+2}| \geq \gamma n$ and 2) $|N[\bigcup_{j\in [q_1]} W_j]|, |N[\bigcup_{j\in [t_{\sf out}]\setminus [q_1+2]} W_j]| \leq (1-\gamma/2) n$. Thus, we can conclude that $t_{\sf out} \leq t_{\sf opt}$.
\end{proof}

\begin{lemma}\label{lem:correct-runtime-TDCPC}
\sloppy The algorithm presented for \TDCPC\ runs in time $\mcal{O}^\star(2^{(1-\gamma/2)n} + c^n)$, where $c=\max_{\gamma \leq \delta \leq 1}\{1.7804^\delta\cdot g(1-\delta)\}$.  
\end{lemma}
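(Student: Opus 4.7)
The plan is to decompose the total running time into two additive pieces: the precomputation of the table $\Gamma$ via \EPPC\ (Step~\ref{step:compute-hat-S}), and the main enumeration of sets in $\calS$ together with the {\sc 2-DCS} subroutine calls (Steps~\ref{step:compute-S} and~\ref{step:start-for1}). These two pieces will match the two terms $2^{(1-\gamma/2)n}$ and $c^n$ in the claimed bound.

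First, the call to $\EPPC(G, 1-\gamma/2)$ at Step~\ref{step:compute-hat-S} runs, by Lemma~\ref{lem:correct-runtime-EPPC} (with $\rho = 1-\gamma/2$), in time $\calO^\star(2^{(1-\gamma/2)n})$, accounting for the first term. This is also enough to produce (by filtering) the enumeration of $\what\calS$ and hence of $\calS$, since every $S\in \calS$ splits into two pieces each lying in $\what\calS$. So from here on, I only need to bound the cost of the loop starting at Step~\ref{step:start-for1}.

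The key idea for the main loop is to bucket the sets $S\in \calS$ by their cardinality. Parameterise $|S|=(1-\delta)n$; since $|S|\le (1-\gamma)n$, we have $\delta\in [\gamma,1]$. For a fixed $\delta$, the number of candidates $S$ is at most $\binom{n}{(1-\delta)n}$, which by the bound underlying Observation~\ref{obs:subset-no} (using the symmetry $g(\mu)=g(1-\mu)$ to handle $1-\delta \ge 1/2$) is bounded by $\calO^\star(g(1-\delta)^n)$. For each such $S$, the subgraph $G-S$ has exactly $\delta n$ vertices, so by Proposition~\ref{prop:exact-2-con} the call to {\sc 2-DCS} on $(G-S, N_G(S_1), N_G(S_2))$ takes $\calO^\star(1.7804^{\delta n})$ time; all other per-iteration work (including reading $\Gamma[S_1],\Gamma[S_2]$) is polynomial. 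Summing yields
\[
\sum_{\delta \in [\gamma,1]} g(1-\delta)^n \cdot 1.7804^{\delta n} \cdot n^{\calO(1)} \;\le\; n^{\calO(1)} \cdot \max_{\gamma\le \delta\le 1}\bigl\{1.7804^{\delta}\cdot g(1-\delta)\bigr\}^n = \calO^\star(c^n).
\]

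Adding the two contributions gives the stated $\calO^\star(2^{(1-\gamma/2)n}+c^n)$ bound. The one delicate point worth double-checking is the binomial estimate, because $1-\delta$ can exceed $1/2$ while Observation~\ref{obs:subset-no} is stated for $\mu<1/2$; this is handled by invoking the symmetry $\binom{n}{(1-\delta)n}=\binom{n}{\delta n}$ together with $g(\mu)=g(1-\mu)$, so the same $g(1-\delta)^n$ bound (up to polynomial factors) is valid in the full range $\delta\in[\gamma,1]$. Everything else in the loop is either polynomial or already subsumed by one of the two exponential terms.
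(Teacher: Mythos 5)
Your proof is correct and follows essentially the same approach as the paper: bound the \EPPC\ call by $\calO^\star(2^{(1-\gamma/2)n})$, bucket sets $S\in\calS$ by exact size $(1-\delta)n$, bound their count by $g(1-\delta)^n$, and charge each a $1.7804^{\delta n}$ cost for the {\sc 2-DCS} call, maximising over $\delta\in[\gamma,1]$. Your explicit handling of the $1-\delta\geq 1/2$ regime via the symmetry of $g$ and of binomials is a nice bit of extra care that the paper glosses over; the only small imprecision is the side remark that $\calS$ can be obtained by ``filtering'' $\what\calS$ (sets in $\calS$ are disconnected unions of pairs from $\what\calS$, so this is not a literal filter), but that remark is not load-bearing since the cost of directly enumerating $\calS$ by size is absorbed into the $c^n$ term anyway.
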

\begin{proof}
Using Observation~\ref{obs:subset-no}, Step~\ref{step:compute-S} of the algorithm can be executed in time $\calO^\star((g(1-\gamma))^n)$, which is bounded by $\calO^\star(c^n)$. 
Step~\ref{step:compute-hat-S} of the algorithm can be executed in time $\mcal{O}^\star(2^{(1-\gamma/2)n})$, by Lemma~\ref{lem:correct-EPPC} and Lemma~\ref{lem:correct-runtime-EPPC}. We now argue about the time required for the \emph{for}-loop starting at Step~\ref{step:start-for1} (all the remaining steps can be executed in constant time). The number of sets in $\calS$ of size at most $(1-\delta)n$ is bounded by $(g(1-\delta))^n$. For each $\gamma \leq \delta \leq 1$, and each set $S \in \calS$ of size at most $(1-\delta)n$, we resolve the instance $(G-S, N_G(S_1), N_G(S_2))$ of {\sc $2$-DCS}, where $G[S_1]$ and $G[S_2]$ are the two connected components of $G-S$. Note that the number of vertices in $G-S$ is bounded by $\delta n$, and hence using Proposition~\ref{prop:exact-2-con}, we can resolve the instance $(G-S, N_G(S_1), N_G(S_2))$ of {\sc $2$-DCS} in time $\calO^\star(1.7804^{\delta n})$. From the above discussions we can conclude that the running time of the algorithm is bounded by $\mcal{O}^\star(2^{(1-\gamma/2)n} + c^n)$, where $c=\max_{\gamma \leq \delta \leq 1}\{1.7804^\delta\cdot g(1-\delta)\}$.
\end{proof}

\subsection{Algorithm for \SOEPC}
\label{sub-sec:enum}
We formally define the problem \SOEPC\ in the following. 

\defproblemout{\SOEPC}{A graph $G$ on $n$ vertices and a fraction $0<\beta\leq 1$.}{Largest integer $t$ for which $G$ can be contracted to $P_t$, with $\mcal{W}=(W_1,W_2,\cdots, W_t)$ as a $P_t$-witness structure of $G$, such that $|\os_{\mcal{W}}| \leq \beta n/2$ or $|\es_{\mcal{W}}| \leq \beta n/2$, where $\os_{\mcal{W}}=\bigcup_{i\in [\lceil{t/2}\rceil]} W_{2i-1}$ and $\es_{\mcal{W}}=\bigcup_{i\in [\lfloor{t/2}\rfloor]} W_{2i}$.
}
\vspace{0.2cm}

In this section, we design an algorithm for \SOEPC\ running in time $\mcal{O}^\star(c^n)$, where $c=g(\beta/2)$. 

Let $(G,\beta)$ be an instance of \SOEPC. The algorithm is fairly simple. It starts by enumerating all ``potential candidates'' for $\os_{\mcal{W}}$ (resp. $\es_{\mcal{W}}$), i.e., the set of all subsets of $V(G)$ of size at most $\beta n/2$. Then, for each such ``potential set'', it contracts $G$ appropriately, and finds the length of the path to which $G$ is contracted (and stores $0$, if the contracted graph is not a path). Finally, it returns the maximum over such path lengths.

We now move to formal description of the algorithm. We start by enumerating the set of all subsets of $V(G)$ of size at most $\beta n/2$. That is, $\mcal{S} = \{S\subseteq V(G) \mid |S| \leq \beta n/2\}$. Note that $\mcal{S}$ can be computed in time $\calO^{\star}(g(\beta/2)^n)$, using Observation~\ref{obs:subset-no}. For each $S\in \mcal{S}$ the algorithm does the following. Let $\mcal{C}_S$ and $\overline{\mcal{C}}_S$ be the set of connected components of $G[S]$ and $G-S$, respectively. Let $G_S$ be the graph obtained from $G$ by contracting each $C\in \mcal{C}_S \cup \overline{\mcal{C}}_S$ to a single vertex. Set ${\sf len}_S= |V(G_S)|$, if $G_S$ is a path, and ${\sf len}_S= 0$, otherwise. Finally, the algorithm returns $\max_{S\in \mcal{S}}{\sf len}_S$. 

 In the following lemma we prove the correctness and runtime analysis of the algorithm.

\begin{lemma}\label{lemma:alg-SOEPC}
The algorithm presented for \SOEPC\ is correct and runs in time $\calO^{\star}(g(\beta/2)^n)$. 
\end{lemma}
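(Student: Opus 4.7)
The plan is to verify both directions of the correctness claim via the observation that the algorithm's construction $G_S$ is essentially forced to match the witness structure of any valid solution with small $\os$ or $\es$. The key structural fact I would first record is this: for any $S\subseteq V(G)$, any edge of $G$ whose endpoints lie in different connected components of $G[S]$ cannot exist (by definition of component), and similarly for $G-S$. Consequently, every edge of $G_S$ runs between (the vertex corresponding to) a component of $G[S]$ and (the vertex corresponding to) a component of $G-S$, so $G_S$ is bipartite with parts indexed by $\mcal{C}_S$ and $\overline{\mcal{C}}_S$.

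For the forward direction (the algorithm never overshoots), suppose the algorithm outputs $t_{\sf out}={\sf len}_S>0$ for some $S\in\mcal{S}$. Then $G_S$ is a path on $t_{\sf out}$ vertices. By the structural fact above this path alternates between components of $G[S]$ and components of $G-S$, so the witness structure $\mcal{W}$ obtained by listing the components of $G[S]$ and $G-S$ along the path is a valid $P_{t_{\sf out}}$-witness structure of $G$. Moreover, either $\os_{\mcal{W}} = \bigcup_{C\in \mcal{C}_S} V(C) = S$ or $\es_{\mcal{W}} = S$, depending on which end of the path corresponds to a component of $G[S]$. Since $|S|\leq \beta n/2$, the required size condition is satisfied.

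For the reverse direction (the algorithm does not undershoot), let $\mcal{W}=(W_1,\ldots,W_{t_{\sf opt}})$ be an optimal witness structure with, say, $|\os_{\mcal{W}}|\leq \beta n/2$ (the even case is symmetric). Set $S=\os_{\mcal{W}}$; then $S\in \mcal{S}$. The main step is to verify that $\mcal{C}_S = \{W_1, W_3, W_5,\ldots\}$ and $\overline{\mcal{C}}_S = \{W_2, W_4,\ldots\}$: each $W_i$ is connected, and in a $P_{t_{\sf opt}}$-witness structure there is no edge of $G$ between $W_i$ and $W_j$ for $|i-j|\geq 2$, so no two odd-indexed (resp.\ even-indexed) witness sets merge into a single component of $G[S]$ (resp.\ $G-S$). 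Contracting each component to a single vertex therefore produces precisely $P_{t_{\sf opt}}$, so ${\sf len}_S\geq t_{\sf opt}$ and the algorithm outputs at least $t_{\sf opt}$.

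For the running time, I would simply invoke Observation~\ref{obs:subset-no} with $\mu=\beta/2$: the collection $\mcal{S}$ of subsets of $V(G)$ of size at most $\beta n/2$ can be enumerated in time $\calO^{\star}(g(\beta/2)^n)$. For each $S\in\mcal{S}$, computing the connected components of $G[S]$ and $G-S$, performing the contractions, and testing whether $G_S$ is a path are all polynomial-time operations. This yields the claimed $\calO^{\star}(g(\beta/2)^n)$ bound. I do not anticipate a serious obstacle; the only subtle point is the bipartiteness observation that forces contracted graphs to alternate between $G[S]$-components and $G-S$-components, and this is what makes the naive enumeration strategy capture every candidate witness structure with small $\os$ or $\es$.
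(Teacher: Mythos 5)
Your proof is correct and takes essentially the same approach as the paper's own proof: enumerate all subsets $S$ of size at most $\beta n/2$, contract components of $G[S]$ and $G-S$, and match the resulting path against the witness structure in both directions. The only difference is that you make explicit the bipartite-alternation observation (every edge of $G_S$ joins a $\mcal{C}_S$-vertex to a $\overline{\mcal{C}}_S$-vertex), which the paper invokes implicitly when concluding that $\os_{\mcal{W}}=S$ or $\es_{\mcal{W}}=S$.
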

\begin{proof}
Clearly, the algorithm presented for \SOEPC\ runs in time $\calO^{\star}(g(\beta/2)^n)$. Now we prove the correctness of the algorithm. 

In the forward direction, assume that $G$ is contractible to $P_{t_{opt}}$, where $(W_1,W_2,\cdots,$ $W_{t_{opt}})$ is a $P_{t_{opt}}$-witness structure of $G$, such that $|\os_{\mcal{W}}| \leq \beta n/2$ or $|\es_{\mcal{W}}| \leq \beta n/2$, where $\os_{\mcal{W}}=\bigcup_{i\in [\lceil{t_{opt}}\rceil]} W_{2i-1}$ and $\es_{\mcal{W}}=\bigcup_{i\in [\lfloor{t_{opt}}\rfloor]} W_{2i}$. We will show that the algorithm outputs $t_{out} \geq t_{opt}$. We assume that $|\os_{\mcal{W}}| \leq \beta n/2$. (The case when $|\es_{\mcal{W}}| \leq \beta n/2$ can be argued analogously.) Let $S=\os_{\mcal{W}}$. Note that $S\in \mcal{S}$. The set of connected components in $G[S]$ is precisely $\mcal{C}_S = \{G[W_{2i-1}] \mid i \in [\lceil{t_{out}/2}\rceil]\}$. Also, the set of connected components in $G-S$ is precisely $\overline{\mcal{C}}_S = \{G[W_{2i + 1}] \mid i \in [\lfloor{t_{out}/2}\rfloor]\}$. Thus, $G_S$ is isomorphic to $P_{t_{opt}}$. Thus, the output of the algorithm $t_{out} = \max_{S'\in \mcal{S}}{\sf len}_{S'} \geq t_{opt}$, as $S\in \mcal{S}$ and ${\sf len}_{S} = t_{opt}$. 

For the other direction, let $t_{out}$ be the output of the algorithm. Note that $t_{out} \geq 1$, as $\emptyset \in \mcal{S}$ and $G_\emptyset$ is a single vertex (as $G$ is connected). Consider $S\in \mcal{S}$, such that ${\sf len}_S = t_{out}$. Note that a $P_t$-witness set for $G$ is $\mcal{W} = \{V(C) \mid C\in \mcal{C}_S\} \cup \{V(C) \mid C\in \overline{\mcal{C}}_S\}$. Thus one of $\os_{\mcal{W}} =S$ or $\es_{\mcal{W}} =S$ must hold. Moreover, as $S\in \mcal{S}$, we have $|S| \leq \beta n/2$. This concludes the proof. 
\end{proof}

\subsection{Algorithm for \NSOEPC}
\label{sub-sec:dis-conn-3}

We formally define the problem \NSOEPC\ in the following (also see Figure~\ref{fig:intuition-near-odd-even}). 

\defproblemout{\NSOEPC}{A graph $G$ on $n$ vertices and a fraction $0<\epsilon \leq 1$.}{Largest integer $t \geq 3$ for which there is a $P_t$-witness structure $\mcal{W}=(W_1,W_2,\cdots, W_t)$ of $G$, for
 which there is $i \in \{2,3,\cdots, t-1\}$, such that if $i$ is odd, then $|\os_{\mcal{W}} \setminus W_i| \leq \epsilon n$ and otherwise, $|\es_{\mcal{W}}\setminus W_i| \leq \epsilon n$. Here, $\os_{\mcal{W}}=\bigcup_{i\in [\lceil{t/2}\rceil]} W_{2i-1}$ and $\es_{\mcal{W}}=\bigcup_{i\in [\lfloor{t/2}\rfloor]} W_{2i}$. If no such $t\geq 3$ exists, then output $2$. 
}
\vspace{0.2cm}

\begin{figure}[t]
  \centering
  \includegraphics[scale=0.70]{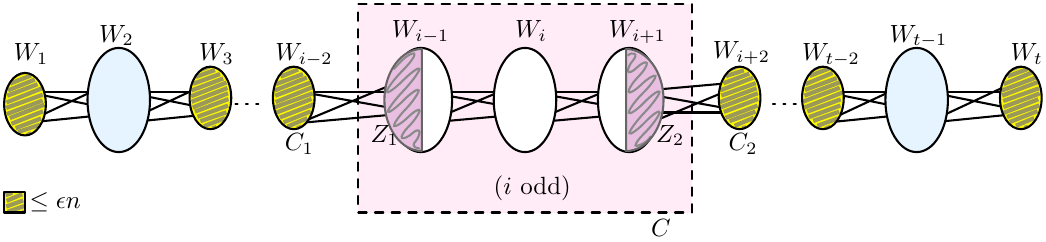}
  \caption{An intuitive illustration of the algorithm for \NSOEPC.}\label{fig:intuition-near-odd-even}
\end{figure}

We design an algorithm for \NSOEPC\ running in time $\mathcal{O}^{\star}(c^{n})$ where $c = \max_{0 \le \delta \le \epsilon} \{1.88^{(1 - \delta)}~\cdot~g(\delta)\}$. The second term in multiplicative factor will be due to enumeration of sets, and the first term will be due to calls made to the algorithm for \textsc{$3$-Disjoint Connected Subgraphs}, from Section~\ref{sec:prob-dis-conn-3}. 

Let $(G,\epsilon)$ be an instance of \NSOEPC. We start by explaining the intuitive idea behind our algorithm (see Figure~\ref{fig:intuition-near-odd-even}). Consider a $P_t$-witness structure $\mcal{W}=(W_1,W_2,$ $\cdots, W_t)$ of $G$, for
 which there is $i \in \{2,3,\cdots, t-2\}$, such that if $i$ is odd, then $|\os_{\mcal{W}} \setminus W_i| \leq \epsilon n$ and otherwise, $|\es_{\mcal{W}}\setminus W_i| \leq \epsilon n$. In the above, $\os_{\mcal{W}}=\bigcup_{i\in [\lceil{t/2}\rceil]} W_{2i-1}$ and $\es_{\mcal{W}}=\bigcup_{i\in [\lfloor{t/2}\rfloor]} W_{2i}$. Let us consider the case when $i$ is odd (the other case is symmetric). Let $S = \os_{\mcal{W}} \setminus W_i$. (The union of vertices from yellow sets in Figure~\ref{fig:intuition-near-odd-even} is the set $S$.) As $|S| \leq \epsilon n$, the algorithm starts by enumerating all ``potential candidates'' for the set $S$. All the components of $G-S$, except for the component $C$, containing $W_i$, must each  be contracted to a single vertex. Similarly, each connected components of $G[S]$ must be contracted to a single vertex. Moreover, the component containing $W_i$ must be ``split'' into three sets. The first and the last sets in the ``split'' must contain the neighbors of $W_{i-2}$ and $W_{i+2}$ in $C$, respectively. To obtain such a ``split'', we use the algorithm for \textsc{$3$-Disjoint Connected Subgraphs} that we designed in Section~\ref{sec:prob-dis-conn-3}.
 
We now formally describe our algorithm. (A pseudo code of our algorithm is presented in Algorithm~\ref{method:disjoint-conn-3}.) The algorithm will output an integer $t$, which is initially set to $2$. Let $\calS =\{S\subset V(G) \mid |S| \leq \epsilon n\}$. For each $S \in \calS$, the algorithm does the following.  Let $\mcal{C}_S$ and $\overline{\mcal{C}}_S$ 
  be the sets of connected components in $G[S]$ and $G-S$, respectively. Let $H_S$ be obtained from $G$ by contracting component in $\mcal{C}_S \cup \overline{\mcal{C}}_S$ to single vertices. That is, $H_S$ has a vertex $v_{C}$ for each $C \in \mcal{C}_S \cup \overline{\mcal{C}}_S$, and two vertices $v_{C}, v_{C'} \in V(H_S)$ are adjacent in $H_S$ if and only if $C$ and $C'$ are adjacent in $G$. If $H_S$ is not a path, then the algorithm moves to the next set in $\calS$. Otherwise, for each $C^* \in \overline{\mcal{C}}_S$ it does the following. Intuitively speaking, $C^*$ is the current guess for the component containing vertices from $W_i$ for the witness structure that we are looking for. Note that $C^*$ can be adjacent to at most two components from ${\mcal{C}}_S$, as $H_S$ is a path. Moreover, $C^*$ must be adjacent to at least one component from ${\mcal{C}}_S$, as $G$ is connected and $S$ is a strict subset of $V(G)$, i.e., $S\neq V(G)$. Let $C_1$ be a component from ${\mcal{C}}_S$ that is adjacent to $C^*$ in $G$, and $Z_1 = N(C_1) \cap V(C^*)$. Let $C_2 \in {\mcal{C}}_S \setminus \{C_1\}$ be a component of $G[S]$ that is adjacent to $C^*$, and $Z_2=N(C_2) \cap V(C^*)$. If such a $C_2$ does not exist, then we set $Z_2=\emptyset$. If $(G[C^*], Z_1, Z_2)$ is a yes-instance of \textsc{$3$-DCS}, then the algorithm updates $t = \max\{t, |V(H_S)| + 2\}$. After finishing the processing for each $S\in \mcal{S}$, the algorithm outputs $t$. This finishes the description of our algorithm. 

\begin{algorithm}[t]
   \caption{Algorithm for \NSOEPC.}
	\label{method:disjoint-conn-3}
	\begin{algorithmic}[1]
	\REQUIRE { A graph $G$ and a fraction $0<\epsilon \leq 1$.}
	\ENSURE { An integer $t$.}
	  \vspace{0.2cm}
	  \STATE{Initialize $t = 2$}
	  \STATE{Let $\calS =\{S\subset V(G) \mid |S| \leq \epsilon n\}$ \label{step:compute-S-2}}
       \FOR{$S \in \calS$ \label{step:for-loop-S}}
			\STATE{Let $\mcal{C}_S$ and $\overline{\mcal{C}}_S$ be the sets of connected components in $G[S]$ and $G-S$, resp.}
		
		\STATE{Let $H_S$ be obtained from $G$ by contracting components in $\mcal{C}_S \cup \overline{\mcal{C}}_S$ to single vertices}
		
		\IF{$H_S$ is a path}
		\FOR{$C^*\in \overline{\mcal{C}}_S$ \label{step:for-loop-c}}
		\STATE{Let $C_1 \in {\mcal{C}}_S$ be a component of $G[S]$ that is adjacent to $C^*$, and $Z_1=N(C_1) \cap V(C^*)$}
		\STATE{Let $C_2 \in {\mcal{C}}_S \setminus \{C_1\}$ be a component of $G[S]$ that is adjacent to $C^*$, and $Z_2=N(C_2) \cap V(C^*)$, if such a $C_2$ does not exist, then $Z_2=\emptyset$}
		\IF{$(G[C^*], Z_1, Z_2)$ is a yes-instance of \textsc{$3$-DCS} \label{step:check-3DCS}}
		\STATE{$t = \max\{t, |V(H_S)| + 2\}$}
		\ENDIF
		\ENDFOR
		\ENDIF
		\ENDFOR
  \RETURN{$t$}
		
	\end{algorithmic}
\end{algorithm}

In the following two lemmas we present the correctness and runtime analysis of the algorithm, respectively. 

\begin{lemma}\label{lemma:correct-NSOEPC}
The algorithm presented for \NSOEPC\ is correct. 
\end{lemma}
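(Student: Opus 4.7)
The plan is to prove $t_{\sf out} = t_{\sf opt}$, where $t_{\sf opt}$ is the value demanded by the specification of \NSOEPC\ and $t_{\sf out}$ is the integer returned by Algorithm~\ref{method:disjoint-conn-3}. I will argue the two inequalities separately.

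For $t_{\sf out} \geq t_{\sf opt}$, we may assume $t_{\sf opt} \geq 3$. Fix a $P_{t_{\sf opt}}$-witness structure $\mcal{W} = (W_1, \dots, W_{t_{\sf opt}})$ and the witnessing index $i \in \{2, \dots, t_{\sf opt}-1\}$. Assume $i$ is odd (the even case is symmetric) and set $S := \os_{\mcal{W}} \setminus W_i$, so $|S| \leq \epsilon n$ and hence $S \in \calS$. Since the only inter-bag edges of $G$ are between consecutive $W_j, W_{j+1}$, a direct check shows that the components of $G[S]$ are exactly $\{G[W_j] : j \text{ odd},\, j \neq i\}$, and the components of $G-S$ are $\{G[W_j] : j \text{ even},\, |j-i| > 1\}$ together with $C^* := G[W_{i-1} \cup W_i \cup W_{i+1}]$. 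Contracting these components collapses three consecutive vertices of the path $P_{t_{\sf opt}}$ into one, yielding $H_S \cong P_{t_{\sf opt}-2}$, which is a path. When Algorithm~\ref{method:disjoint-conn-3} reaches this $S$ and picks this $C^*$ at Step~\ref{step:for-loop-c}, the $\mcal{C}_S$-neighbors of $C^*$ in $G$ are exactly $G[W_{i-2}]$ (existing since $i$ odd with $i \geq 2$ forces $i \geq 3$) and $G[W_{i+2}]$ (existing precisely when $i \leq t_{\sf opt}-2$). Hence $Z_1 \subseteq W_{i-1}$ and $Z_2 \subseteq W_{i+1}$ up to swapping, and the tri-partition $(W_{i-1}, W_i, W_{i+1})$ (or its reversal) solves the \TDCS\ instance $(G[C^*], Z_1, Z_2)$. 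Step~\ref{step:check-3DCS} therefore succeeds and updates $t$ to at least $|V(H_S)| + 2 = t_{\sf opt}$.

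For $t_{\sf out} \leq t_{\sf opt}$, assume $t_{\sf out} \geq 3$, so there exist $S \in \calS$, $C^* \in \overline{\mcal{C}}_S$, and a tri-partition $(V_1, U, V_2)$ solving $(G[C^*], Z_1, Z_2)$ in \TDCS, with $H_S$ a path and $t_{\sf out} = |V(H_S)| + 2$. Because every edge of $G$ either stays inside one component of $G[S]$, inside one component of $G-S$, or crosses between $S$ and $V(G) \setminus S$, the graph $H_S$ is bipartite between $\mcal{C}_S$ and $\overline{\mcal{C}}_S$; as $H_S$ is a path it strictly alternates between these two sides. Let $p$ be the position of $C^*$ along $H_S$, which has $k := |V(H_S)|$ vertices. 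Build an ordered collection $(W_1, \dots, W_{k+2})$ by placing the vertex sets of the first $p-1$ components of $H_S$ in positions $1, \dots, p-1$, then $W_p = V_1$, $W_{p+1} = U$, $W_{p+2} = V_2$, and finally the vertex sets of the remaining $k-p$ components in positions $p+3, \dots, k+2$. Connectedness of each bag is immediate from the \TDCS\ conditions and the definition of components, and the required path adjacencies hold because $V_1 \supseteq Z_1$ is adjacent to $C_1$, $V_2 \supseteq Z_2$ is adjacent to $C_2$ (with the obvious boundary variant when $p = 1$ or $p = k$), and $G-U$ has exactly $G[V_1], G[V_2]$ as its two components inside $C^*$ by the definition of a solution to \TDCS. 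Taking $j^* := p+1 \in \{2, \dots, t_{\sf out}-1\}$, the strict alternation of $H_S$ places all $\mcal{C}_S$-bags of $\mcal{W}$ at positions with parity opposite to that of $p$, and $W_{j^*} = U$ shares that opposite parity; removing $W_{j^*}$ from its side leaves exactly $\bigcup_{C \in \mcal{C}_S} V(C) = S$, of size at most $\epsilon n$. This establishes the required parity property, hence $t_{\sf out} \leq t_{\sf opt}$.

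The main technical subtlety is tracking parities through the contraction-then-splitting operation that transforms $H_S$ into the final witness structure, together with the uniform handling of the boundary cases $p = 1$ or $p = k$ (where one of $Z_1, Z_2$ is empty and the corresponding outer bag $V_1$ or $V_2$ sits at a path endpoint of the new witness structure).
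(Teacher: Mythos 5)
Your proof is correct and follows essentially the same two-step plan as the paper: in the forward direction choose $S = \texttt{OS}_{\mcal W}\setminus W_i$, verify that $C^*$ is the component $G[W_{i-1}\cup W_i\cup W_{i+1}]$ and that $(W_{i-1},W_i,W_{i+1})$ solves the \TDCS\ instance; in the reverse direction stitch a $\TDCS$ solution $(V_1,U,V_2)$ for $G[C^*]$ together with $\mcal C_S$ and $\overline{\mcal C}_S\setminus\{C^*\}$ into a $P_{t_{\sf out}}$-witness structure. The one place you go beyond the paper's exposition is the explicit observation that $H_S$ is bipartite between $\mcal C_S$ and $\overline{\mcal C}_S$ and therefore strictly alternates; the paper's proof asserts, somewhat tersely, that the reassembled witness structure satisfies the parity condition around $U$, whereas you actually track the parities through the position-shift, which is a clean way to make that final step airtight.
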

\begin{proof}
For an instance $(G,\epsilon)$, suppose that $t_{\sf opt}$ is the solution to \NSOEPC\ and $t_{\sf out}$ is the output returned by the algorithm. We will show that $t_{\sf out} = t_{\sf opt}$. 

Firstly, we show that $t_{\sf out} \geq t_{\sf opt}$. As $t_{\sf out} \geq 2$, if $t_{\sf opt} =2$, then the claim trivially holds. Thus, we assume that $t_{\sf opt} \geq 3$. Let $\mcal{W}=(W_1,W_2,$ $\cdots, W_{t_{\sf opt}})$ be a $P_{t_{\sf opt}}$-witness structure of $G$, for which there is $i \in \{2,3,\cdots, {t_{\sf opt}}-1\}$, such that if $i$ is odd, then $|\os_{\mcal{W}} \setminus W_i| \leq \epsilon n$ and otherwise, $|\es_{\mcal{W}}\setminus W_i| \leq \epsilon n$. In the above, $\os_{\mcal{W}}=\bigcup_{i\in [\lceil{t_{\sf opt}/2}\rceil]} W_{2i-1}$ and $\es_{\mcal{W}}=\bigcup_{i\in [\lfloor{t_{\sf opt}/2}\rfloor]} W_{2i}$. We consider the case when $i$ is odd. (The case when $i$ is even can be argued analogously.) Let $S= \os_{\mcal{W}} \setminus W_i$. As $|S| \leq \epsilon n$, we have $S\in \calS$. Note that $H_S$ is a path. Let $C^* \in \overline{\mcal{C}}_S$ be the connected component of $G-S$ containing $W_i$. Let $C_1 \in \mcal{C}_S$ be a connected component of $G[S]$ adjacent to $C^*$, and $Z_1 = N(C_1) \cap V(C^*)$. Consider $C_2\in \mcal{C}_S\setminus \{C_1\}$ that is adjacent to $C^*$, and let $Z_2 = N(C_2) \cap V(C^*)$. If such a $C_2$ does not exist, then set $Z_2=\emptyset$. Note that $(G[C^*], Z_1, Z_2)$ is a yes-instance of \textsc{$3$-DCS}, as $(W_{i-1},W_i,W_{i+1})$ is a solution to it. In the above we rely on the fact that $i\in \{2,3,\cdots, t-1\}$, and thus each of $W_{i-1}$ and $W_{i+1}$ are non-empty. From the above discussions we can conclude that $t_{\sf out} \geq t_{\sf opt} = V(H_S) +2 $ (as $C^*$ is split into three witness sets). 

Next, we show that $t_{\sf out} \leq t_{\sf opt}$. As $t_{\sf opt} \geq 2$, if $t_{\sf out}=2$, the condition $t_{\sf out} \leq t_{\sf opt}$ is trivially satisfied. Now we consider the case when $t_{\sf out}\geq 3$. There is $S\in \calS$ for which $H_S$ is a path and there is $C^* \in \overline{C}_S$, for which the instance $(G[C^*],Z_1,Z_2)$ is a yes-instance of \textsc{$3$-DCS}. Let $(V^*_1,U^*,V^*_2)$ be a solution to \textsc{$3$-DCS} for the instance $(G[C^*],Z_1,Z_2)$. Let $\calW' = \mcal{C}_S \cup (\overline{\mcal{C}}_S \setminus \{C^*\}) \cup \{V^*_1,U^*,V^*_2\}$. Note that $|\calW'| = |V(H_S)| + 2$ and $\calW'$ is a $P_{t_{\sf out}}$-witness structure for $G$. Let $\calW=(W_1,W_2,\cdots, W_{t_{\sf out}})$ be the ordered witness structure corresponding to the $P_{t_{\sf out}}$-witness structure $\calW'$ of $G$. Note that there is $i\in \{2,3,\cdots,t_{\sf out}-1\}$, such that $V(C^*) \subseteq W_{i-1}\cup W_{i} \cup W_{i+1}$. Thus we can conclude that $\calW$ is a $P_{t_{\sf out}}$-witness structure of $G$, for which for which there is $i \in \{2,3,\cdots, {t_{\sf out}}-1\}$, such that if $i$ is odd, then $|\os_{\mcal{W}} \setminus W_i| \leq \epsilon n$ and otherwise, $|\es_{\mcal{W}}\setminus W_i| \leq \epsilon n$. From the above discussions we can conclude that $t_{\sf out} \leq t_{\sf opt}$. 
\end{proof}

\begin{lemma}\label{lem:correct-runtime-NSOEPC}
The algorithm presented for \NSOEPC\ runs in time $\mathcal{O}^{\star}(c^{n})$, where $c = \max_{0 \le \delta \le \epsilon} \{1.88^{(1 - \delta)}~\cdot~g(\delta)\}$.  
\end{lemma}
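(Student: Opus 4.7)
The plan is to bound the running time by partitioning $\calS$ according to the size of the sets and separately accounting for the enumeration cost and the cost of the \textsc{3-DCS} calls. First, I would observe that only two steps contribute exponentially to the running time: the enumeration of $\calS$ at Step~\ref{step:compute-S-2}, and the calls to the \textsc{3-DCS} subroutine at Step~\ref{step:check-3DCS}. All other operations inside the loops (computing the connected components of $G[S]$ and $G-S$, constructing $H_S$, checking whether it is a path, determining $Z_1, Z_2$, and iterating over $\overline{\mcal{C}}_S$) can be done in polynomial time per set, and so they contribute only polynomial overhead that is absorbed into the $\mathcal{O}^\star$ notation.

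Next, I would group the sets in $\calS$ according to their size. For a value $\delta \in [0, \epsilon]$, by Observation~\ref{obs:subset-no} the number of subsets of $V(G)$ of size at most $\delta n$ is bounded by $\mathcal{O}^\star(g(\delta)^n)$, and they can all be enumerated within this time bound. For any such set $S$ of size exactly $\delta n$, each component $C^* \in \overline{\mcal{C}}_S$ satisfies $|V(C^*)| \leq n - |S| \leq (1-\delta)n$, so by Theorem~\ref{thm:exact-3-con} the call to the \textsc{3-DCS} algorithm on the instance $(G[C^*], Z_1, Z_2)$ takes time $\mathcal{O}^\star(1.88^{(1-\delta)n})$. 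Since there are at most $n$ choices of $C^*$ per $S$, this yields only a polynomial overhead in the inner loop.

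Finally, I would combine these two bounds. The total work associated to sets of size exactly $\delta n$ is bounded by
$$\mathcal{O}^\star\bigl(g(\delta)^n \cdot 1.88^{(1-\delta)n}\bigr) = \mathcal{O}^\star\bigl((g(\delta) \cdot 1.88^{(1-\delta)})^n\bigr).$$
Since there are only $O(n)$ distinct values of $|S|$ in the range $\{0, 1, \dots, \lfloor \epsilon n \rfloor\}$, summing over these values gives an overall bound of
$$\mathcal{O}^\star\Bigl(\max_{0 \leq \delta \leq \epsilon}\bigl\{g(\delta) \cdot 1.88^{(1-\delta)}\bigr\}^n\Bigr) = \mathcal{O}^\star(c^n),$$
as desired. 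Conceptually, the two exponential costs trade off through the size parameter $\delta$: a large $\delta$ makes the enumeration expensive but shrinks the \textsc{3-DCS} instance, while a small $\delta$ has the opposite effect. I do not anticipate a serious obstacle; the argument is essentially a careful bookkeeping of the exponential contributions of the two subroutines, combined with the trade-off maximization built into the definition of $c$.
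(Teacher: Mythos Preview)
Your proposal is correct and follows essentially the same approach as the paper: bound the enumeration of $\calS$ via Observation~\ref{obs:subset-no}, bound each \textsc{3-DCS} call on a set $S$ by $\calO^{\star}(1.88^{n-|S|})$ via Theorem~\ref{thm:exact-3-con}, and combine by taking the maximum of $g(\delta)\cdot 1.88^{1-\delta}$ over $\delta\in[0,\epsilon]$. Your write-up is in fact somewhat more detailed than the paper's own proof, which sketches the same trade-off in a couple of sentences.
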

\begin{proof}
From Observation~\ref{obs:subset-no}, Step~\ref{step:compute-S-2} of the algorithm can be executed in time $\calO^{\star}([g(\epsilon)]^n)$. Also, $|\calS|$ is bounded by $\calO^{\star}([g(\epsilon)]^n)$. For a set $S\in \calS$, Step~\ref{step:check-3DCS} can be executed in time $\calO^{\star}(1.88^{n - |S|})$, from Theorem~\ref{thm:exact-3-con} and other steps can be executed in polynomial time. Hence, the running time algorithm can be bounded by $\calO^{\star}(c^{n})$, where $c = \max_{0 \le \delta \le \epsilon} \big\{1.88^{(1 - \delta)}~\cdot~g(\delta)\big\}$.
\end{proof}

\subsection{Algorithm for {\sc Path Contraction}}
\label{sub-sec:algo}
We are now ready to present our algorithm for \textsc{Path Contraction}. The algorithm calls four of the subroutines \SOEPC, \BPC, \TDCPC, and \ThDCPC\ for appropriate instances, and returns the maximum of their outputs. In the following theorem, we present the algorithm, which is the main result of this paper. 

 
\begin{theorem} 
\sloppy	\textsc{Path Contraction} admits an algorithm running in time $\mathcal{O}^{\star}(1.99987^n)$, where $n$ is the number of vertices in the input graph.  
\end{theorem}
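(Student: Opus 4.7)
The plan is to fix four constants $\alpha,\beta,\gamma,\epsilon\in(0,1)$ (to be tuned jointly at the end) and, on input $G$, run each of the four subroutines \SOEPC$(G,\beta)$, \BPC$(G,\alpha)$, \TDCPC$(G,\gamma)$, and \NSOEPC$(G,\epsilon)$, returning the maximum of their outputs. Each subroutine only accepts witness structures satisfying its restrictive condition, so its output is always at most the true optimum, and the returned maximum is therefore automatically a valid lower bound. Correctness reduces to proving the following \emph{coverage} claim: for every optimal $P_t$-witness structure $\mcal{W}=(W_1,\ldots,W_t)$ of $G$, at least one of the four acceptance conditions is satisfied.

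I would prove coverage by a nested case analysis on $\mcal{W}$. If $\min(|\os_{\mcal{W}}|,|\es_{\mcal{W}}|)\le\beta n/2$, then \SOEPC$(G,\beta)$ already returns $\ge t$. Otherwise both parity classes lie in $(\beta n/2,(1-\beta/2)n)$. If some middle index $i\in\{2,\ldots,t-1\}$ satisfies $|\os_{\mcal{W}}\setminus W_i|\le\epsilon n$ (for odd $i$) or $|\es_{\mcal{W}}\setminus W_i|\le\epsilon n$ (for even $i$), then \NSOEPC$(G,\epsilon)$ accepts. Otherwise every middle $W_i$ satisfies $|W_i|<(1-\beta/2-\epsilon)n$, which I denote $Mn$. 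Consider the prefix sums $\Sigma_i=\sum_{j\le i}|W_j|$; these rise from $0$ to $n$ with internal jumps of size at most $Mn$ (using Observation~\ref{obs:singleton-end-bags} to assume $|W_1|=|W_t|=1$). If some index $i$ satisfies $\Sigma_{i-1}\ge(1-\alpha)n$ and $\Sigma_{i+1}\le\alpha n$, then since $|N[\bigcup_{j\le i}W_j]|\le\Sigma_{i+1}$ and $|N[\bigcup_{j>i}W_j]|\le n-\Sigma_{i-1}$, the subroutine \BPC$(G,\alpha)$ accepts. Otherwise, as $i$ scans, $\Sigma_i$ must ``leap'' across the interval $[(1-\alpha)n,\alpha n]$ in a single step, yielding a consecutive pair $(W_i,W_{i+1})$ with $|W_i|+|W_{i+1}|>(2\alpha-1)n-Mn$. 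Tuning $\gamma$ so that this lower bound is at least $\gamma n$ gives the pair-size condition of \TDCPC, and an intermediate-value observation on $\Sigma_i$ (together with the middle-set bound $Mn$) forces $\Sigma_i\in[\gamma n/2,(1-\gamma/2)n]$, meeting the two neighborhood bounds of \TDCPC$(G,\gamma)$.

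For the running time, the four exponential bases that must all be bounded by $\runtime$ are $g(\beta/2)$ from \SOEPC, $2^{\alpha}$ from \BPC, $\max\{2^{1-\gamma/2},\,c_1\}$ from \TDCPC with $c_1=\max_{\gamma\le\delta\le 1}\{1.7804^{\delta}\cdot g(1-\delta)\}$, and $c_2=\max_{0\le\delta\le\epsilon}\{1.88^{1-\delta}\cdot g(\delta)\}$ from \NSOEPC. These, together with the structural inequalities imposed by the coverage argument (roughly, $2\alpha-1\ge\gamma+M$ and $\alpha\gtrsim 1-\beta/4-\epsilon/2$), carve out a joint feasibility region for $(\alpha,\beta,\gamma,\epsilon)$, and a numerical search (certified by a short computer program) locates an interior point at which all four bases fall below $\runtime$. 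The principal obstacle is precisely this coupling: relaxing any single parameter to speed up its own subroutine strictly enlarges the ``leftover'' regime that the other three must still cover, so the four constants must be optimized simultaneously, and the coverage inequalities must be verified to remain jointly satisfiable inside the target running-time region.
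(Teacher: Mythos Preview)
Your overall architecture is right—run the four subroutines and show that every optimal witness structure is accepted by at least one—but the specific case analysis you sketch has a real gap, and the paper's decomposition is genuinely different.

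The problem is your last branch. You assume \SOEPC\ and \NSOEPC\ both fail, so every middle bag has $|W_i|<Mn$ with $M=1-\beta/2-\epsilon$, and then you claim that either \BPC\ accepts or \TDCPC\ does. But your ``intermediate-value observation'' for the \TDCPC\ neighbourhood bound cannot work. If the big pair $(W_i,W_{i+1})$ consists of two bags each of size $<Mn$, then $|W_i|+|W_{i+1}|<2Mn$; so $|W_i|+|W_{i+1}|\ge\gamma n$ forces $M>\gamma/2$. On the other hand, the only generic way to place $\Sigma_i$ in $[\gamma n/2,(1-\gamma/2)n]$ from the pair-size bound alone is via $\Sigma_i\ge|W_i|\ge\gamma n-|W_{i+1}|>(\gamma-M)n$ and symmetrically for the upper bound, and this needs $M\le\gamma/2$. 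These two constraints are contradictory, so \TDCPC\ does not cover this branch as described. The alternative route—using the actual leap location to bound $\Sigma_i$—forces $\gamma\le 2(1-\alpha)$, which makes the \TDCPC\ exponent $1-\gamma/2\ge\alpha$ and so never beats \BPC, defeating the purpose.

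The paper avoids this by casing not on whether \NSOEPC\ accepts, but on the number of bags of size $>\gamma n/2$. When all bags are small it shows \BPC\ \emph{always} accepts, using the parity constraint you never invoke: $W_{j-1}$ and $W_{j+1}$ lie in the same parity class, hence $|W_{j-1}|+|W_{j+1}|<(1-\beta/2)n$, and combined with $|W_j|\le\gamma n/2$ this gives the $\alpha$-balanced split via the inequality $2-\alpha-\beta/2+\gamma/2\le\alpha$. \TDCPC\ is used only when there are two large adjacent bags, where each being $\ge\gamma n/2$ \emph{automatically} yields the two neighbourhood bounds. \NSOEPC\ handles the remaining single-large-bag and non-adjacent-large-bag cases, with $\epsilon$ set to $1-\beta/2-\gamma/2$ precisely so that one bag of size $>\gamma n/2$ inside a parity class of size $<(1-\beta/2)n$ leaves residue $\le\epsilon n$. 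Your plan is close, but you need this parity-sum trick and the ``count the large bags'' case split to make the coverage argument close.
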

\begin{proof}
We fix $\alpha, \beta, \gamma$ such that they satisfy following inequalities: $(1)\ 2 -\alpha - \beta/2 + \gamma/2  \le \alpha$; $(2)\ 1 - \gamma/2 \le \alpha$. These inequalities will be used in the later parts of the proof. We set $\alpha = 0.9996$, $\beta = 0.9885$, $\gamma = 0.9864$, and $\epsilon = 1 - \beta/2 - \gamma/2$. 
 
The algorithm for {\sc Path Contraction} is as follows. Let $G$ be the input graph. Let $t_1=$ \SOEPC$(G,\beta)$, $t_2=$\BPC$(G,\alpha)$, $t_3=$\TDCPC$(G,\gamma)$, and $t_4=$ \NSOEPC$(G,\epsilon)$. Furthermore, let $t^*=\max\{2,t_1,t_2,t_3,t_4\}$. The algorithm returns $t^*$. This finishes the description of the algorithm. 


By Lemma~\ref{lemma:alg-SOEPC}, $t_1$ can be computed in time $\calO^{\star}(1.99987^n)$. By Lemma~\ref{lem:correct-bpc-run}, $t_2$ can be computed in time $\calO^{\star}(1.9995^n)$. From Lemma~\ref{lem:correct-runtime-TDCPC}, $t_3$ can be computed in time $\calO^{\star}(1.9133^n)$. From Lemma~\ref{lem:correct-runtime-NSOEPC}, $t_4$ can be computed in time $\calO^{\star}(1.9953^n)$. Thus, the running time of the algorithm is bounded by $\calO^{\star}(1.99987^n)$. 

We now prove the correctness of the algorithm. If the algorithm returns an integer $t$, then from Lemmas~\ref{lemma:alg-SOEPC},~\ref{lem:correct-bpc},~\ref{lemma:correct-TDCPC}, and~\ref{lemma:correct-NSOEPC}, it follows that $G$ is contractible to $P_t$. Now we prove the other direction. Suppose $G$ is contractible to $P_t$. We will show that $t^* \geq t$. 


Let $\calW=(W_1, W_2, \dots, W_{t})$ be a $P_{t}$-witness structure of $G$. We assume that $t \geq 3$, as otherwise, trivially, $t^*\geq t$ is satisfied. We also assume that $|W_1| = |W_t| = 1$ (see Observation~\ref{obs:singleton-end-bags}). Let $ \os = \bigcup_{i \in [\lceil t/2 \rceil]} W_{2i - 1}$ and $ \es = \bigcup_{i \in [\lfloor t/2 \rfloor]} W_{2i}$. For $i\in [t]$, we let $Q_i=\bigcup_{j\in [i]}W_j$ and $R_i=\bigcup_{j\in [t]\setminus [i-1]}W_j$. Note that $N(Q_i)$ (resp. $N(R_i)$) is contained in $W_{i+1}\subseteq Q_{i + 1}$ (resp. $W_{i-1} \subseteq N(R_{i - 1})$). We use the above observation frequently in the remainder of the proof. 

\sloppy We say that $\calW$ is admits an $\alpha$-bi-partition, if there is $j\in [t]$, such that $|\bigcup_{i \in [j]} W_i| \leq \alpha n$ and $|\bigcup_{i \in [t] \setminus [j-2]}W_i]| \leq \alpha n$. Note that if $\calW$ is an $\alpha$-\emph{bi-partition}, then $|N[\bigcup_{i \in [j-1]}W_i]| \leq \alpha n$ and $|N[\bigcup_{i \in [t] \setminus [j-1]}W_i]| \leq \alpha n$. If we show that $\calW$ is an $\alpha$-bi-partition, then using Lemma~\ref{lem:correct-bpc} we can conclude that $t^*\geq t$. We will use the above in later parts of our proof. 

If $|\os| \leq \beta n/2$ or $|\es| \leq \beta n/2$, then 
\SOEPC\ is better than the other. Hence, 
Lemma~\ref{lemma:alg-SOEPC} implies that $t^* \geq t$. Hereafter we assume that $|\os| > \beta n/2$ and $|\es| > \beta n/2$. The above implies that $\beta n/2 < |\os|, |\es| < (1-\beta/2)n$.  Note that there can be at most two witness sets in $\calW$ which are of size more than $\gamma n/2$, as $\gamma=0.9864$. Next, we consider cases based on the number of witness sets of size more than $\gamma n/2$ in $\cal W$.

\vspace{0.2cm}
\noindent \textbf{Case 1:} 
All witness sets in $\calW$ are of size at most $\gamma n/2$. In this case, we will show that $\calW$ admits an $\alpha$-bi-parition.
In this case \BPC\ subroutine is better than the other.
Hence, using Lemma~\ref{lem:correct-bpc}, we can conclude that $t^*\geq t$. Let $j$ be the largest integer such that $|Q_j| \leq \alpha n$. Note that $j\geq 2$, as $|W_1|+|W_2| \leq \gamma n \leq \alpha n$. The above also implies that $j < t$, as $\alpha =0.9996$. As $|Q_{j+1}| > \alpha n$, we have $|Q_j| + |W_{j+1}| > \alpha n$,  which can be rewritten as $|Q_j| > \alpha n  - |W_{j+1}|$. Note that $(Q_j, R_{j + 1})$ is a partition of $V(G)$, and thus, $|Q_j| + |R_{j+1}| = n$. Hence,
\begin{equation*}
|R_{j+1}| = n- |Q_j| < n- \alpha n  + |W_{j+1}|
\end{equation*}
We use this to obtain an upper bound on $|R_{j-1}|$. By definition, 
\begin{equation*}
|R_{j-1}| = |W_{j-1}| + |W_j| + |R_{j+1}|, \text{ and hence }\\
\end{equation*}
\begin{equation*}
|R_{j-1}|< |W_{j-1}| + |W_j| + n - \alpha n + |W_{j+1}| = n - \alpha n + |W_{j-1}| + |W_j| + |W_{j+1}|.
\end{equation*}
Since $j-1, j+1$ have the same parity (both are odd or both are even), $|W_{j-1}| + |W_{j+1}| < (1 - \beta/2)n$. From the premise of the case we have $|W_j| \leq \gamma n/2$. From the above discussions and using Ineqality~$(1)$, we get 
\begin{equation*}
|R_{j-1}|< n - \alpha n + (1 - \beta/2) n + \gamma/2 n = (2 - \alpha - \beta/2 + \gamma/2)n \le \alpha n.
\end{equation*}
 Thus, $|\bigcup_{i \in [j]}W_i| \leq \alpha n$ and $|\bigcup_{i \in [t] \setminus [j-2]}W_i]| \leq \alpha n$. Thus, $\calW$ admits an $\alpha$-bi-parition.



\vspace{0.2cm}  
\noindent \textbf{Case 2:} There is exactly one witness set $W_k$ in $\calW$, such that $|W_k| \geq  \gamma n/2$. If $\calW$ admits an $\alpha$-bi-partition, then we can conclude that $t^*\geq t$, using Lemma~\ref{lem:correct-bpc}. Thus, we assume that $\calW$ does not admit an $\alpha$-bi-partition. Let $j$ be the largest integer, such that $|Q_j| \leq \alpha n$. (Recall that all graphs under consideration have at least $2$ vertices, $|W_1|=1$, and hence $j$ exists.) As argued previously, we have $|W_{j-1}| + |W_{j+1}| < (1 - \beta/2)n$. If $j \neq k$ then $|W_j| \leq \gamma n/2$, and arguments are similar to that of previous case. We now consider a case when $j = k$. Without loss of generality, assume that $k$ is odd. Since $|\texttt{OS}|<(1 - \beta/2) n$ and $\gamma n/2 \leq |W_j|$, we have $|\texttt{OS} \setminus W_j| \le (1 - \beta/2 - \gamma/2)n =\epsilon n$. 
In this case \SOEPC\ subroutine is better than the other. 
Thus, from Lemma~\ref{lemma:correct-NSOEPC} we can conclude that $t^*\geq t$. 

%
%

\vspace{0.2cm}
\noindent \textbf{Case 3:} There are exactly two witness sets $W_j,W_k$ in $\calW$, such that $|W_j|,|W_k| \geq  \gamma n/2$ and $j<k$. Consider the case when $k=j+1$. Note that in the above case, we have $|N[\bigcup_{i\in [j-1]}W_i]|, |N[\bigcup_{i\in[t]\setminus [j+1]}W_i]| \leq (1-\gamma/2) n$. Thus, from Lemma~\ref{lemma:correct-TDCPC} we can conclude that $t^*\geq t$. Now we consider the case when $j<k$ and $k\neq j+1$. We now consider the case when $j$ is odd and $k$ is even. (The case when $j$ is even and $k$ is odd can be argued analogously.) Note that $|\es\setminus W_j| \leq (1-\beta/2)n - \gamma n/2 = \epsilon n$.
In this case, \TDCPC\ subsubroutine is better than the other.
 Thus, from Lemma~\ref{lemma:correct-NSOEPC} we can conclude that $t^*\geq t$. 

Now we consider the case when $j,k$ are both even or both odd and $k\neq j+1$. Note that in the above case $k \geq j+3$. We will conclude that $t^*\geq t$, by showing that $\calW$ admits an $\alpha$-bi-partition (and Lemma~\ref{lem:correct-bpc}). To this end, we start by arguing that $|Q_{j+2}|,|R_{j+1}| \leq \alpha n$. As $k \ge j + 3$, set $W_{k}\cap Q_{j+2}=\emptyset$. Thus, $|Q_{j+2}|\leq n - \gamma n/2 \le \alpha n$. By similar arguments we can obtain that $|R_{j+1}|\le \alpha n$. Note that the above implies that $\calW$ admits an $\alpha$-bi-partitioned witness structure. This concludes the proof. 
\end{proof}

\section{Conclusion}

We generalized the \textsc{$2$-Disjoint Connected Subgraphs} problem, to a problem called \textsc{$3$-Disjoint Connected Subgraphs}, where instead of partitioning the vertex set into two connected sets, we are required to partition it into three connected sets. We gave an algorithm for \textsc{$3$-Disjoint Connected Subgraphs} running in time $\mcal{O}^\star(1.88^n)$. We believe that this  algorithm can be of independent interest and may find other algorithmic applications. We designed an algorithm for \textsc{Path Contraction} which breaks the $\calO^{\star}(2^n)$ barrier. It was surprising that even for a simple problem like \textsc{Path Contraction}, there was no known algorithm that solves it faster than $\calO^{\star}(2^n)$. Our algorithm for \textsc{Path Contraction} relied the fact that the number of $(Q,a,b)$-connected sets can be bounded by $\mcal{O}^\star(2^{a+b-|Q|})$. This gives us savings in the number of states that we consider in our dynamic programming routine (for enumerating partial solutions). We designed four different algorithms for \textsc{Path Contraction} and used them for appropriate instances, to obtain the main algorithm for \textsc{Path Contraction}. 

It is interesting to identify other graph contraction problems for which we can improve upon brute force algorithms. The simple algorithm described in Section~\ref{sec:intro} can be used to solve \textsc{Tree Contraction} problem. We believe there is an algorithm that breaks $\calO^{\star}(2^n)$ for \textsc{Tree Contraction}. On the other hand, we conjecture that a brute force algorithm, running in time $\calO^{\star}(n^n)$, is optimal for \textsc{Clique Contraction} under ETH. 



\bibliography{references}

\end{document}